\newtheorem*{rep@theorem}{\rep@title}
\newcommand{\newreptheorem}[2]{%
\newenvironment{rep#1}[1]{%
 \def\rep@title{#2 \ref*{##1}, repeated}%
 \begin{rep@theorem}}%
 {\end{rep@theorem}}}
\newtheorem{lemma}{Lemma}
\newtheorem{definition}{Definition}
\newtheorem{proposition}{Proposition}
\def\norm#1{ {|\hspace{-.015in}|#1|\hspace{-.015in}|} }
\def\norm#1{ {\left \Vert #1 \right \Vert} }
\def\smallnorm#1{ {\Vert  #1 \Vert } }
\def\abs#1{ {|#1|}}
\def\bigabs#1{ {\left|#1 \right|}}
\newcommand{\expect}[1]{\langle #1 \rangle}
\newcommand{\vecket}[1]{|{#1} \rrangle}
\newcommand{\vecbra}[1]{\llangle {#1}|}
\newcommand\numeq[1]%
\newcommand\numleq[1]%
\newcommand\numgeq[1]%
\begin{document}
\title{Noise robustness of problem-to-simulator mappings for quantum many-body physics}
\author{Rahul Trivedi}
\email{rahul.trivedi@mpq.mpg.de}
\author{J. Ignacio Cirac}
\address{Max Planck Institute of Quantum Optics, Hans Kopfermann Str. 1, Garching, Germany - 85748.}
\address{Munich Center for Quantum Science and Technology, Schellingstr. 4, M\"unchen, Germany - 80799.}
\begin{abstract}
    Simulating quantum dynamics on digital or analog quantum simulators often requires ``problem-to-simulator" mappings such as trotterization, floquet-magnus expansion or perturbative expansions. When the simulator is noiseless, it is well understood that these problem-to-simulator mappings can be made as accurate as desired at the expense of simulator run-time. However, precisely because the simulator has to be run for a longer time to increase its accuracy, it is expected that noise in the quantum simulator catastrophically effects the simulator output. We show that, contrary to this expectation, these mappings remain stable to noise when considering the task of simulating dynamics of local observables in quantum lattice models. Specifically, we prove that in all of these mappings, local observables can be determined to a system-size independent, precision that scales sublinearly with the noise-rate in the simulator. Our results provide theoretical evidence that quantum simulators can be used for solving problems in many-body physics without or with modest error correction.
\end{abstract}
\maketitle
\raggedbottom

Quantum simulators and quantum computers can potentially solve problems arising in both high and low-energy physics which are considered challenging on classical computers \cite{daley2022practical, bauer2023quantum, cao2019quantum, altman2021quantum}. Recent technological progress has resulted in quantum devices with $\sim 100$ qubits with error rates reduced to less than $1\%$ in several platforms \cite{arute2019quantum, wienand2024emergence, ebadi2021quantum, bourgund2025formation, bluvstein2022quantum}. Furthermore, the first few demonstrations of quantum error correction with more than 50 physical qubits have been achieved \cite{bluvstein2024logical, google2023suppressing, bourgund2025formation}. Nevertheless, achieving fault-tolerance at a scale large enough to solve ``classically intractable" problems still remains a technological challenge and, in the near term, available quantum devices will still be operated with only modest error correction \cite{beverland2022assessing}.

In the worst case, even a small but constant rate of errors can accumulate and spoil the simulator output. Nevertheless, growing theoretical evidence shows that problems in many-body physics might avoid this worst-case scenario \cite{granet2025dilution, cubitt2015stability, trivedi2024quantum, schiffer2024quantum, kashyap2025accuracy, bachmann2012automorphic, hastings2005quasiadiabatic, heyl2019quantum}. In particular, it has been shown that quantum simulation of physically relevant observables will have an error $f(\gamma)$ that depend only on the noise rate $\gamma$ and not on the system size $N$. Furthermore, previous work has also provided evidence how reduction in the noise-rate $\gamma$, either by hardware improvements or via error correction, can make it exponentially harder for classical algorithms to obtain the same precision $f(\gamma)$ achievable by the noisy simulator thus hinting towards a ``quantum advantage" with respect to the hardware-limited precision with noisy quantum simulators \cite{trivedi2024quantum, kashyap2025accuracy}.
\begin{figure}[b]
    \centering
    \includegraphics[width=1.0\linewidth]{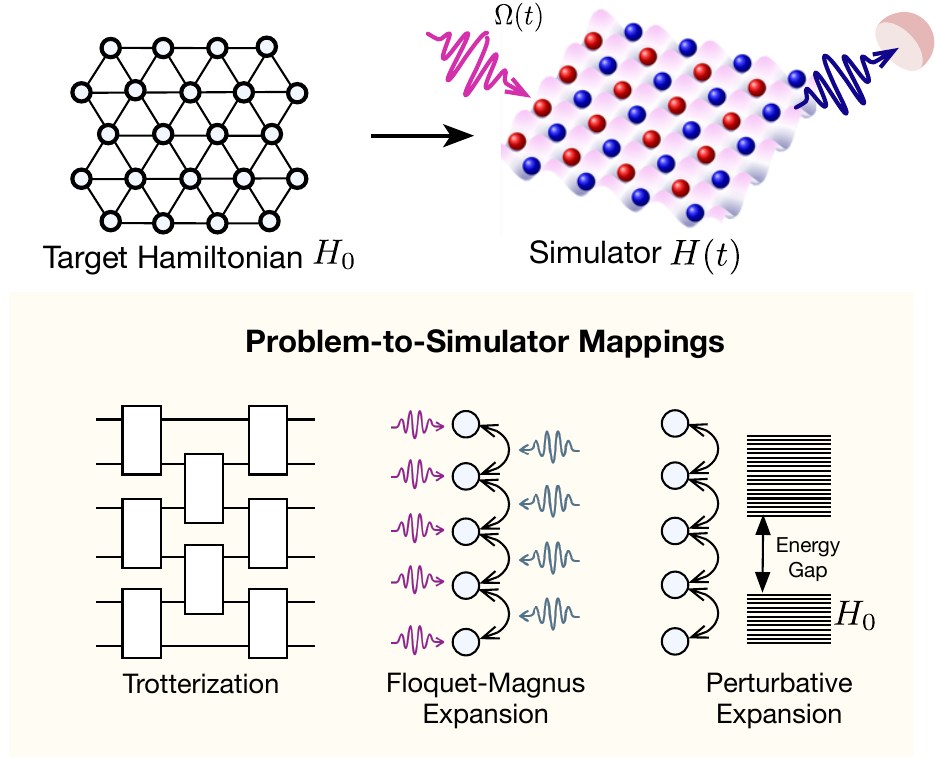}
    \caption{A target many-body Hamiltonian $H_0$ is not often natively available on the quantum simulator. To approximately map this Hamiltonian onto the simulator, several problem-to-simulator mappings can be employed, with the most common strategies being Trotterization, Floquet-Magnus expansions and perturbative expansions.}
    \label{fig:placeholder}
\end{figure}

However, the available theoretical results still make the assumption that the quantum simulator can natively implement the Hamiltonian of the target many-body problem. This is typically not the case in most available quantum simulators, which only implement a restricted families of Hamiltonians \cite{ebadi2021quantum, arute2019quantum}. On the other hand, models of interest in low and high-energy physics are typically more complicated---for instance, several problems in condensed matter physics often have Hamiltonians complex interactions which are not directly available on analog simulators \cite{vanderstraeten2018quasiparticles, binder1980phase, rademaker2020slow, panchenko2014parisi, sachdev1993gapless}. Similarly, many-body problems in High energy physics, such as simulating lattice gauge theories, often have 4-local Hamiltonians which are again not easily available in quantum simulators \cite{kogut1975hamiltonian, bodwin1987lattice, kogut1979introduction, kogut1983lattice, chandrasekharan1997quantum, zohar2015formulation}. To simulate these problems, we require the use of a ``problem-to-simulator" mapping such as Trotterization \cite{childs2021theory, gong2024complexity, suzuki1993general, mizuta2025trotterization, csahinouglu2021hamiltonian, burak2021hamiltonian, tran2020destructive, su2021nearly, tong2021provably, childs2019nearly, lloyd1996universal, hatano2005finding}, Floquet-Magnus \cite{abanin2017effective, abanin2017rigorous, hung2016quantum, mori2016rigorous, zohar2017digital} or perturbative expansions \cite{abanin2017rigorous,zohar2015formulation, zohar2013quantum, bravyi2008quantum}. In the absence of any errors, it is well understood that these mappings can be tuned to be as accurate as desired. However, increasing the accuracy of these mappings requires a corresponding overhead in the simulation time---for example, Trotterization error can be reduced by making the trotter time-step smaller consequently increasing the total circuit depth. Due to this overhead, noise in the simulator would have a longer time to accumulate and corrupt the output of the simulator. It could thus be possible that, even when the original many-body problem is robust to perturbation, the problem-to-simulator mapping could result in a catastrophic propagation of errors.

In this paper, we show that contrary to this expectation, quantum simulation of local observables in dynamics of geometrically local lattice models remains robust to errors. If the target Hamiltonian was natively available on the simulator, then in the presence of noise rate $\gamma$, it follows straightforwardly from the Lieb-Robinson bounds that local observables would incur a simulation error $O(\gamma \tau^{d + 1})$, where $d$ is the dimensionality of the lattice and $\tau$ is the target evolution time. We establish that when using the problem-to-simulator mappings such as Trotterization, Floquet-Magnus expansion and perturbative expansions, local observables instead incur an error $O(\gamma^\alpha \tau^{d + 1})$ where the exponent $\alpha \in (0,  1]$ depends on the specifics of the problem-to-simulator mapping (e.g.~order of the Trotter-formula or Floquet Magnus expansion). Importantly, this error remains system-size independent despite the simulation-time overhead associated with the problem-to-simulator mapping. We remark that the results that we show hold for \emph{any} local observable and in particular for observables that are expressible as a sum of local observables.

\emph{Setup}. We focus on the problem of simulating a local observable $O$ in a target $N-$qubit many-body geometrically-local Hamiltonian $H_0$ on a $d-$dimensional lattice after evolution for time $\tau$ i.e., to use the simulator to compute $\mathcal{O}_\text{target} =\text{Tr}[ O\exp(-iH_0 \tau) \rho(0)\exp(iH_0 \tau)]$ where $\rho(0)$ is the initial state of the qubits. To simulate the dynamics of $H_0$, the quantum simulator, in the absence of noise, implements a possibly time-dependent geometrically-local simulator Hamiltonian $H(t)$ which is evolved for a time $t_\text{sim}$. Unless the Hamiltonian $H_0$ is natively available on the simulator, typically $t_\text{sim} \gg \tau$. We will model the noise in the quantum simulator via the master equation
\[
\dot{\rho}(t) = \mathcal{L}_\gamma (t) \rho(t) \text{ with }\mathcal{L}_\gamma(t) = -i[H(t), \cdot] + \gamma \sum_\alpha \mathcal{N}_\alpha,
\]
where $\mathcal{N}_\alpha$ are geometrically local Lindbladians with $\smallnorm{\mathcal{N}_\alpha}_\diamond \leq 1$ and $\gamma$ can be interpreted as the noise rate. {On the simulator, we will measure the expected value of the observable $O$ to obtain $\mathcal{O}_\text{sim} = \text{Tr}[O\mathcal{E}_\gamma(t_\text{sim})(\rho(0))]$ where $\mathcal{E}_\gamma(t) = \mathcal{T}\exp(\int_0^t \mathcal{L}_\gamma(s) ds)$. In the following results, we will characterize the error  $\Delta \mathcal{O} = \abs{\mathcal{O}_\text{target}-\mathcal{O}_\text{sim}}$ ---in particular, we will show that
\begin{align}\label{eq:target_scaling}
    \Delta \mathcal{O}\leq O(\gamma^\alpha \tau^{d + 1}),
\end{align}
independent of the system size and for $\alpha \in (0, 1]$.
}

\emph{Results}. We first consider simulating a target geometrically local Hamiltonian $H_0$ with $p^\text{th}$ order Trotter formula i.e., the time-dependent simulator Hamiltonian $H(t)$ is chosen to correspond to the trotter circuit implementing $\exp(-iH_0\tau)$. Recall that given a geometrically local Hamiltonian $H_0$ in $d$ dimension, we can rewrite it as
\[
H_0 = \sum_{j = 1}^K H_j,
\]
where $H_1, H_2 \dots H_K$ are themselves geometrically local commuting Hamiltonians and $K$ is independent of the system size $N$. For instance, for a 1D nearest neighbour Hamiltonian $H = \sum_{x = 1}^{N - 1} h_{x, x + 1}$ where $h_{x, x+1}$ acts only on sites $x$ and $x + 1$, we can use $H_1 = \sum_{x} h_{2x - 1, 2x}$ and $H_2 = \sum_{x}h_{2x, 2x + 1}$. A similar decomposition can be made in 2 and higher dimensions. A $p^\text{th}$ order trotter formula with $s_P$ stages for $\exp(-i\varepsilon H_0)$ is then specified by a set of real numbers $\alpha_{i, j}$ for $i\in\{1, 2 \dots s_P\}$ and $j \in \{1, 2\dots K\}$ \cite{childs2019nearly}:
\[
S_p(\varepsilon) = \prod_{i = 1}^{s_P} \exp(-i \alpha_{i, 1} \varepsilon H_1)  \dots  \exp(-i \alpha_{i,K}\varepsilon H_K),
\]
where the real numbers $\alpha_{i, j}$ are chosen such that
\[
\exp(-iH_0 \varepsilon) - S_p(\varepsilon) = O(\varepsilon^{p+ 1}).
\]
To simulate the dynamics upto time $\tau$ using the $p^\text{th}$ order Trotter formula, we simply divide up the total evolution time into $T$ time-steps each of time-width $\varepsilon = \tau /T$ and implement the unitary $U = S_p(\varepsilon)^T$. This unitary can also be considered to be a time-dependent Hamiltonian $H(t)$ simulating the original time-independent lattice model --- assuming that the unitary is implemented using a fixed universal gate-set, the simulation time of the simulator then scales as $t_\text{sim} = \Theta (T) = \Theta(\tau/\varepsilon)$. 
\begin{figure*}
    \centering
    \includegraphics[width=1.0\linewidth]{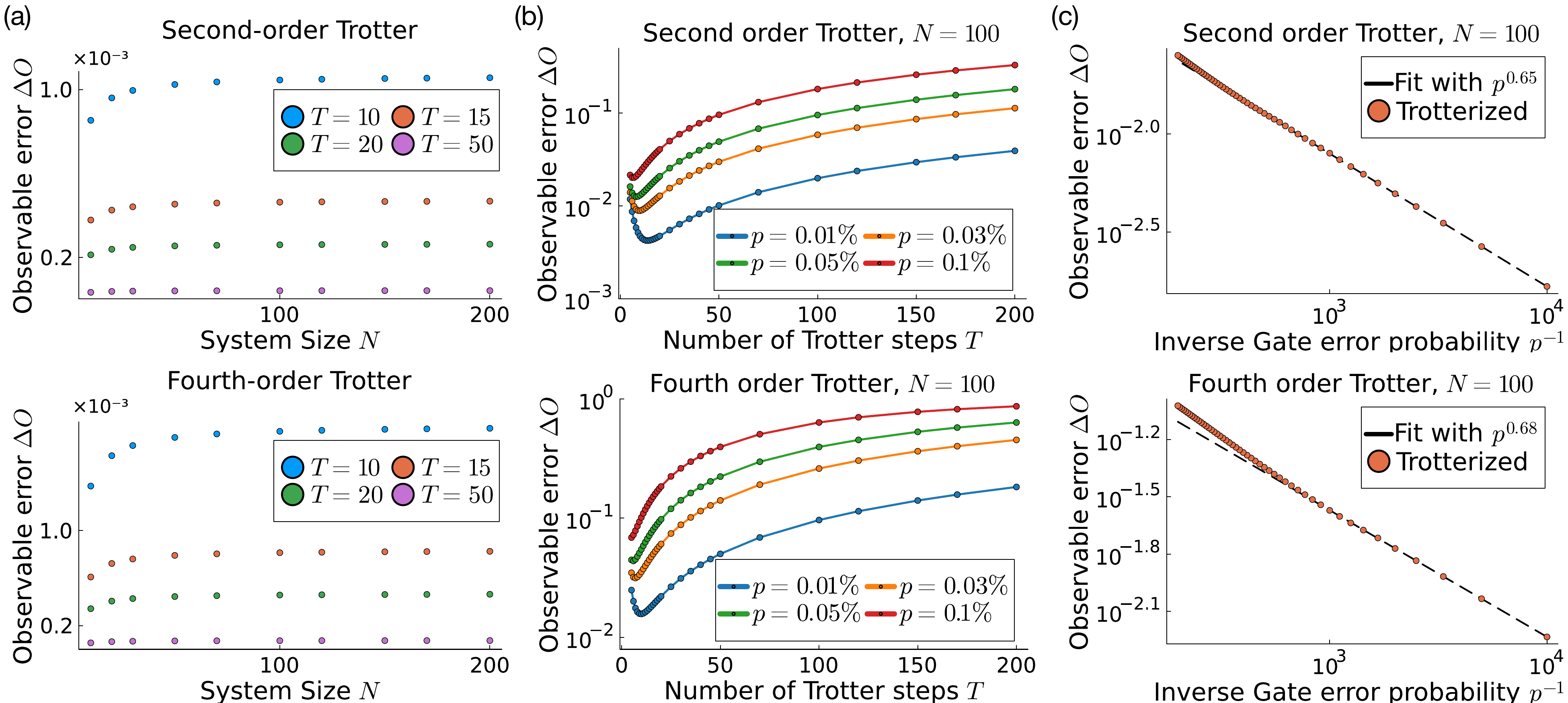}
    \caption{\textbf{Robustness of Trotterization}. For the numerical example, we consider a 1D chain of $N$ fermionic modes with annihilation operators $a_1, a_2 \dots a_N$, and a Gaussian Hamiltonian $H = h\sum_i a_i^\dagger a_i + ig \sum_i (a_i + \text{h.c.})(a_{i + 1} + \text{h.c.})$. The fermionic modes are initially in the state $\ket{0}^{\otimes N}$ and the observable $O = \sum_i a_i^\dagger a_i / N$ is measured at time $\tau$. We choose $h = 1.0, g = 0.5$ and $\tau = 1.0$ for all the calculations. (a) Observable error, in the absence of noise, due to Trotterization as a function of system size for different number of Trotter step $T$ for second and fourth-order Trotterization. We observe that the Trotter error in the local observable $O$ is independent of the system-size. (b) Observable error in the presence single-fermion depolarizing noise in the Trotterized circuit, with $p$ being the probability of depolarization per qubit per gate, as a function of the number of Trotter step. In the presence of noise, there is an optimal number of Trotter steps which depends on the noise-rate and minimizes the observable error. (c) The observable error at this optimal $T$ as a function of the noise probability $p$. }
    \label{fig:trotter}
\end{figure*}

Several rigorous analyses of Trotter errors and their scalings are available including those specialized to geometrically local lattice models \cite{childs2019nearly, childs2021theory}, those analyzing Trotter errors within low-energy subspaces \cite{mizuta2025trotterization, gong2024complexity, csahinouglu2021hamiltonian, burak2021hamiltonian} as well as connecting Trotter error to operator scrambling \cite{feng2025trotterization}. However, most analyses have focussed on characterizing the accuracy of the full many-body state prepared by the Trotterized evolution. For geometrically local Hamiltonians in particular, to ensure that the trace-norm Trotter error in the full state is $<\delta$, we require $t_\text{sim} = \Theta(T) =  \Theta(\tau (N\tau/\delta)^{1/p})$ \cite{childs2019nearly}. We first show that using the Lieb-Robinson bounds, this error bound can be improved to be uniform in $N$ if only accurate local observables are required.
\begin{lemma}[Trotter error for local observables]\label{lemma:trotter_local_obs_main}
  Suppose $O_X$ is an observable supported on $X$, and for any initial state $\rho(0)$, $\rho(\tau) = \exp(-iH_0\tau) \rho(0) \exp(iH_0\tau)$ is the state in the target evolution and $\hat{\rho}_{T}(\tau) = [S_p(\varepsilon)]^T \rho(0) [S_p^\dagger(\varepsilon)]^T$, with $\varepsilon = \tau / T$, is the state in the trotterized evolution with $T$ trotter steps, then
  \[
   \abs{\textnormal{Tr}(O\rho(\tau)) - \textnormal{Tr}(O\hat{\rho}_T(\tau))}  \leq O\bigg(\abs{X}^2\frac{\tau^{d+p+1}}{T^{p}}\bigg),
  \]
  where $d$ is the dimensionality of the lattice in which $H_0$ is embedded.
\end{lemma}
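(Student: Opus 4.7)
The plan is to work in the Heisenberg picture: after rewriting the error as $|\text{Tr}[\rho(0)\Delta(\tau)]|$ with $\Delta(\tau) = O_X(\tau) - \hat{O}_X^{(T)}(\tau)$, where $O_X(t) = e^{iH_0 t}O_X e^{-iH_0 t}$ and $\hat{O}_X^{(T)}(\tau) = (S_p^\dagger(\varepsilon))^T O_X (S_p(\varepsilon))^T$, one localizes $\Delta(\tau)$ using Lieb--Robinson and then applies the standard local decomposition of the single-step Trotter error. The essential observation is that, although the operator-norm bound $\|\Delta(\tau)\| = O(N\tau^{p+1}/T^p)$ depends on $N$, only the part of $\Delta(\tau)$ that lives inside an $O((v\tau)^d)$ light-cone around $X$ can couple to $\rho(0)$ via $O_X$ after a finite evolution time.

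Concretely, I would telescope between the two Heisenberg evolutions by defining
\[
W_k = (S_p^\dagger(\varepsilon))^k \, e^{iH_0(T-k)\varepsilon} \, O_X \, e^{-iH_0(T-k)\varepsilon} \, (S_p(\varepsilon))^k,
\]
so that $W_0 = O_X(\tau)$, $W_T = \hat{O}_X^{(T)}(\tau)$, and $\Delta(\tau) = \sum_{k=0}^{T-1}(W_k - W_{k+1})$. Each summand isolates a single replacement of $e^{-iH_0\varepsilon}$ by $S_p(\varepsilon)$, and I would use the standard local decomposition of the $p^\text{th}$-order Trotter remainder from Childs et al., $e^{-iH_0\varepsilon} - S_p(\varepsilon) = \varepsilon^{p+1}\sum_j E_j^{(\varepsilon)}$, where the $E_j^{(\varepsilon)}$ are geometrically local error operators of $O(1)$ norm indexed by lattice sites.

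Next I would invoke a Lieb--Robinson bound for $H_0$ (and an analogous one for the circuit generated by $S_p(\varepsilon)$) to show that the intermediate observable $O_X((T-k-1)\varepsilon)$ appearing in $W_k - W_{k+1}$ is approximated, up to exponentially small error, by an operator supported in the ball $B_{T-k}$ of radius $v(T-k)\varepsilon + R_0$ around $X$, where $|B_{T-k}| \leq C(|X| + (v(T-k)\varepsilon)^d)$. Only those $E_j^{(\varepsilon)}$ whose support intersects $B_{T-k}$ contribute non-negligibly to $\text{Tr}[\rho(0)(W_k-W_{k+1})]$, and invoking the Lieb--Robinson commutator estimate produces a per-step bound of order $|X|\,\varepsilon^{p+1}\,|B_{T-k}|\,\|O_X\|$, where the extra factor $|X|$ arises from the standard surface-area prefactor in the Lieb--Robinson estimate applied to an observable supported on $X$. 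Summing over $k$ and estimating $\sum_{k} |B_{T-k}| \leq C\,T(|X|+(v\tau)^d)$ gives the claimed $O(|X|^2\tau^{d+p+1}/T^p)$ bound after the subdominant $|X|\tau^{p+1}/T^p$ contribution is absorbed into it.

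The main obstacle I expect is the second step: controlling the Lieb--Robinson tails uniformly in $T$. One has to choose the truncation radius $R_0$ carefully so that the exponentially suppressed outside-light-cone contributions, once accumulated over $T$ steps, remain subdominant to the polynomial piece; taking $R_0 = \Theta(v\tau + \log(T\|O_X\|))$ suffices. A second delicate point is checking that the discrete Trotter circuit $S_p(\varepsilon)$ inherits a Lieb--Robinson velocity independent of $\varepsilon$, so that the effective light-cone is of width $O(v\tau)$ rather than growing with $1/\varepsilon$; this follows from the geometric locality of each Trotter stage but is the only nontrivial quantitative ingredient beyond the standard per-step local Trotter-error decomposition.
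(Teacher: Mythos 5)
Your proposal follows essentially the same skeleton as the paper's proof---telescope between the exact and Trotterized evolutions, decompose the single-step error into geometrically local pieces using the Childs et al.\ representation, and sum the light-cone-restricted contributions via Lieb--Robinson---so the high-level approach is right. But two of the details you worry about are either misplaced or glossed over in a way that matters.

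First, the concern about a Lieb--Robinson velocity for the discrete circuit $S_p(\varepsilon)$ is spurious given how the telescoping is set up. In both your $W_k-W_{k+1}$ and the paper's hybrid sum, the Trotter step always ends up conjugating the state $\rho(0)$, while the observable $O_X$ is Heisenberg-evolved only under the exact Hamiltonian $H_0$. The paper exploits this directly by using the Lieb--Robinson bound in the form of Lemma~\ref{lemma:lieb_robinson}, which bounds $\textnormal{Tr}\bigl(e^{iHt}O_X e^{-iHt}\mathcal{K}_Y(\Psi)\bigr)$ uniformly over arbitrary states $\Psi$; the Trotterized state is just such a $\Psi$, and no light-cone structure for $S_p$ is ever invoked. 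Working this way also avoids the accumulation of exponential truncation tails that you flag as the main obstacle: there is no operator truncation at all, only a trace bound that already decays in $d(X,Y)-c_{\textnormal{LR}}t$.

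Second, the local error decomposition needs more structure than just ``geometrically local $E_j$ of $O(1)$ norm.'' If you write the unitary error as $e^{-iH_0\varepsilon}-S_p(\varepsilon)=\varepsilon^{p+1}\sum_j E_j$ and then try to sum per-step contributions, terms of the form $E_j^\dagger A\, U$ or $V^\dagger A\, E_j$ do not decay when $E_j$ is far from $X$: they are operator products, not commutators, and the trace against $\sigma$ of a distant $E_j^\dagger A U$ need not be small. What the paper actually uses is the local-sum representation of the generator-level object $\mathcal{J}^{(p)}$, decomposed into super-operator pieces $\mathcal{N}_\beta$ that are nested $\textnormal{ad}_{\mathcal{C}_{H_j}}$ towers acting on $\mathcal{C}_{h_\beta}$. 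These $\mathcal{N}_\beta$ are both geometrically local (with diameter $O(p\,a_0)$) and trace-annihilating ($\vecbra{I}\mathcal{N}_\beta=0$), and that second property is exactly what makes Lemma~\ref{lemma:lieb_robinson} applicable term by term. You can patch your version by truncating the Heisenberg-evolved $O_X$ to a finite ball, restricting $H_0$ to a slightly larger ball, and bounding the restricted single-step Trotter error by $O(|B|\varepsilon^{p+1})$, but this requires explicit control of both the observable-truncation tails and the Hamiltonian-restriction error, and is more delicate than the paper's one-shot trace bound. Either route produces the $|X|^2$ prefactor, one factor from the Lieb--Robinson $|X|$ prefactor and one from counting the $\sim|X|$ local error terms at each distance shell, but the paper's bookkeeping via $\nu_d$ makes that count explicit.
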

\noindent Thus, on a noiseless simulator, to simulate only local observables to an accuracy $\delta$, we require $T = \Theta(\tau(\tau^{d + 1}/\delta)^{1/p})$ independent of the system size. The proof of this lemma, provided in the supplement, uses the local-error sum representation for the Trotter error developed in Ref.~\cite{childs2019nearly} together with Lieb-Robinson bounds \cite{hastings2006spectral}.

This Trotter error bound allows us to establish that Trotterization is robust as a problem-to-simulator mapping when considering the impact of noise on quantum simulators. We make this precise in the following proposition.

\begin{proposition}[Noise-robustness of Trotterization]\label{prop:trotter}
    {When using the $p^\textnormal{th}$ order Trotterization with $T$ Trotter steps and with noise at rate $\gamma>0$, the observable error $\Delta \mathcal{O}$ satisfies
    \begin{align}\label{eq:prop_1_breakup}
    \Delta \mathcal{O} \leq O\bigg(\frac{\tau^{d +  p + 1}}{T^p}\bigg) + O\big(\gamma T^{d + 1}\big).
    \end{align}
    This error bound is minimized when $T =\tau \Theta(\gamma^{-1/(p+d+1)})$, choosing which yields that $\Delta \mathcal{O}$ satisfies Eq.~\eqref{eq:target_scaling} with $\alpha = p/(p + d+ 1)$ and that $t_\textnormal{sim} \leq O(\tau \gamma^{-1/(p+d+1)})$.}
\end{proposition}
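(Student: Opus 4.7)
The plan is to split $\Delta \mathcal{O}$ via the triangle inequality into a pure Trotterization error, for which Lemma~\ref{lemma:trotter_local_obs_main} applies directly, and an additional noise-induced error:
\[
\Delta \mathcal{O} \leq \abs{\mathcal{O}_\text{target} - \textnormal{Tr}[O \mathcal{E}_0(t_\textnormal{sim})(\rho(0))]} + \abs{\textnormal{Tr}[O(\mathcal{E}_\gamma(t_\textnormal{sim}) - \mathcal{E}_0(t_\textnormal{sim}))(\rho(0))]}.
\]
The first term is already $\leq O(\tau^{d+p+1}/T^p)$ by Lemma~\ref{lemma:trotter_local_obs_main}, since the noiseless Trotter dynamics is exactly what that lemma analyzes. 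So the bulk of the work is to establish that the second term is $\leq O(\gamma T^{d+1})$ with a prefactor independent of $N$.

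For the noise term I would use the Duhamel identity
\[
\mathcal{E}_\gamma(t_\textnormal{sim}) - \mathcal{E}_0(t_\textnormal{sim}) = \gamma \int_0^{t_\textnormal{sim}} \mathcal{E}_\gamma(t_\textnormal{sim}, s)\bigg(\sum_\alpha \mathcal{N}_\alpha\bigg) \mathcal{E}_0(s,0)\, ds,
\]
and pass the outer propagator into the Heisenberg picture, yielding
\[
\textnormal{Tr}[O(\mathcal{E}_\gamma - \mathcal{E}_0)(\rho(0))] = \gamma \sum_\alpha \int_0^{t_\textnormal{sim}} \textnormal{Tr}\Big[\mathcal{N}_\alpha^\dagger(O_s)\, \mathcal{E}_0(s,0)(\rho(0))\Big] ds,
\]
with $O_s = \mathcal{E}_\gamma^\dagger(t_\textnormal{sim}, s)(O)$. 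Because $H(t)$ cycles through the geometrically local pieces $H_1,\dots,H_K$ of the Trotter schedule and the dissipators $\mathcal{N}_\alpha$ are themselves geometrically local with $\smallnorm{\mathcal{N}_\alpha}_\diamond \le 1$, a Lieb-Robinson bound for noisy local Lindbladian evolution gives, up to exponentially small tails, that $O_s$ is supported on a thickening of $X$ by radius $v(t_\textnormal{sim}-s)$ for a velocity $v$ depending only on the fixed constants $K, s_P$ and on the local norms of the $H_j$. Hence the number of $\alpha$ that contribute nontrivially is $\leq O(\abs{X} + (t_\textnormal{sim}-s)^d)$, and each integrand is bounded by $\smallnorm{O}\,\smallnorm{\mathcal{N}_\alpha}_\diamond \leq O(\smallnorm{O})$. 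Integrating and using $t_\textnormal{sim} = \Theta(T)$ for a fixed universal gate set yields the desired $O(\gamma T^{d+1})$.

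Combining the two contributions produces Eq.~\eqref{eq:prop_1_breakup}. The optimization is routine: setting the $T$-derivative of $\tau^{d+p+1}/T^p + \gamma T^{d+1}$ to zero gives $T^{p+d+1} \propto \tau^{d+p+1}/\gamma$, i.e.\ $T = \tau\,\Theta(\gamma^{-1/(p+d+1)})$, at which both terms evaluate to $\Theta(\tau^{d+1}\gamma^{p/(p+d+1)})$, giving $\alpha = p/(p+d+1)$ and $t_\textnormal{sim} = \Theta(T) = O(\tau\gamma^{-1/(p+d+1)})$.

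The main obstacle is the Lieb-Robinson containment of $O_s$ for the \emph{noisy}, time-dependent Trotter evolution. The unitary Lieb-Robinson bound does extend to geometrically local Lindbladians, but one has to verify carefully that the effective velocity depends only on the fixed constants $K, s_P$ and the local norms of the $H_j$, not on $N$ or $\gamma$, so that after real time $t_\textnormal{sim}$ the light cone truly has size $O(t_\textnormal{sim}^d)$. Once this is in place, counting the noise sites in the light cone and verifying that the $\abs{X}$-dependent contribution $\gamma \abs{X} t_\textnormal{sim}$ is subleading to $\gamma t_\textnormal{sim}^{d+1}$ is bookkeeping.
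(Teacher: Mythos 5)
Your overall architecture matches the paper's: split $\Delta\mathcal{O}$ by the triangle inequality into the noiseless Trotter error (handled by Lemma~\ref{lemma:trotter_local_obs_main}) and a noise-induced term, then optimize over $T$. The noise-term argument you sketch is essentially a re-derivation of the paper's Lemma~\ref{lemma:lieb_robinson_perturbation_theory}, which the paper proves once in the preliminaries and then simply invokes in the Proposition~\ref{prop:trotter} proof.

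The one substantive difference is the direction of your Duhamel split. You write $\mathcal{E}_\gamma - \mathcal{E}_0 = \gamma\int \mathcal{E}_\gamma(t_\text{sim},s)\,(\sum_\alpha\mathcal{N}_\alpha)\,\mathcal{E}_0(s,0)\,ds$, which places the \emph{noisy} propagator on the outside, so that $O_s = \mathcal{E}_\gamma^\dagger(t_\text{sim},s)(O)$ must be confined by a Lieb-Robinson bound for dissipative dynamics — precisely the obstacle you flag. The paper instead writes the identity with the \emph{noiseless} propagator on the outside (its Eq.~\eqref{eq:pert_theory_order_1}), i.e.\ $\text{Tr}[O_X\,\mathcal{E}_{0}(t,s')\,\mathcal{N}_\alpha\,\mathcal{E}_\gamma(s',s)(\Psi)]$. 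The inner object $\mathcal{E}_\gamma(s',s)(\Psi)$ is then just some state (no LR bound needed), and only the unitary Heisenberg conjugation $\mathcal{E}_0^\dagger(t,s')(O_X)$ requires a Lieb-Robinson estimate, which is exactly what Lemma~\ref{lemma:lieb_robinson} supplies via the trace-annihilating property $\text{Tr}\,\mathcal{N}_\alpha(\cdot)=0$. Your direction works too — Lindbladian Lieb-Robinson bounds with $\gamma$- and $N$-independent velocity do hold for geometrically local dissipators — but it imports an extra ingredient that the paper's ordering deliberately avoids, and you would in effect need to prove (or cite and verify the constants of) that dissipative LR bound as part of the proposition, rather than treating it as a remark. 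If you flip the Duhamel split to the paper's order, the gap you identify disappears and the rest of your argument goes through unchanged.
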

\noindent Proposition \ref{prop:trotter} establishes that, for geometrically local observables, the error due to noise in the Trotterized circuit depend polynomially on the noise rate $\gamma$ and the target evolution time $\tau$, but is independent of the system size. Moreoever, in the presence of noise, there is an optimal number of Trotter steps $T$ which depend on the noise rate $\gamma$ but is independent of the system size $N$ at which the minimum observable error is achieved --- at very small values of $T$, while the error accumulation due to noise is small, the large Trotter error degrades the observable accuracy and at very large values of $T$, while the Trotter error is small, the accumulation of errors due to noise degrades the observable accuracy. As a consequence of the locality of the observables and using the improved Trotterization error bounds from lemma \ref{lemma:trotter_local_obs_main}, we are able to show that the optimal number of Trotter steps $T$ is system-size independent and consequently the total simulation error is also system-size independent.

We illustrate these findings numerically by analyzing Trotterization of a 1D nearest-neighbour Gaussian fermion model in Fig.~\ref{fig:trotter}. Figure \ref{fig:trotter}(a) shows the observable error, in the absence of noise, as a function of system size $N$ for different Trotter time-steps $T$ --- as predicted by lemma \ref{lemma:trotter_local_obs_main}, we see that the Trotter error for local observables is independent of the system size. In Fig.~\ref{fig:trotter}(b), we consider noise in the Trotterized evolution and show the dependence of the observable error on the number of Trotter steps---as predicted by Eq.~\eqref{eq:prop_1_breakup}, in the presence of noise, there is a noise-dependent optimal number of Trotter steps that minimize the observable error. Finally, in Fig.~\ref{fig:trotter}(c), we show the variation of this optimal observable error with the noise rate in the Trotterized circuit---consistent with proposition \ref{prop:trotter}, we find that the observable error scales as a positive power of the noise rate.

Another consequence of lemma \ref{lemma:trotter_local_obs_main} is that it predicts lower fault tolerance overhead when considering only local observables. Our next result holds in the more restricted setting of local stochastic noise models \cite{aliferis2005quantum, aharonov1997fault, gottesman2024surviving}, although we expect it to be generalizable to more complex noise models. For this noise model, we show that if the physical noise rate is less than the fault tolerance threshold then while using code concatenation with the Trotterized circuit, the number of levels of concatentation needed to obtain a target local observable precision depends only on the evolution time and not on the system size.
\begin{proposition}[Reduced fault tolerance overhead]
If the physical noise rate is less than fault tolerance threshold, then the number of levels $L$ of concatenation needed to simulate local observables to precision $\delta$ is given by
\[
L = \Theta\bigg[\log\bigg(\frac{p + d + 1}{p} \frac{\log({\tau^{d + 1}}/{\delta})}{\log(\alpha)}\bigg) \bigg],
\]
where $\alpha$ is the ratio of the fault tolerance threshold to the physical noise rate.
\end{proposition}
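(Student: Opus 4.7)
The plan is to chain Proposition~\ref{prop:trotter} with the standard threshold theorem for concatenated codes under local stochastic noise. First I would reinterpret Proposition~\ref{prop:trotter} as a \emph{logical} noise budget: to reach precision $\delta$ on a local observable simulated by a Trotterized circuit, it suffices that the effective logical noise rate $\gamma_\mathrm{eff}$ satisfies
\[
\gamma_\mathrm{eff} \leq C\bigg(\frac{\delta}{\tau^{d+1}}\bigg)^{(p+d+1)/p}
\]
for a suitable constant $C$. Crucially, Proposition~\ref{prop:trotter} already guarantees that the corresponding optimal Trotter step count---and therefore the logical circuit size---depends only on $\tau$ and $\delta$, not on $N$. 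This $N$-independence of the required $\gamma_\mathrm{eff}$ is exactly what will prevent any $\log N$ factor from appearing in the final expression for $L$.

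Next I would invoke the standard threshold theorem: once the physical noise rate $\gamma_\mathrm{phys}$ lies strictly below the threshold $\gamma_\mathrm{th}$, after $L$ levels of code concatenation the effective logical noise per encoded gate satisfies $\gamma_\mathrm{eff} \leq \gamma_\mathrm{th}\,\alpha^{-2^L}$ with $\alpha = \gamma_\mathrm{th}/\gamma_\mathrm{phys} > 1$. Substituting into the sufficiency condition above and solving yields
\[
2^L \log\alpha \geq \frac{p+d+1}{p}\log\bigg(\frac{\tau^{d+1}}{\delta}\bigg) + O(1),
\]
which rearranges to the claimed scaling $L = \Theta\big[\log\big(\tfrac{p+d+1}{p}\cdot\tfrac{\log(\tau^{d+1}/\delta)}{\log\alpha}\big)\big]$; the $\Theta$ in the outer logarithm absorbs the additive $O(1)$ shift. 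A matching lower bound on $L$ follows from the fact that the bound in Proposition~\ref{prop:trotter} is tight, up to constants, for generic geometrically local Hamiltonians, so any substantially smaller choice of $L$ produces a logical noise rate too large to deliver precision $\delta$.

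The main obstacle is bridging the stochastic-error output of the threshold theorem with the continuous-time Lindbladian noise model assumed by Proposition~\ref{prop:trotter}. Concretely, one must verify that (i) uncorrected residual errors inside a code block only propagate within that block, so the induced logical noise is still geometrically local on the logical lattice; and (ii) the residual stochastic faults can be absorbed into an effective master equation of the form used in the Setup, with geometrically local Lindbladians $\mathcal{N}_\alpha$ of diamond norm at most one and rate $\gamma_\mathrm{eff}$. Both are standard but non-trivial consequences of the block structure of concatenated codes together with a Poissonization argument turning discrete fault events into a continuous-time Lindblad process; once those reductions are in place, the algebra above closes the proof.
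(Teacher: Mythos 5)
Your final arithmetic reproduces the paper's: combine a Trotter error $O(\tau^{d+p+1}/T^p)$ with a noise contribution $O(T^{d+1}\cdot\text{(effective rate)})$, optimize over $T$ to get $O(\tau^{d+1}\cdot(\text{rate})^{p/(p+d+1)})$, insert the concatenated-code suppression $\xi_L = \xi_\text{th}(\xi_0/\xi_\text{th})^{(t+1)^L}$, and solve $\leq\delta$ for $L$. The $2^L$ in your expression versus the paper's $(t+1)^L$ is just the choice of a distance-3 base code, and the $\log(t+1)$ is absorbed by the outer $\Theta$. That part is fine.

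The gap is exactly where you flag it, and it is not a technicality that can be deferred. You want to invoke Proposition~\ref{prop:trotter} as a black box on the logical circuit, which requires translating the output of the threshold theorem into a geometrically local Lindbladian model $\dot\rho = -i[H(t),\rho]+\gamma_\text{eff}\sum_\alpha\mathcal{N}_\alpha$. But the threshold theorem for concatenated codes delivers the much weaker and more general local stochastic fault model $q_\text{noisy}(x)=\sum_F p(F)\,q(x\,|\,F)$ with $\sum_{F'\supseteq F}p(F')\leq\xi_L^{|F|}$, in which faulty locations may act arbitrarily, coherently, and in a correlated way across the circuit. A ``Poissonization'' of independent local Lindblad jumps cannot reproduce this class of errors; the implication runs the wrong way (a Lindblad model is a special case of the stochastic model, not the other way around). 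So the reduction you need is not standard and very plausibly false in general; you would at minimum have to restrict the stochastic model severely, losing exactly the generality that makes the threshold theorem usable.

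The paper sidesteps this entirely by \emph{not} reducing to a Lindbladian. Instead it re-proves the single estimate that plays the role of Lemma~\ref{lemma:lieb_robinson_perturbation_theory} directly in the stochastic fault model: for a measurement supported on $X$, any fault path $F$ disjoint from the light cone $\text{LC}(X)$ satisfies $q(x\,|\,F)=q(x)$, so
\begin{align*}
\norm{q_\text{noisy}-q}_1 \leq 2\sum_{F:\,F\cap\text{LC}(X)\neq\emptyset}p(F)\leq \abs{\text{LC}(X)}\,\xi_L\leq O(T^{d+1}\xi_L),
\end{align*}
after which the optimization over $T$ proceeds just as in Proposition~\ref{prop:trotter}. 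You should adopt this direct route; it is strictly simpler and covers the adversarial stochastic noise without any bridging. Separately, the ``matching lower bound'' you assert from tightness of Proposition~\ref{prop:trotter} is not proved in the paper and does not follow from what is stated---the $\Theta$ here is used in the informal sense of ``the number of levels one needs to choose,'' not a rigorous two-sided bound, so you should not claim a converse without a separate argument.
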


Next, we consider implementing a target Hamiltonian $H_\text{target}$ as the effective Floquet Hamiltonian corresponding to a time-dependent simulator Hamiltonian $H(t)$. More specifically, we pick a stroboscopic period $\uptau$ for the simulator Hamiltonian, which will then be a periodic function of time with period $\uptau$. The effective Floquet Hamiltonian $H_F$ implemented by the simulator is then a Hamiltonian that generates the same unitary for each stroboscopic period $T$ i.e., $\exp(-iH_F \uptau) = \mathcal{T}\exp(-i\int_0^\uptau H(s) ds)$. For sufficiently small $\uptau$, $H_F$ can be systematically expressed as a power-series expansion in $T$ via the Floquet-Magnus expansion \cite{abanin2017effective, blanes2009magnus}
\[
H_F = \sum_{p = 0}^\infty \uptau^p V_F^{(p)},
\]
where the first few terms of this expansion are given by
\begin{subequations}\label{eq:floquet_magnus_expansion}
\begin{align}
&V_F^{(0)} = \int_0^\uptau H(s) \frac{ds}{\uptau}, \\
&V_F^{(1)} = \int_{\leq 0 \leq s_2 \leq s_1 \leq \uptau}[H(s_1), H(s_2)]\frac{ds_1 ds_2}{2i \uptau^2},  \\
&V_F^{(2)} = -\int_{0\leq s_3\leq s_2 \leq s_1\leq \uptau} \bigg([H(s_1),[H(s_2), H(s_3)]]+\nonumber\\
&\qquad \qquad \qquad \qquad  [H(s_3),[H(s_2), H(s_1)]]\bigg)\frac{ds_1 ds_2 ds_3}{6\uptau^3}
\end{align}
\end{subequations}
To implement the target Hamiltonian $H_0$ on the simulator, we will design the simulator Hamiltonian $H(t)$ such that, for some $p$, the Floquet Hamiltonian $H_F^{(p)} = V_F^{(0)} + \uptau V_F^{(1)} + \dots \uptau^p V_F^{(p)}$ upto the $p^\text{th}$ order satisfies
\begin{align}\label{eq:floquet_target_relation}
    H_F^{(p)} = \uptau^p H_0 + \uptau^{p + 1}V,
\end{align}
i.e., $H_F^{(p)}$, to order $\uptau^p$, matches with $H_0$ with error between $H_F^{(p)}$ and $\uptau^p H_0$ being captured by $\uptau^{p + 1}V$, where $V$ is another Hamiltonian. As illustrated in Eq.~\eqref{eq:floquet_magnus_expansion}, $V_F^{(p)}$ is expressible in terms of commutators of $H(t)$ nested $p$ times. Since $H(t)$ is geometrically local, this implies that $V$ is also geometrically local but with locality possibly growing with $p$. Furthermore, Eq.~\eqref{eq:floquet_target_relation} implies that to simulate $e^{-iH_0 \tau}$, the simulator would then have to be evolved for time $t_\text{sim} = \tau / \uptau^p$.

We emphasize that to ensure that Eq.~\eqref{eq:floquet_target_relation} holds, in general we would need to scale different terms in $H(t)$ with different powers of $\uptau$. For instance, suppose the simulator implements a time-dependent 1D transverse-field Ising model $H(t) = f(t) \sum_i \sigma^x_i + g(t) \sum_i \sigma^z_i \sigma^z_{i + 1}$ and our target Hamiltonian is
$H_0 = h\sum_i \sigma^x_i + J_z\sum_i \sigma^z_i \sigma^z_{i + 1} + J_y \sum_i \sigma^y_i \sigma^y_{i + 1}$. This target Hamiltonian can be implemented via a second-order Floquet Magnus expansion (Eq.~\eqref{eq:floquet_target_relation} with $p = 2$) --- in particular, choosing $f(t) = h \uptau^2 +F_1 \cos(2\pi t / \uptau), g(t) = (J_x + J_y)\uptau^2 + G_1 \cos(4\pi t / \uptau)$
with $F_1 G_1^2 = 4\pi^2 J_y $, we obtain that $H_F^{(2)} = \uptau^2 H_0 + O(\uptau^3)$. Here, it is critical to scale the coefficients in $f(t)$ and $g(t)$ with powers of $\uptau$, otherwise in the limit of $\uptau\to 0$, the dynamics due to $H(t)$ would be identical to the dynamics due to $H_F^{(0)} = V_F^{(0)}$. With the correctly rescaled simulator Hamiltonian, evolving $H(t)$ to time $\tau/\uptau^2$ effectively simulates $H_0$ for time $\tau$.

For Hamiltonians implemented via the Floquet-Magnus expansion, we establish the following proposition for simulating local observables in the presence of noise.
\begin{proposition}[Noise-robustness of Floquet-Magnus expansion]\label{prop:fm_expansion}
    {When using the Floquet-Magnus expansion upto $p^\text{th}$ order and with noise at rate $\gamma > 0$, the observable error $\Delta \mathcal{O}$ satisfies
    \begin{align}\label{eq:final_bound_fm_unopt}
    \Delta \mathcal{O} \leq O(\tau^{d + 1} \uptau) + O\bigg(\tau^{d + 1}\frac{\gamma}{\uptau^{p(d + 1)}} \bigg).
    \end{align}
    This error bound is minimized on choosing $\uptau = \Theta(\gamma^{1/(p(d + 1) + 1)})$, which yields that $\Delta \mathcal{O}$ satisfies Eq.~\eqref{eq:target_scaling} with $\alpha = 1/(p(d + 1) + 1)$ and that $t_\textnormal{sim} \leq O(\tau \gamma^{-p/(p(d + 1) + 1)})$.}
\end{proposition}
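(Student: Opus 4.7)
The plan is to adopt the same two-step decomposition used for Proposition~\ref{prop:trotter}: via the triangle inequality, upper bound $\Delta \mathcal{O}$ by the sum of (i) the noiseless error incurred by replacing the target evolution $e^{-iH_0\tau}$ with the simulator's noiseless Floquet evolution at the stroboscopic time $t_\textnormal{sim}=\tau/\uptau^p$, and (ii) the error accrued by running the simulator noisily for time $t_\textnormal{sim}$ rather than noiselessly. The two contributions will then be balanced in $\uptau$.

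For step (i), I would use that at stroboscopic times the time-ordered simulator evolution coincides exactly with $e^{-iH_F t_\textnormal{sim}}$. From Eq.~\eqref{eq:floquet_target_relation} together with the higher-order Floquet-Magnus tail, I would write $H_F = \uptau^p H_0 + \uptau^{p+1} R$, with $R$ a geometrically local Hamiltonian whose local terms have system-size-independent norm. Because $H_F t_\textnormal{sim} = (H_0 + \uptau R)\tau$, the comparison reduces to bounding the effect of the geometrically local perturbation $\uptau R$ on a local observable $O_X$ evolved for time $\tau$. A standard Duhamel plus Lieb-Robinson argument (essentially the one behind Lemma~\ref{lemma:trotter_local_obs_main}) confines the integrand to the backward light cone of $O_X$, whose volume grows as $(vs)^d$, and each local piece of $\uptau R$ inside the light cone contributes $O(\uptau)$, yielding a bound $O(\abs{X}^2 \uptau\, \tau^{d+1})$; this is the first term in Eq.~\eqref{eq:final_bound_fm_unopt}.

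For step (ii), the noise-induced error on a local observable, when a geometrically local Lindbladian of strength $\gamma$ acts for total time $t_\textnormal{sim}$ on top of any geometrically local Hamiltonian, is bounded by $O(\abs{X}^2 \gamma\, t_\textnormal{sim}^{d+1})$: a Heisenberg-picture Duhamel expansion in $\gamma$ plus another Lieb-Robinson light-cone count gives $O(t_\textnormal{sim}^d)$ dissipator terms each contributing $O(\gamma t_\textnormal{sim})$. Substituting $t_\textnormal{sim}=\tau/\uptau^p$ produces the second term in Eq.~\eqref{eq:final_bound_fm_unopt}. Minimizing $\tau^{d+1}\uptau + \gamma \tau^{d+1}/\uptau^{p(d+1)}$ over $\uptau$ then yields the balance point $\uptau = \Theta(\gamma^{1/(p(d+1)+1)})$, and substituting back reproduces $\Delta \mathcal{O} = O(\tau^{d+1}\gamma^{1/(p(d+1)+1)})$ and $t_\textnormal{sim} = O(\tau\gamma^{-p/(p(d+1)+1)})$, matching the proposition.

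The principal obstacle is step (i): justifying the representation $H_F = \uptau^p H_0 + \uptau^{p+1} R$ with a geometrically local, norm-bounded $R$ in a rigorous, $N$-independent way. The Floquet-Magnus series is only asymptotic, so a careful truncation bound in the spirit of Ref.~\cite{abanin2017rigorous} is required, and crucially it must preserve geometric locality with constants uniform in system size. Since the $k$-th Magnus term is built from $k$-nested commutators of $H(t)$, the local support radius of $R$ grows with $p$; for fixed $p$ and $d$ this only inflates the effective Lieb-Robinson velocity in step (i) by an $O(1)$ factor and is absorbed into the implicit constants, but ensuring this works uniformly for $\uptau$ down to the optimal $\Theta(\gamma^{1/(p(d+1)+1)})$ --- rather than only below a $p$-dependent threshold --- is the main technical subtlety to be handled in the supplement.
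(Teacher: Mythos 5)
Your step (ii) and the final balancing of $\uptau$ match the paper, and the scaling analysis in step (i) would be correct \emph{if} the operator $R$ you invoke were geometrically local with $N$-uniform local norm. But the representation $H_F = \uptau^p H_0 + \uptau^{p+1}R$ with such an $R$ is not available, and this is more than the ``technical subtlety'' you flag at the end. The exact Floquet Hamiltonian $H_F$ defined through $\exp(-iH_F\uptau) = \mathcal{T}\exp(-i\int_0^\uptau H(s)\,ds)$ is generically a fully nonlocal operator; the Floquet--Magnus series for it is only asymptotic and diverges in generic many-body systems (this is the mathematical content of Floquet heating). Consequently $R = (H_F - \uptau^p H_0)/\uptau^{p+1}$ would have to absorb the entire divergent tail, which has neither local structure nor system-size-independent norm, and no truncation estimate on the Magnus series repairs this --- $H_F$ is simply the wrong intermediate object.

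The paper's resolution (Eq.~\eqref{eq:prop_2_triangle_ineq}) is a three-term triangle inequality with the truncated-Floquet evolution $\tilde{\mathcal{O}}^{(p)} = \textnormal{Tr}[O\,e^{-iH_F^{(p)}t_\textnormal{sim}}\rho(0)e^{iH_F^{(p)}t_\textnormal{sim}}]$ as an explicit intermediate. The leg between $\tilde{\mathcal{O}}^{(p)}$ and the noiseless simulator output never touches the nonlocal $H_F$. Instead it uses the fact (Eq.~\eqref{eq:floquet_equality}) that the time-dependent transformed Hamiltonian $\tilde{H}(t) = e^{-\Omega(t)}H(t)e^{\Omega(t)} - ie^{-\Omega(t)}\partial_t e^{\Omega(t)}$ generates exactly the same stroboscopic unitary as $H(t)$ (because $\Omega(n\uptau)=0$), while admitting the exact decomposition $\tilde{H}(t) = H_F^{(p)} + R^{(p)}(t)$ in which $R^{(p)}(t)$ is geometrically local with $\smallnorm{R^{(p)}(t)}_\star \leq C_p\uptau^{p+1}$ uniformly in $N$ (Lemma~\ref{lemma:remainder_floquet}). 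Lemma~\ref{lemma:lieb_robinson_perturbation_theory} then controls this leg, giving $O(\tau^{d+1}\uptau)$. The remaining leg, $\mathcal{O}_\textnormal{target}$ versus $\tilde{\mathcal{O}}^{(p)}$, uses only the design relation Eq.~\eqref{eq:floquet_target_relation}, $H_F^{(p)} = \uptau^p H_0 + \uptau^{p+1}V$ with $V$ geometrically local, which is an assumption on the finite truncation and is legitimately available. The key idea missing from your proposal is therefore the replacement of $H_F$ by the locality-preserving time-dependent $\tilde{H}(t)$ that reproduces the stroboscopic dynamics exactly; once that substitution is made, your light-cone counting goes through.
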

\noindent The proof of this proposition builds upon Refs.~\cite{abanin2017effective, abanin2017rigorous} which rigorously analyzed heating time-scales in periodically driven many-body systems. Similar to proposition \ref{prop:trotter} for Trotterization, proposition \ref{prop:fm_expansion} shows that even while using Floquet-Magnus expansion to implement a target Hamiltonian, local observables incur an error that depends only on the noise rate $\gamma$ and the target evolution time $\tau$ and not on the system size $N$. Furthermore, there is an optimal stroboscopic period $\uptau$ dependent only on the noise rate $\gamma$ which, in the presence of noise, yields the minimum error in the measured observable. Physically, this arises due to the fact that in the limit of large $\uptau$, the Floquet Hamiltonian $H_F$ is only a crude approximation to the target Hamiltonian while in the limit of small $\uptau$, the required simulation time overhead is large thus causing an accumulation of errors due to noise. As consequence of the locality of the observable, we are able to show that the optimal stroboscopic period is, in-fact, system-size independent thus yielding a total simulation error that is also system-size independent.

These scalings are numerically illustrated in Fig.~\ref{fig:floquet} by analyzing a Gaussian fermion model. We consider a 1D chain of fermionic modes with annihilation operators $a_1, a_2 \dots a_N$ and a simulator Hamiltonian
\begin{subequations}\label{eq:floquet_magnus_demo}
\begin{align}
H(t) = h(t) \mu + ig(t) Q, 
\end{align}
where
\begin{align}
\mu = \sum_i a_i^\dagger a_i \text{ and } Q = \sum_i (a_i + a_i^\dagger)(a_{i + 1} + a_{i + 1}^\dagger).
\end{align}
\end{subequations}
We choose $h(t) = h_0 + h_1 \cos (2\pi t/\uptau) , g(t) = g_0 + g_1 \sin (2\pi t / \uptau)$. To the zeroth order, the simulator implements $H_0 = h_0 \mu + g_0 Q$ with $t_\text{sim} = \tau$. Furthermore, if $h_0 = g_0 = 0$, then to the first order, the simulator implements $H_0 = i h_1 g_1 [\mu, Q]$ wtih $t_\text{sim} = \tau / \uptau$. The target observable that we simulate is $O = \mu / N$. Figure \ref{fig:floquet}(a) shows the observable error, in the absence of noise, as a function of the system size $N$ --- consistent with Eq.~\eqref{eq:final_bound_fm_unopt}, we find that the observable error does not depend on the system size. Figure \ref{fig:floquet}(b) shows the observable error as a function of the stroboscopic period $\uptau$ in the presence of noise --- when using zeroth order Floquet-Magnus expansion, since there is no simulation-time overhead (i.e., $t_\text{sim}  =\tau$), we observe that the observable error decreases on decreasing $\uptau$ and saturates to a noise-rate dependent value as $\uptau \to 0$. When using first order Floquet-Magnus expansion, since $t_\text{sim} = \tau / \uptau$, there is an optimal $\uptau$ at which the observable error in the presence of noise is minimized. Finally, Fig.~\ref{fig:floquet}(c) shows the optimal observable error as a function of the noise-rate $\gamma$---as predicted by proposition \ref{prop:fm_expansion}, this error scales as $\gamma^\alpha$, where the exponent $\alpha$ depends on the order of the Floquet-Magnus expansion used to simulate the target Hamiltonian.

Finally, we consider a problem-to-simulator mapping, commonly proposed for quantum simulation of lattice gauge theories, that employs the Schriffer Wolf transformation to effectively implement the target Hamiltonian $H_0$ as a Hamiltonian $M$ projected within the low-energy eigen-spaces of a geometrically local Gauge Hamiltonian $P$. More specifically, on the simulator we will implement a Hamiltonian $H$ given by
\[
H = \uptau M + P,
\]
where $M$ is a geometrically local Hamiltonian and $\uptau$ is a small parameter that controls the relative strength of the $P$ and $M$. We will assume that $P = \sum_\alpha P_\alpha$ is a sum of commuting orthogonal projectors i.e., $P_\alpha =  P_\alpha^\dagger = P_\alpha^2, [P_\alpha, P_\beta] = 0$. By performing a Schriffer Wolf transformation upto order $p$, we can obtain an effective geometrically local Hamiltonian $\tilde{M}^{(p)}$ and a anti-Hermitian matrix $\Omega$ such that, at a fixed system size $N$,
\[
\uptau M + P = e^{\Omega}(\uptau \tilde{M}^{(p)} + P)e^{-\Omega} + O(\uptau^{p + 1}),
\]
and $\Omega \sim O(\uptau)$. As with the case of Floquet-Magnus expansion, we will assume that $M$ and $P$ are designed such that
\begin{align}
    \tilde{M}^{(p)} = \uptau^p H_0 + \uptau^{p + 1}V,
\end{align}
where $V$ is a geometrically local Hamiltonian. For the simulator to reproduce the dynamics of $H_0$ for time $\tau$, it would then have to be run for $t_\text{sim} = \tau / \uptau^{p+ 1}$. We now establish the following proposition for local observable which parallels proposition \ref{prop:fm_expansion} for the Floquet-Magnus expansion. The proof of this proposition, provided in the supplement, closely parallels the proof of proposition 3, and builds upon the tools developed in Ref.~\cite{abanin2017rigorous}.
\begin{proposition}
    {Suppose $O$ is a local observable such that $[O, P] = 0$. Using perturbative expansion upto $p^\textnormal{th}$ order and with noise at rate $\gamma > 0$, the observable error $\Delta \mathcal{O}$ satisfies
    \[
    \Delta \mathcal{O} \leq O(\tau^{d + 1} \uptau) + O\bigg(\tau^{d + 1}\frac{\gamma}{\uptau^{p(d + 1) + 1}} \bigg).
    \]
    This error bound is minimized on choosing $\uptau = \Theta(\gamma^{1/(p(d + 1) + 2)})$, which yields that $\Delta \mathcal{O}$ satisfies Eq.~\eqref{eq:target_scaling} with $\alpha = 1/(p(d+ 1) + 2)$ and that $t_\textnormal{sim} \leq O(\tau \gamma^{-(p + 1)/(p(d + 1)+2)})$.}
\end{proposition}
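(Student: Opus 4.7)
The plan is to split $\Delta\mathcal{O}$ into two contributions mirroring Eq.~\eqref{eq:prop_1_breakup} and Eq.~\eqref{eq:final_bound_fm_unopt}: a perturbative approximation error from the Schrieffer--Wolff mapping, and an error due to noise accumulating over the simulator run-time $t_{\textnormal{sim}} = \tau/\uptau^{p+1}$. Once these are bounded, the stated scaling follows by balancing them, exactly as in proposition~\ref{prop:trotter} and proposition~\ref{prop:fm_expansion}.

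For the approximation error, I would pass to the frame defined by $e^{\Omega}$ and simultaneously to the interaction picture with respect to $P$. Since $\Omega \sim O(\uptau)$ and $[O,P] = 0$, conjugation of $O$ by $e^{\Omega}$ and by the diagonal $P$-phase changes $O$ only by a local $O(\uptau)$ perturbation. In this frame the effective generator is $\uptau\tilde{M}^{(p)} = \uptau^{p+1} H_0 + \uptau^{p+2} V$ modulo $O(\uptau^{p+2})$ corrections that are controlled by the Schrieffer--Wolff machinery of Ref.~\cite{abanin2017rigorous}. Rescaling time by $\uptau^{p+1}$, the target-time dynamics is that of $H_0 + \uptau V$ evolved for time $\tau$, where $V$ is geometrically local. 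A standard Lieb--Robinson estimate for a local observable evolved under an $O(\uptau)$ local perturbation then yields an error $O(\uptau\tau^{d+1})$: the light cone of $O$ at time $\tau$ contains $O(\tau^d)$ perturbation sites each of strength $O(\uptau)$, and integrating over the evolution window contributes an additional factor of $\tau$. This gives the first term in the bound.

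For the noise term, the key observation is that $H = \uptau M + P$ admits an effective Lieb--Robinson velocity $v = O(\uptau)$, even though $P$ itself is $O(1)$. Because $P = \sum_\alpha P_\alpha$ is a sum of commuting orthogonal projectors, $e^{-iPt}$ factorizes into a product of on-patch phases and transmits no information across the lattice on its own. Passing to the interaction picture with respect to $P$, the generator becomes $\uptau\, e^{iPt} M e^{-iPt}$, which remains geometrically local (with a $\uptau$-independent range) and has operator norm $O(\uptau)$ per interaction term, so the usual Lieb--Robinson argument yields $v = O(\uptau)$. Combining this with the noise-propagation analysis used in proposition~\ref{prop:trotter} and proposition~\ref{prop:fm_expansion}, the observable error from the Lindbladian noise is $O\!\left(\gamma\, v^d\, t_{\textnormal{sim}}^{d+1}\right)$. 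Substituting $v = O(\uptau)$ and $t_{\textnormal{sim}} = \tau/\uptau^{p+1}$, and simplifying the exponent $(p+1)(d+1) - d = p(d+1) + 1$, gives the second term $O(\gamma\tau^{d+1}/\uptau^{p(d+1)+1})$.

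The hardest step will be the rigorous justification of the slow Lieb--Robinson velocity: one has to verify that the rotated interactions $e^{iPt} M_\alpha e^{-iPt}$ remain quasi-local with a range that is independent of $\uptau$, and that the standard Lieb--Robinson machinery applies uniformly in $\uptau$. This is precisely where the structural assumption that $P$ is a sum of commuting geometrically local projectors enters non-trivially, and it is also where the condition $[O,P] = 0$ pays off, since it ensures the $P$-phase cannot mix the observable across distinct eigensectors. Once the two terms are in hand, the rest is calculus: equating them gives $\uptau = \Theta(\gamma^{1/(p(d+1)+2)})$, so $\alpha = 1/(p(d+1) + 2)$ and $t_{\textnormal{sim}} = O(\tau\,\gamma^{-(p+1)/(p(d+1)+2)})$ by direct substitution.
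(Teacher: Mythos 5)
Your proposal follows essentially the same approach as the paper's proof: the same decomposition into a perturbative (Schrieffer--Wolff plus frame-rotation) error and a noise-accumulation error, the same key use of the commutativity of $P$ to get a Lieb--Robinson velocity $O(\uptau)$ for $H = \uptau M + P$, and the same balancing of $\uptau$ against $\gamma$ at the end. One minor nit: the slow Lieb--Robinson velocity comes purely from the commuting structure of $P$, whereas $[O,P]=0$ is used separately to drop the diagonal $P$-phase when comparing $\mathcal{O}_{\mathrm{sim}}|_{\gamma=0}$ to the evolution generated by $\uptau\tilde M^{(p)}$ alone; your closing paragraph slightly conflates these two roles, but you invoke both correctly where they are actually needed.
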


\begin{figure*}
    \centering
    \includegraphics[width=1.0\linewidth]{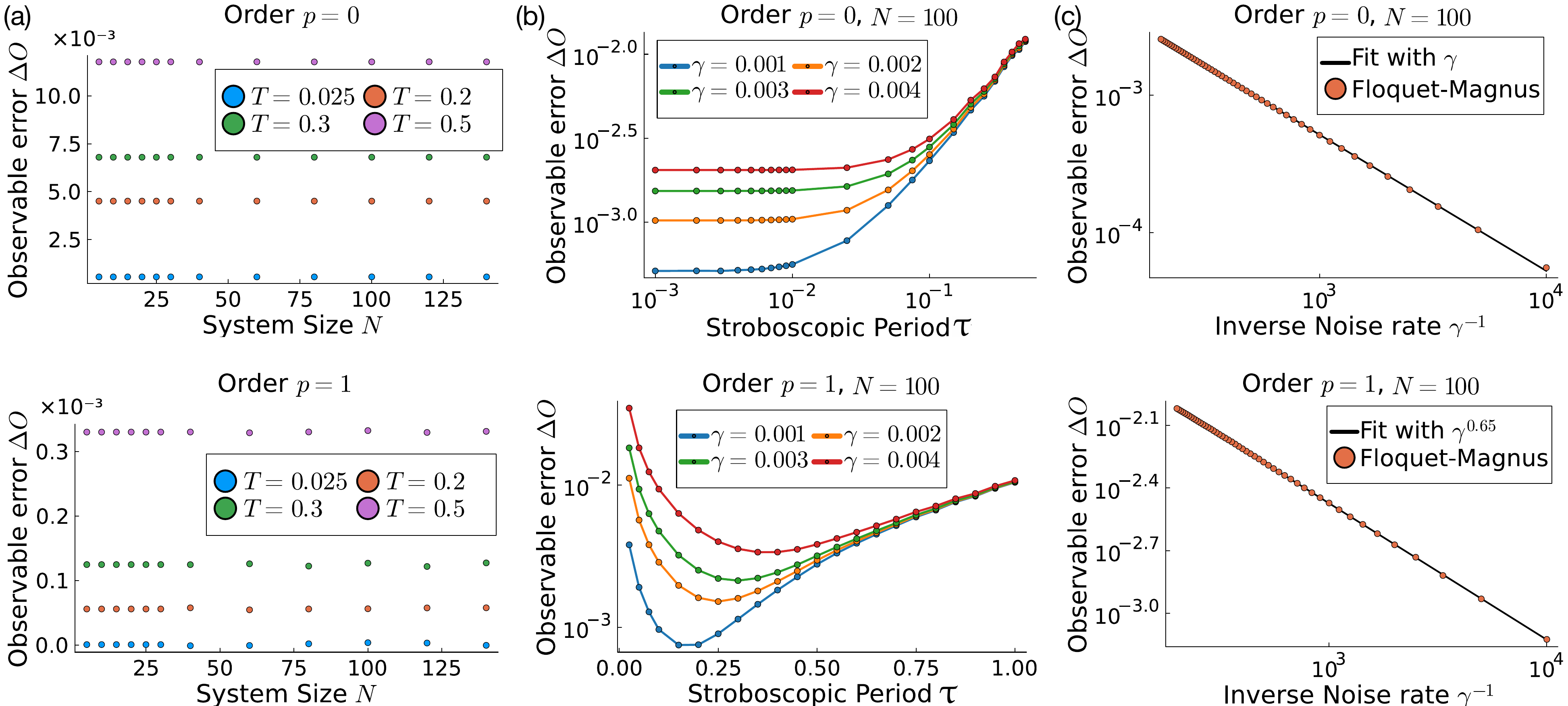}
    \caption{\textbf{Robustness of Floquet-Magnus expansion}. Numerical analysis of the model in Eq.~\eqref{eq:floquet_magnus_demo}. The fermionic modes are initially in the state $\ket{0}^{\otimes N}$ and the observable $O = \sum_i a_i^\dagger a_i / N$ is measured at time $\tau$. For $p = 0$, we choose $h_0 = g_0 = 1, h_1 = g_1 = 1/2, \tau = 1.5$ and for $p = 1$, we choose  $h_0 = g_0 = 0, h_1 = g_1 = 1, \tau = 5$. (a) Observable error, in the absence of noise, as a function of system size for different number stroboscopic periods $\uptau$ for $p = 0, 1$. We observe that $\Delta \mathcal{O}$ is independent of the system-size. (b) Observable error in the presence single-fermion depolarizing noise in the simulator, with $\gamma$ being the noise rate, as a function the stroboscopic period $\uptau$. In the presence of noise, there is an optimal stroboscopic period $\uptau$ which depends $\gamma$ and minimizes the observable error. (c) The observable error at this optimal $\uptau$ as a function of the noise rate $\gamma$. }
    \label{fig:floquet}
\end{figure*}

{\emph{Discussion}. Propositions 1, 3 and 4 established that the error in local observables $\Delta \mathcal{O} \leq O(\gamma^\alpha \tau^{d + 1})$ with $\alpha$ depending on the dimensionality of the lattice $d$ and the order $p$ of the problem-to-simulator mapping. For Trotterization, $\alpha \to 1$ as $p$ is increased---this is consistent with the previous results that showed that higher-order Trotter formulae result in asymptotically lower gate counts and thus lower simulation-time overheads \cite{childs2019nearly}. Nevertheless, higher order Trotterization would also be accompanied by a constant-factor overhead which would grow with $p$ and, at any fixed noise-rate $\gamma$, might make it unfavorable to use very high-order Trotter formulae. On the other hand, for Floquet-Magnus and perturbative expansions, $\alpha \to 0$ as $p \to \infty$---this is a consequence of these mappings effectively requiring higher-order interactions to correctly mimic the physics of the target model, and thus incurring a longer simulation time overhead.

Furthermore, these robustness results have an important implication on quantum advantage as formulated in Refs.~\cite{trivedi2024quantum, kashyap2025accuracy}. Since $\Delta \mathcal{O} \leq O(\gamma^\alpha \tau^{d + 1})$, to maintain a fixed $O(1)$ precision in the target observable, we can reliably simulate $\tau \leq O(\gamma^{-\alpha/(d + 1)})$. Furthermore, for all of the mappings that we consider, the simulator run-time $t_\text{sim} \leq O(\text{poly}(\gamma^{-1}))$. For classical algorithms to simulate local observables for such evolution times, we will thus need a run-time $\exp(O(\tau^d)) \sim \exp(\text{poly}(\gamma^{-1}))$. Thus, as the noise-rate $\gamma$ is reduced, either by improvement in the physical hardware of the simulator or via quantum error correction, classical algorithms are expected to require exponentially more time to reach the precision that can be reached by the quantum simulator.

We remark that this conclusion strongly depends on the dependence of the observable error $\Delta \mathcal{O}$ on $\gamma$ and $\tau$---even if $\Delta O$ is independent of the system size (i.e., it is noise robust), it could be that classical algorithms are not exponentially slower than the quantum simulator as $\gamma \to 0$. For instance, if for a problem-to-simulator mapping, the observables perturb by $\Delta \mathcal{O} \leq O(\gamma \exp(O(\tau))$ or $O(\log^{-1}(\gamma^{-1}) \text{poly}(\tau))$, then to maintain a fixed $O(1)$ precision we can only reliably go upto $\tau \leq O(\text{polylog}(\gamma^{-1}))$ which can be simulated on a classical computer in time superpolynomial in $\gamma^{-1}$. For such a mapping, the classical algorithm would thus not be exponentially slower than quantum simulators as the noise-rate $\gamma$ is reduced.}

\emph{Conclusion}. In conclusion, this paper provides theoretical evidence that even while using complex problem-to-simulator mappings on noisy quantum simulators, we could expect local observables to be reliably simulated even if the full quantum state is not. The worst-case accumulation of errors with system size, which is expected when looking at full state fidelities, is avoided when considering local observables in geometrically local models due to their light cone. However, our numerical simulations of Gaussian fermion model, which are likely not the worst-case when it comes to error accumulation, indicate that to obtain local observable errors $\sim 1\%$ we still need noise rates $\sim 0.1\%$. While these noise rates are not scalably available at the level of the physical qubits, with recent progress in experimental quantum error correction \cite{bluvstein2024logical, google2025quantum}, we expect that such error rates could be achievable in the logical qubits with one or a few rounds of error correction. The results in this paper would then provide theoretical evidence that local observables in such ``pre-fault tolerant" experiments could be trusted.

Our work opens up several questions---extending the results of this paper to both models with longer-range interactions and noise-models that are either space or time-correlated would immediately improve their applicability to physical experiments. {Furthermore, it is reasonable to expect that noise process, on average, incur an observable error much smaller than the worst case \cite{cai2023stochastic} and we can aim at understanding the average-case impact of noise on problem-to-simulator mappings.}  Finally, while in this paper we provide upper-bounds on the observable errors and focus on their asymptotic scalings with system size, evolution time and noise-rate, it would be useful to develop classical numerical tools which can efficiently compute numerically useful upper bounds on observable errors. Some progress in this direction has recently been made in this direction by combining tensor network methods with lower-bounding techniques from convex optimization theory \cite{mishra2024classically} in the context of Hamiltonian ground state problems. Finally, developing similar stability results for the simulation of more complex many-body systems, such as open quantum systems or quantum field theories, could also improve our understanding of the utility of pre-fault tolerant quantum hardware.

\begin{acknowledgements}
    We thank Andrew Childs for discussions that inspired the project. We thank the Institute of Advanced Studies, Princeton for their hospitality where part of this work was done. R.T. acknowledges funding from the European Union's Horizon Europe research and innovation program under grant agreement number 101221560 (ToNQS). JIC acknowledges funding from the Federal Ministry of Education and Research Germany (BMBF) via the project FermiQP (13N15889) and THEQUCO, which is part of the Munich Quantum Valley, and supported by the Bavarian state government with funds from the Hightech Agenda Bayern Plus.
\end{acknowledgements}
\bibliography{references}
\onecolumngrid
\newpage
\onehalfspacing
\section{Preliminaries}
\subsection{Notation}
Given a finite-dimensional Hilbert space $\mathcal{H}$, we will denote by $\text{M}(\mathcal{H})$ the set of bounded linear operators from $\mathcal{H}\to \mathcal{H}$, define $\text{M}_h(\mathcal{H})$ to be the set of the bounded Hermitian operators from $\mathcal{H}\to \mathcal{H}$ and define $\text{D}_1(\mathcal{H})$ as the set of valid density matrices on $\mathcal{H}$. We will typically use the $\dagger$ superscript to indicate the adjoint, or Hermitian conjugate, of an operator or superoperator. However in some cases, more compact expression can obtained by using the following notation --- for some operator or superoperator $X$, we define $X^{(-)}:=X$ and $X^{(+)}:=X^\dagger$. Furthermore we will use $\bar{+}=-$ and $\bar{-}=+$. For example, $X^{(\bar{u})}:=X^\dagger$ for $u=-$.

While dealing with mixed states and their dynamics, it will often be convenient to adopt the vectorized notation, where we map operators on a (finite-dimensional) Hilbert space to state vectors via $\rho = \sum_{i_l, i_r}\rho_{i_l, i_r}\ket{i_l}\!\bra{i_r} \to \vecket{\rho} = \sum_{i_l, i_r}\rho_{i_l, i_r} \ket{i_l, i_r}$. Superoperators, such as Lindladians or channels, will map to ordinary operators in this picture. Given an operator $X \in \text{M}(\mathcal{H})$, we will define $X_l, X_r \in \text{M}(\mathcal{H}\otimes \mathcal{H})$ by $X_l\vecket{\rho} = (X\otimes I)\vecket{\rho} = \vecket{X\rho}$ and $X_r\vecket{\rho} = (I\otimes X^\text{T})\vecket{\rho} =  \vecket{\rho X}$. $X_l (X_r)$ can also be interpreted as a superoperator which left (right) multiplies its argument with $X$ i.e. $X_l(Y) = XY$ and $X_r(Y) = YX$. Given an operator $X \in \text{M}(\mathcal{H})$, we will define $\mathcal{C}_X \in \text{M}(\mathcal{H}\otimes \mathcal{H})$ to be the super-operator corresponding to commutator with $X$ i.e., $\mathcal{C}_X = X_l - X_r$. Note that $\mathcal{C}_X\vecket{\rho} = \vecket{[X, \rho]}$.

\emph{Norms}. $\norm{A}_p$ denotes the Schatten $p$-norm of an operator $A$. We will denote the operator norm, which is also the Schatten-$\infty$ norm, by $\norm{A}$ without an subscript. $\norm{\mathcal{A}}_{p \to q} := \max_{O,\norm{O}_p=1} \norm{\mathcal{A}(O)}_q$ indicates the norm of a superoperator $\mathcal{A}$. We define the completely-bounded norm of a superoperator $\mathcal{A}$ as $\norm{\mathcal{A}}_{cb, p \to q} := \sup_{n\geq 2} \norm{\mathcal{A} \otimes \textnormal{id}_n}_{p \to q}$. The diamond norm is the completely bounded $1\to1$ norm --- for the diamond norm, we will use the standard notation $\norm{\mathcal{A}}_\diamond := \norm{\mathcal{A}}_{cb, 1\to 1}$.

\emph{Asymptotics} For two real-valued functions $f(x)$ and $g(x)$, we will write $f(x) \leq O(g(x))$ to indicate that there exists $C,x_0 \in \mathbb{R}$ such that $f(x) \leq C g(x)$ for all $x>x_0$. Similarly $f(x) \geq \Omega(g(x))$ indicates that there exists $C,x_0 \in \mathbb{R}$ such that $f(x) \geq C g(x)$ for all $x>x_0$. Finally $f(x) = \Theta(g(x))$ indicates $\Omega(g(x)) \leq f(x) \leq O(g(x))$.

\subsection{Lattice models and Lieb Robinson bounds}
We will work with quantum spin models defined on a $d$-dimensional lattice $\mathbb{Z}^d$. For $x, y \in \mathbb{Z}^d$, $d(x,y)$ for two lattice sites $x$ and $y$ denotes the Manhattan distance between $x$ and $y$, i.e., the graph path length to reach $y$ from $x$. The distance between a set of lattice sites $S \subset \mathbb{Z}^d$ and a single site $x$ will be defined as $d(S,x):=\min_{y \in S} d(x,y)$. Similarly, the distance between two sets of lattice sites $S_1$ and $S_2$ will be defined as $d(S_1, S_2) := \min_{x \in S_1, y \in S_2} d(x,y)$. The diameter of set of lattice sites $S$ is written $\textnormal{diam}(S) := \max_{x\in S,y\in S} d(x,y)$.

We now consider $N$ spins located at the vertices of the lattice $\mathbb{Z}^d$. We will work with operators and super-operators defined on the Hilbert space of these spins --- for an operator $M$ or super-operator $\mathcal{M}$, we will denote by $\text{supp}(M), \text{supp}(\mathcal{M}) \subset \mathbb{Z}^d$ the set of vertices where $M$ or $\mathcal{M}$ is not identity. For ease of notation, we will also denote by $\text{diam}(M)= \text{diam}(\text{supp}(M))$ and $\text{diam}(\mathcal{M}) = \text{diam}(\text{supp}(\mathcal{M}))$. An operator $O$ is a geometrically local extensive operator with diameter $a$, coordination number $\mathcal{Z}$ and local strength $J$ if it can be expressed as
\begin{align}\label{eq:geom_local_op}
O = \sum_\alpha o_\alpha \text{ where }\forall \alpha: \text{diam}(\text{supp}(o_\alpha)) \leq a, \abs{\{\alpha' : o_{\alpha'} \cap o_\alpha \neq \emptyset\}} \leq \mathcal{Z} \text{ and } \norm{o_\alpha} \leq J
\end{align}
We will often call such operators $(a, \mathcal{Z}, J)-$geometrically local extensive operators. An operator $O$ will be called $(a, \mathcal{Z}, J)-$geometrically local extensive Hermitian operator if it satisfies Eq.~\ref{eq:geom_local_op} with $o_\alpha = o_\alpha^\dagger$. We remark that if $O_1, O_2 \dots O_M$ are individually $(a, \mathcal{Z},J)$-geometrically local extensive operators, then $[O_1, [O_2, \dots [O_{M - 1},O_M]]]$ is $(Ma, \mathcal{Z}^M, (2J)^M)$-geometrically local extensive operator. A useful result that we will use is the Lieb-Robinson bound.
\begin{lemma}[Lieb-Robinson bounds Ref.~\cite{hastings2006spectral}]\label{lemma:lieb_robinson}
    Suppose $H$ is a $(a,\mathcal{Z},J)$-geometrically local Hermitian operator, and $O_X$ be an operator supported on $X$ and $\mathcal{K}_Y$ be a super-operator supported on $Y$ which maps any operator to a trace-less operator (i.e., $\textnormal{Tr}(\mathcal{K}_Y(X)) = 0 \ \forall X$), then for any state $\Psi$,
    \[
    \textnormal{Tr}({e^{iHt}O_Xe^{-iHt}\mathcal{K}_Y(\Psi)}) \leq \frac{e}{\mathcal{Z}}\abs{X} \norm{O_X}\norm{\mathcal{K}_Y}_{1\to1} \exp\bigg(\frac{1}{a}\big(c_\textnormal{LR}t - d(X, Y)\big)\bigg),
    \]
    where $c_\textnormal{LR} = 4eJ\mathcal{Z}a$ is the Lieb-Robinson velocity corresponding to $H$.
\end{lemma}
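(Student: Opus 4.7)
The plan is to mirror the Floquet--Magnus argument of Proposition 3, with the interaction picture with respect to the driving replaced by the interaction picture with respect to $P$. I would split the total error as $\Delta\mathcal{O} \leq \Delta_1 + \Delta_2$, where $\Delta_1$ captures the noiseless discrepancy between $e^{-i(\uptau M + P)t_{\text{sim}}}$ and the target $e^{-iH_0\tau}$, and $\Delta_2$ captures the noise accumulation over $t_{\text{sim}} = \tau/\uptau^{p+1}$. The crucial structural move for both pieces is to pass to the interaction picture with respect to $P$: the generator of the noiseless dynamics becomes $\tilde{H}(s) = e^{iPs}(\uptau M)e^{-iPs}$, which remains geometrically local because $P = \sum_\alpha P_\alpha$ is a sum of commuting local projectors, so $e^{iPs}$ factorises into a product of local unitaries and only enlarges the support of each term by a constant radius. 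Consequently $\tilde{H}(s)$ has local strength $O(\uptau)$ and Lieb--Robinson velocity $O(\uptau)$---much smaller than the $O(1)$ velocity in the lab frame. Because $[O,P]=0$, the measurement is frame-invariant, and because the Schrieffer--Wolff transformation is constructed so that $[\tilde{M}^{(p)},P]=0$, the SW-rotated and interaction-picture generator is time-independent and equal to $\uptau\tilde{M}^{(p)} = \uptau^{p+1}H_0 + \uptau^{p+2}V$.

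For $\Delta_1$, after the SW rotation (up to the near-identity conjugation by $e^\Omega$, with $\Omega = O(\uptau)$) and passage to the interaction picture, the simulator approximates $e^{-i(\uptau^{p+1}H_0 + \uptau^{p+2}V)t_{\text{sim}}} = e^{-i(\tau H_0 + \tau\uptau V)}$. Treating $\uptau V$ as a perturbation of $H_0$ acting for effective time $\tau$, I would bound its effect on $\langle O\rangle$ via first-order Duhamel by $\int_0^{\tau}\|[\uptau V, O(\tau-s)]\|ds$, with $O(\tau-s)$ back-propagated under $H_0$. Lemma \ref{lemma:lieb_robinson} applied with the $O(1)$ Lieb--Robinson velocity of $H_0$ yields a light cone of volume $O((\tau-s)^d)$, and the integral evaluates to $O(\tau^{d+1}\uptau)$. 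The contributions from the SW truncation residual $E^{(p+1)}$ and from the conjugations by $e^{\pm\Omega}$ on $O$ and $\rho(0)$ are controlled analogously, using the norm estimates from Ref.~\cite{abanin2017rigorous} together with interaction-picture oscillation averaging driven by $P$ to ensure that these contributions remain subdominant to $O(\tau^{d+1}\uptau)$.

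For $\Delta_2$, a first-order Duhamel expansion of the noisy channel around the noiseless unitary gives $\Delta_2 \leq \gamma\sum_\alpha\int_0^{t_{\text{sim}}}\lvert\textnormal{Tr}[O(t_{\text{sim}}-s)\mathcal{N}_\alpha(\rho(s))]\rvert ds$. Working in the interaction picture with respect to $P$, the dressed noise $\tilde{\mathcal{N}}_\alpha(s) = e^{iPs}\mathcal{N}_\alpha e^{-iPs}$ remains geometrically local with diamond norm at most $1$, and the back-propagated observable sits inside a Lieb--Robinson cone of volume $O((\uptau(t_{\text{sim}}-s))^d)$ by lemma \ref{lemma:lieb_robinson} applied with the interaction-picture velocity $O(\uptau)$; noise sites outside the cone are exponentially suppressed. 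Summing over $\alpha$ inside the cone and integrating yields $\gamma\uptau^d t_{\text{sim}}^{d+1}/(d+1) = \Theta(\gamma\tau^{d+1}/\uptau^{p(d+1)+1})$, which matches the second term of the claim. The final $\uptau$-optimisation is a routine one-variable calculus exercise, giving $\uptau^\star = \Theta(\gamma^{1/(p(d+1)+2)})$ and the stated scalings for $\alpha$ and $t_\textnormal{sim}$. The main technical obstacle I anticipate is controlling the SW truncation residual $E^{(p+1)}$: a naive Duhamel bound on its effect gives only $O(\tau^{d+1})$, and recovering the claimed $O(\tau^{d+1}\uptau)$ scaling requires the interaction-picture oscillation averaging developed in Ref.~\cite{abanin2017rigorous}. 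This is precisely the step where the hypothesis $[O,P]=0$ and the commuting-projector structure of $P$ are essential, since together they ensure that every relevant object---residual $E^{(p+1)}$, dressed noise, and back-propagated observable---retains the geometric locality needed for the interaction-picture Lieb--Robinson velocity to be $O(\uptau)$ rather than $O(1)$.
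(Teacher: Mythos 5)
Your submission does not address the statement at hand. Lemma~\ref{lemma:lieb_robinson} is the Lieb--Robinson bound itself: a primitive estimate, which the paper imports from Ref.~\cite{hastings2006spectral} without proof, asserting that the influence of a traceless local superoperator $\mathcal{K}_Y$ on a time-evolved local observable $O_X$ is bounded by $\frac{e}{\mathcal{Z}}\abs{X}\norm{O_X}\norm{\mathcal{K}_Y}_{1\to1}\exp\big(\frac{1}{a}(c_\textnormal{LR}t - d(X,Y))\big)$. What you have written is instead a proof sketch for Proposition~4 (noise robustness of the Schrieffer--Wolff / perturbative expansion): a triangle-inequality decomposition $\Delta\mathcal{O}\leq\Delta_1+\Delta_2$, interaction-picture dressing by $P$, and optimisation over $\uptau$. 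Indeed your argument invokes Lemma~\ref{lemma:lieb_robinson} explicitly, several times, as an ingredient (``Lemma~\ref{lemma:lieb_robinson} applied with the $O(1)$ Lieb--Robinson velocity of $H_0$ yields a light cone of volume\ldots''), which would be circular if this were meant to establish the Lieb--Robinson bound itself.

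A genuine proof of Lemma~\ref{lemma:lieb_robinson} must proceed from first principles. One approximates $O_X(t):=e^{iHt}O_Xe^{-iHt}$ by a truncation $O'$ supported within distance $r$ of $X$, with $\norm{O_X(t)-O'}$ controlled by the standard Lieb--Robinson commutator estimate (itself obtained from a Dyson/path-counting expansion with $n$-th term of order $(2J\mathcal{Z}t)^n/n!$, which resums to the exponential factor $\exp((c_\textnormal{LR}t-r)/a)$ with $c_\textnormal{LR}=4eJ\mathcal{Z}a$). The traceless hypothesis on $\mathcal{K}_Y$ is then used to annihilate the contribution of $O'$: when $r<d(X,Y)$ the support of $O'$ is disjoint from $Y$, so $\textnormal{Tr}\big(O'\mathcal{K}_Y(\Psi)\big)$ factorises across the tensor product and vanishes because $\textnormal{Tr}\,\mathcal{K}_Y(\cdot)=0$, leaving $\abs{\textnormal{Tr}\big((O_X(t)-O')\mathcal{K}_Y(\Psi)\big)}\leq\norm{O_X(t)-O'}\,\norm{\mathcal{K}_Y}_{1\to1}$, which gives the claim after choosing $r$ just below $d(X,Y)$. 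None of this structure appears in your submission. If you intended a proof of Proposition~4, it should be resubmitted against that statement; as a proof of Lemma~\ref{lemma:lieb_robinson} the present attempt does not engage with the claim.
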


We will find it useful to use a ``intensive norm" associated with a representation of an operator as a sum of local terms. Suppose $A = \sum_{\alpha} a_\alpha$ is an operator defined on the Hilbert space of the lattice model, then for $\kappa > 0$,
\begin{align}\label{eq:local_norm}
\smallnorm{A}_\star = \sup_x \sum_{\alpha: x \ni \text{supp}(a_\alpha)} \norm{a_\alpha}.
\end{align}
This norm is motivated from the local norms used in Refs.~\cite{abanin2017effective, abanin2017rigorous}, but does not account for the geometric locality of the individual terms in the operator. Note that this norm is dependent on the specific decomposition of $A$ as $\sum_{\alpha} a_\alpha$ --- for our purposes, since we will be working with only specific decompositions of the operators of interest, this ambiguity will not be a concern. The following lemma summarizes some useful properties of the local norms.
\begin{lemma}[Properties of $\norm{\cdot}_\star$]\label{lemma:local_norm_prop}
Suppose $A = \sum_\alpha a_\alpha$ and $B = \sum_\alpha b_\alpha$ are operators for the Lattice model in $\mathbb{Z}^d$, then
    \begin{enumerate}
        \item[(a)] $\norm{A+B}_\star \leq \norm{A}_\star +\norm{B}_\star$.
        \item[(b)] $\norm{[A, B]}_\star \leq 2 \norm{A}_\star \norm{B}_\star$.
        \item[(c)] Suppose $U$ is a locality-preserving unitary i.e.,$\exists r_0>0 : $ for any operator $A$, $\textnormal{supp}(U^\dagger A U) \subseteq \{x : d(x, \textnormal{supp}(A)) \leq r_0\}$, then $\smallnorm{UA U^\dagger}_\star \leq (2r_0)^d \smallnorm{A}_\star$.
    \end{enumerate}
\end{lemma}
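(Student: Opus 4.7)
The plan is to prove each of the three parts separately by direct manipulation of the definition of $\norm{\cdot}_\star$ applied to the natural decomposition of the operator at hand.

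For part (a), the natural decomposition of $A + B$ is the concatenation of the given decompositions of $A$ and $B$. For each fixed site $x$, the set of local terms whose support contains $x$ in this combined decomposition is the disjoint union of the corresponding sets for $A$ and $B$, so the inner sum splits cleanly into the sum of the inner sums for $A$ and $B$. Taking a supremum over $x$ and using that the supremum of a sum is bounded by the sum of the suprema yields the triangle inequality.

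For part (b), I would first note that the natural decomposition of $[A,B]$ is $\sum_{\alpha,\beta}[a_\alpha,b_\beta]$, and a pair contributes non-trivially only when $\textnormal{supp}(a_\alpha)\cap\textnormal{supp}(b_\beta)\neq\emptyset$. Each non-vanishing piece has $\textnormal{supp}([a_\alpha,b_\beta])\subseteq \textnormal{supp}(a_\alpha)\cup\textnormal{supp}(b_\beta)$, with the elementary bound $\norm{[a_\alpha,b_\beta]}\leq 2\norm{a_\alpha}\norm{b_\beta}$. For a fixed site $x$, I would split the contributing pairs into those with $x\in\textnormal{supp}(a_\alpha)$ and those with $x\in\textnormal{supp}(b_\beta)\setminus\textnormal{supp}(a_\alpha)$. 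In the first subset, I would reorganize the $\beta$-sum by selecting some $y$ in the overlap $\textnormal{supp}(a_\alpha)\cap\textnormal{supp}(b_\beta)$ and bounding $\sum_{\beta}\norm{b_\beta}$ restricted to terms with $y\in\textnormal{supp}(b_\beta)$ by $\norm{B}_\star$, and then bound the remaining sum $\sum_{\alpha:x\in\textnormal{supp}(a_\alpha)}\norm{a_\alpha}$ by $\norm{A}_\star$. The second subset is handled symmetrically. Combining both cases with the factor of $2$ from the commutator bound produces the desired product $2\norm{A}_\star\norm{B}_\star$.

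For part (c), the two key observations are that unitary conjugation preserves operator norms, so $\norm{U a_\alpha U^\dagger}=\norm{a_\alpha}$, and that the locality-preserving hypothesis ensures $\textnormal{supp}(U a_\alpha U^\dagger)$ is contained in the $r_0$-Manhattan-neighborhood of $\textnormal{supp}(a_\alpha)$. Consequently, the set of indices $\alpha$ with $x\in\textnormal{supp}(U a_\alpha U^\dagger)$ is contained in the set of $\alpha$ for which there exists some $y$ satisfying $d(x,y)\leq r_0$ and $y\in\textnormal{supp}(a_\alpha)$. I would then bound the resulting double sum by first summing over $\alpha$ at a fixed $y$ (which contributes at most $\norm{A}_\star$) and then over lattice sites $y$ in the Manhattan ball of radius $r_0$ around $x$, whose volume is bounded by $(2r_0)^d$, yielding the stated inequality.

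The main obstacle will be the careful bookkeeping in part (b): because a single site $x$ can be associated with many pairs $(\alpha,\beta)$ via either the $\alpha$-support or the $\beta$-support, and the overlap constraint couples the two indices non-trivially, isolating the claimed product structure with a clean constant requires threading the case split with the inner sum reorganization precisely enough to avoid over-counting, while implicitly relying on the fact that the overlap site $y$ can always be chosen from the relatively small support of the individual local terms.
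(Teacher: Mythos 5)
For parts (a) and (c), your sketch tracks the paper's argument step for step: concatenate the decompositions and use subadditivity of the supremum for (a), and for (c) use unitary invariance of the operator norm together with a union bound over the Manhattan ball of radius $r_0$ around $x$. Both are fine. (The paper's final line of (c) also carries a spurious factor $e^{2\kappa r_0}$, a leftover from an earlier version of the norm definition; it does not appear in the lemma statement.)

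For (b), the obstacle you flag is a genuine gap, not mere bookkeeping, and you cannot close it with the stated constant. Your step ``select some $y$ in the overlap'' is not well defined, because the overlap site depends on $\beta$; the only valid reorganization is a union bound
\[
\sum_{\beta:\,\textnormal{supp}(a_\alpha)\cap\textnormal{supp}(b_\beta)\neq\emptyset}\smallnorm{b_\beta}\ \le\ \sum_{y\in\textnormal{supp}(a_\alpha)}\ \sum_{\beta:\,y\in\textnormal{supp}(b_\beta)}\smallnorm{b_\beta}\ \le\ \abs{\textnormal{supp}(a_\alpha)}\,\smallnorm{B}_\star,
\]
which introduces an extra factor of the local support size. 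Indeed $\smallnorm{[A,B]}_\star\le 2\smallnorm{A}_\star\smallnorm{B}_\star$ is false without a bounded-support hypothesis: in 1D take $a_j=\sigma^x_j\sigma^x_{j+1}\cdots\sigma^x_{j+m-1}$ and $b_k=\sigma^z_k$, so $\smallnorm{A}_\star=m$ and $\smallnorm{B}_\star=1$, while $\smallnorm{[A,B]}_\star=2m^2>2m$ once $m>1$. The paper's own proof of (b) has the same hole: its last inequality passes from the index set ``$x\in\textnormal{supp}(a_\alpha)\cup\textnormal{supp}(b_{\alpha'})$ and the supports overlap'' to the index set ``$x$ lies in both supports,'' which is not a superset, so that step is incorrect as written. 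In the paper's later uses of this lemma the local terms do have bounded supports independent of $N$ and $\uptau$, so a bound $C\,\smallnorm{A}_\star\smallnorm{B}_\star$ with $C$ a support-size-dependent constant --- which is what your union-bound delivers --- would serve just as well; but neither your sketch nor the paper's argument actually establishes the clean constant $2$.
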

\begin{proof} All the proofs rely on the basic definition of the local norm.

\noindent (a) Since $A + B = \sum_\alpha (a_\alpha+b_\alpha)$, from the definition of the local norm, we obtain that
    \begin{align*}
    \norm{A + B}_\star &= \sup_x \bigg(\sum_{\alpha: x \ni \text{supp}(a_\alpha)} \norm{a_\alpha}  + \sum_{\alpha: x \ni \text{supp}(b_\alpha)} \norm{b_\alpha}\bigg), \nonumber \\
    &\leq \sup_x \sum_{\alpha: x \ni\text{supp}(a_\alpha)}\norm{a_\alpha} + \sup_x \sum_{\alpha: x \ni\text{supp}(b_\alpha)}\norm{b_\alpha},\nonumber\\
    &\leq \norm{A}_\star + \norm{B}_\star.
    \end{align*}
(b) We first note that
\begin{align*}
[A, B] = \sum_{\alpha, \alpha'} [a_\alpha, b_{\alpha'}] = \sum_{\alpha, \alpha': \text{supp}(a_\alpha) \cap \text{supp}(b_{\alpha'}) \neq \emptyset} [a_\alpha, b_{\alpha'}],
\end{align*}
and consequently since $\text{diam}(A\cup B) \leq \text{diam}(A)+\text{diam}(B)$,
\begin{align*}
\norm{[A, B]}_\star &=\sup_x \sum_{\substack{\alpha, \alpha': \text{supp}(a_\alpha) \cap \text{supp}(b_{\alpha'}) \neq \emptyset, \\ x \ni \text{supp}(a_\alpha) \cup \text{supp}(b_{\alpha'})}} \norm{[a_\alpha,b_{\alpha'}]},\nonumber \\
&\leq 2 \sum_{\substack{\alpha, \alpha': \text{supp}(a_\alpha) \cap \text{supp}(b_{\alpha'}) \neq \emptyset, \\ x \ni \text{supp}(a_\alpha) \cup \text{supp}(b_{\alpha'})}} \norm{a_\alpha}\norm{b_{\alpha'}}, \nonumber \\
&\leq 2 \sum_{\alpha: x \ni \text{supp}(a_\alpha)} \sum_{\alpha': x \ni \text{supp}(a_{\alpha'})} \norm{a_\alpha}\norm{b_{\alpha'}} = 2\norm{A}_\star \norm{B}_\star.
\end{align*}
(c) Consider the operator $A' = U A U^\dagger = \sum_{\alpha} a_{\alpha}'$ where $a_\alpha' = U a_\alpha U^\dagger$. By definition, $\text{supp}(a_\alpha') \subseteq \{x: d(x, \text{supp}(a_\alpha)) \leq r_0\}$ and consequently, $\text{diam}(a_\alpha') \leq \text{diam}(a_\alpha) + 2r_0$ Now, 
 \begin{align*}
     \smallnorm{A'}_\star &= \sup_x \sum_{\alpha: x\ni \text{supp}(a_\alpha')} \norm{a_\alpha'}, \nonumber \\
     &\leq \sup_x \sum_{\alpha: d(x, \text{supp}(a_\alpha)) \leq r_0} \norm{a_\alpha}, \nonumber \\
     &\leq \sup_x \sum_{d(y, x) \leq r_0} \sum_{\alpha: y \ni \text{supp}(a_\alpha)} \norm{a_\alpha}, \nonumber \\
     &\leq \sup_x \sum_{d(y, x) \leq r_0} \smallnorm{A}_\star \leq (2r_0)^d e^{2\kappa r_0} \smallnorm{A}_\star.
 \end{align*}
\end{proof}
\noindent We remark that while considering super-operators, an intensive super-operator norm can be defined analogous to Eq.~\eqref{eq:local_norm} while using the diamond norm instead of the operator norm --- for super-operator $\mathcal{N}$, we will denote this norm by $\norm{\mathcal{N}}_{\diamond, \star}$. Since the diamond norm satisfies sub-multiplicity, triangle inequality and unitary invariance, lemma \ref{lemma:local_norm_prop} continues to apply for local super-operator norm as well.

A useful lemma that we will use in our stability analysis is below.
\begin{lemma}\label{lemma:lieb_robinson_perturbation_theory}
Suppose $H(t) = \sum_\alpha h_\alpha(t)$ is a $(a, \mathcal{Z}, J)-$geometrically local extensive operator, $\mathcal{N} = \sum_\alpha \mathcal{N}_\alpha$ is Lindbladian with $\norm{\mathcal{N}}_{\diamond,\star} \leq 1$ and $\mathcal{E}_\delta(t, s)$ is the channel defined by
\[
\mathcal{E}_\delta(t, s) = \mathcal{T}\exp\bigg(\int_s^t \mathcal{L}_\delta(s') ds'\bigg) \text{ where }\mathcal{L}_\delta(s) = -i[H(s), \cdot] + \delta \mathcal{N}.
\]
Then, for an operator $O_X$ supported on $X$ and any state $\Psi$,
\begin{align}
\abs{\textnormal{Tr}[O_X \mathcal{E}_\delta(t, s)(\Psi)] - \textnormal{Tr}[O_X \mathcal{E}_{\delta = 0}(t, s)(\Psi)]}\leq   \delta \norm{O_X}\abs{X}^2 \abs{t - s}\nu_d(c_\textnormal{LR}\abs{t - s}),
\end{align}
where $\nu_d(x)$ is a non-decreasing function of $x$ which also satisfies $\nu_d(x) \leq O(x^d)$ as $x \to \infty$ and $c_\textnormal{LR}$ is the Lieb-Robinson velocity from lemma \ref{lemma:lieb_robinson}.
\end{lemma}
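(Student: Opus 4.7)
The plan is to combine a first-order Duhamel expansion in $\delta$ with the Lieb-Robinson bound of lemma~\ref{lemma:lieb_robinson}. Starting from $\mathcal{L}_\delta = \mathcal{L}_0 + \delta\mathcal{N}$ and the evolution equation for $\mathcal{E}_\delta$, the standard variation-of-parameters identity gives
\begin{align*}
\mathcal{E}_\delta(t,s) - \mathcal{E}_0(t,s) = \delta \int_s^t \mathcal{E}_0(t, s')\, \mathcal{N}\, \mathcal{E}_\delta(s', s)\, ds',
\end{align*}
where $\mathcal{E}_0$ denotes the noiseless propagator. Writing $\mathcal{N} = \sum_\alpha \mathcal{N}_\alpha$ and testing against $O_X$, the problem reduces to bounding, uniformly in $s' \in [s,t]$, each term $\text{Tr}[O_X \mathcal{E}_0(t,s')(\mathcal{N}_\alpha(\Psi_{s'}))]$, where $\Psi_{s'} = \mathcal{E}_\delta(s',s)(\Psi)$ is a density matrix.

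Since each $\mathcal{N}_\alpha$ is itself a Lindbladian, $\text{Tr}(\mathcal{N}_\alpha(\cdot))=0$, so $\mathcal{N}_\alpha$ satisfies the traceless hypothesis of lemma~\ref{lemma:lieb_robinson}. Applied to the unitary evolution generated by $H(\cdot)$ on $[s',t]$ (the Lieb-Robinson proof extends verbatim to time-dependent Hamiltonians with the same velocity $c_\text{LR}$ provided the locality parameters of $H(s)$ are uniform), this yields, for each $\alpha$ and with $r_\alpha := d(X, \text{supp}(\mathcal{N}_\alpha))$,
\begin{align*}
\bigl|\text{Tr}[O_X \mathcal{E}_0(t,s')(\mathcal{N}_\alpha(\Psi_{s'}))]\bigr| \leq \frac{e|X|\,\norm{O_X}\,\norm{\mathcal{N}_\alpha}_\diamond}{\mathcal{Z}}\exp\!\bigg(\frac{c_\text{LR}(t-s') - r_\alpha}{a}\bigg);
\end{align*}
the trivial bound $\norm{O_X}\norm{\mathcal{N}_\alpha}_\diamond$ also holds because $\mathcal{E}_0(t,s')$ is CPTP and $\norm{\mathcal{N}_\alpha}_{1\to 1}\leq\norm{\mathcal{N}_\alpha}_\diamond$.

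To complete the argument, I would split the $\alpha$-sum at the light-cone radius $r_\alpha = c_\text{LR}(t-s')$. For the inside-cone terms I use the trivial bound together with $\smallnorm{\mathcal{N}}_{\diamond,\star}\leq 1$ through
\begin{align*}
\sum_{\alpha:\,r_\alpha\leq c_\text{LR}(t-s')}\norm{\mathcal{N}_\alpha}_\diamond\;\leq\;\sum_{y\in B_{c_\text{LR}(t-s')}(X)}\;\sum_{\alpha:\,y\in\text{supp}(\mathcal{N}_\alpha)}\norm{\mathcal{N}_\alpha}_\diamond\;\leq\;|B_{c_\text{LR}(t-s')}(X)|,
\end{align*}
which in $d$ dimensions is bounded by $C_d\,|X|(1+c_\text{LR}(t-s'))^d$. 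For the outside-cone terms the Lieb-Robinson bound is exponentially small; reorganising the sum as a one-dimensional sum over the distance offset $u=r_\alpha-c_\text{LR}(t-s')\geq 0$ and using that the number of lattice sites at distance $c_\text{LR}(t-s')+u$ from $X$ grows only as a polynomial of degree $d-1$ in $(c_\text{LR}(t-s')+u)$, the exponential $e^{-u/a}$ makes the tail convergent with total contribution of order $|X|^2/\mathcal{Z}+|X|(c_\text{LR}(t-s'))^{d-1}/\mathcal{Z}$. Combining both contributions, multiplying by $\norm{O_X}$, defining $\nu_d(x)$ to be a monotone envelope of the resulting polynomial in $x$ (which is automatically of degree $d$ at infinity and $|X|$-independent after pulling out the $|X|^2$ prefactor), and finally integrating over $s'\in[s,t]$ and using monotonicity of $\nu_d$ to replace $(t-s')$ by $(t-s)$, gives the stated bound.

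The main obstacle is the bookkeeping in the final summation: one has to convert a sum over the possibly overlapping local terms $\mathcal{N}_\alpha$ into a sum over lattice sites without losing the volume-versus-exponential tradeoff, and package every source of polynomial growth in $c_\text{LR}(t-s)$ into a single $|X|$-independent envelope $\nu_d$ while keeping the $|X|^2$ prefactor clean (one $|X|$ coming from the Lieb-Robinson prefactor, the other from $|B_r(X)|\leq C_d|X|(1+r)^d$). Once the comparison of the light-cone ball volume with the convergent exponential tail is carried out carefully, the $\nu_d(x)=O(x^d)$ asymptotic is forced by the standard $d$-dimensional lattice geometry.
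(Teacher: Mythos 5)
Your proposal follows essentially the same route as the paper: a first-order Duhamel expansion of $\mathcal{E}_\delta$ in $\delta$, an application of the Lieb--Robinson bound of lemma~\ref{lemma:lieb_robinson} to each inserted $\mathcal{N}_\alpha$ (using its tracelessness), a reorganization of the $\alpha$-sum by distance from $X$ controlled via $\smallnorm{\mathcal{N}}_{\diamond,\star}\le 1$ with separate treatment inside and outside the light cone, and finally packaging the resulting polynomial growth into a monotone $\nu_d$ and integrating over $s'$ using its monotonicity. The paper implements the inside/outside split in one stroke via the piecewise function $f_x(n)=\max(1,e^{(x-n)/a})$ and the shell count $\abs{\{y:d(0,y)=n\}}\le \tfrac{2^d}{(d-1)!}(n+d-1)^{d-1}$, but this is the same accounting you sketch; your proof has no substantive gap, just the acknowledged bookkeeping left implicit.
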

\begin{proof}
    We will denote by $S_\alpha = \text{supp}(h_\alpha)$. We begin by noting that for any state $\Psi$,
    \begin{align}\label{eq:pert_theory_order_1}
    \abs{\text{Tr}[O_X \mathcal{E}_\delta(t, s)(\Psi)] - \text{Tr}[O_X \mathcal{E}_{\delta = 0}(t, s)(\Psi)]} \leq \delta \sum_\alpha \int_s^t \bigabs{\text{Tr}[O_X\mathcal{E}_{\delta=0}(t, s')\mathcal{N}_\alpha \mathcal{E}_\delta(s', s)(\Psi)] } ds'.
    \end{align}
    Furthermore, using lemma \ref{lemma:lieb_robinson}, we obtain that
    \begin{align}\label{eq:lr_upper_bound}
   \bigabs{\text{Tr}[O_X\mathcal{E}_{\delta=0}(t, s')\mathcal{N}_\alpha \mathcal{E}_\delta(s', s)(\Psi)] } \leq e\norm{\mathcal{N}_\alpha}_\diamond\norm{O_X}\abs{X}f_{c_\text{LR}(t - s')}(d(X, S_\alpha)), \text{ where }
    f_{x}(n) = \begin{cases}
    1 & \text{ if } n \leq x, \\
    \exp((x - n)/a), & \text{ otherwise,}
    \end{cases}
    \end{align}
    where $c_\text{LR}$ is the Lieb-Robinson velocity in Eq.~\eqref{lemma:lieb_robinson}. Using Eq.~\eqref{eq:pert_theory_order_1} together with Eq.~\eqref{eq:lr_upper_bound}, we obtain that
    \begin{align}\label{eq:sim_sum_1}
        \abs{\text{Tr}[O_X \mathcal{E}_\delta(t, s)(\Psi)] - \text{Tr}[O_X \mathcal{E}_{\delta = 0}(t, s)(\Psi)]} &\leq e \delta \smallnorm{O_X}\abs{X} \int_{s}^t ds' \bigg(\sum_{\alpha} \smallnorm{\mathcal{N}_\alpha}_\diamond f_{c_\text{LR}(t-s')}(d(X, S_\alpha)) \bigg), \nonumber\\ 
        &\leq e \delta \smallnorm{O_X}\abs{X}\int_s^t ds'\bigg(\sum_{n = 0}^\infty \bigg(\sum_{\alpha : d(S_\alpha, X) = n}\norm{\mathcal{N}_\alpha}_\diamond \bigg)f_{c_\text{LR}(t - s')}(n)\bigg).
    \end{align}
    Next, we note that for any $n \in \{0, 1, 2 \dots \}$,
    \begin{align}\label{eq:sim_sum_2}
    \sum_{\alpha : d(S_\alpha, X) = n} \norm{\mathcal{N}_\alpha}_\diamond \leq \sum_{\substack{x \in X \\ y: d(x, y) = n}} \sum_{\alpha: S_\alpha \ni y} \norm{\mathcal{N}_\alpha}_\diamond \leq\norm{\mathcal{N}}_{\diamond, \star}  \sum_{\substack{x \in X \\ y: d(x, y) = n}} 1 \leq  \abs{X} \sum_{y:d(0, y) = n} 1.
    \end{align}
    Furthermore, note that
    \begin{align}\label{eq:sim_sum_3}
        \sum_{y:d(0, y) = n } 1 = \bigabs{\bigg\{y \in \mathbb{Z}^d : \sum_{i = 1}^d \abs{y_i} = n \bigg\}} \leq 2^d \bigabs{\bigg\{y \in \mathbb{Z}^d : y_i \geq 0 \text{ and } \sum_{i = 1}^d {y_i} = n \bigg\}} = 2^d {n + d - 1 \choose d - 1} \leq \frac{2^d}{(d - 1)!}(n + d - 1)^{d - 1}.
    \end{align}
    Combining Eqs.~\eqref{eq:sim_sum_1}, \eqref{eq:sim_sum_2} and \eqref{eq:sim_sum_3}, we obtain that
    \begin{align}\label{eq:final_error_bound}
   \abs{\text{Tr}[O_X \mathcal{E}_\delta(t, s)(\Psi)] - \text{Tr}[O_X \mathcal{E}_{\delta = 0}(t, s)(\Psi)]} \leq \norm{O_X} \abs{X}^2\norm{\mathcal{N}}_\kappa \int_{s}^t \nu_d(c_\text{LR}(t - s'))ds',
    \end{align}
    where
    \[
    \nu_d(x) =\frac{2^{d + 1}e}{(d - 1)!} \sum_{n = 0}^\infty (n + d - 1)^{d - 1} f_x(n).
    \]
    We observe that, for any fixed $n$, $f_x(n)$ is a non-decreasing function of $x$ and consequently so is $\nu_d(x)$. Using this to upper bound $\int_s^t \nu_d(c_\text{LR}(t - s'))ds' \leq (t - s) \nu_d(c_\text{LR}(t - s))$ in Eq.~\eqref{eq:final_error_bound}, we obtain the lemma statement.
\end{proof}
\section{Noise robustness of trotter formulae}
We consider a target geometrically local Hamiltonian $H_0 = \sum_\alpha h_\alpha$ on a $d$-dimensional lattice $\mathbb{Z}^d$ and a local observable $O_X$ supported on region $X \subseteq \mathbb{Z}^d$. We will assume that $H$ is a $(a_0, \mathcal{Z}, J)-$geometrically local intensive Hamiltonian i.e., for all $\alpha$, $\norm{h_\alpha}\leq J, \text{diam}(h_\alpha) \leq a_0$ and $\abs{\{\alpha': \text{supp}(h_\alpha) \cap \text{supp}(h_{\alpha'})\}} \leq \mathcal{Z}$. In general, we will assume that we can split $H$ as
\[
H_0 = \sum_{j =1}^K H_j \text{ where }H_j = \sum_{\alpha \in\mathcal{A}_j} h_\alpha,
\]
where $K$ is independent of the system size and $[h_\alpha, h_\beta] = 0$ if $\alpha, \beta \in \mathcal{A}_j$ for some $j \in \{1, 2 \dots K\}$ i.e., each of the Hamiltonians $H_j$ are themselves sum of commuting terms. For example, in 1D, a nearest neighbour Hamiltonian $H_0 = \sum_x h_{x, x + 1}$ can be decomposed into $H_0 = H_e + H_o$, where
\[
H_o = \sum_{x} h_{2x + 1, 2x + 2} \text{ and } H_e = \sum_x h_{2x, 2x + 1}.
\]
Similarly, a nearest neighbour Hamiltonian in 2D, $H_0 = \sum_{x, y}\big( h_{x, y; x + 1, y} + h_{x, y; x, y + 1}\big)$ can be decomposed as $H = H_{e}^{X} + H_o^{X} + H_e^{Y} + H_o^{Y}$ where
\[
H_o^X = \sum_{x, y} h_{2x + 1, y; 2x + 2, y}, H_e^X = \sum_{x, y} h_{2x, y; 2x + 1, y}, H_o^Y = \sum_{x, y} h_{x, 2y + 1; x, 2y + 2} \text{ and }H_e^Y = \sum_{x, y} h_{x, 2y; x, 2y + 1}.
\]

\noindent Following the formalism developed in Ref.~\cite{childs2019nearly} but with only slight modification to account for the fact that $H$ is written as a sum of more than 2 operators, we next recap the definition of a trotter formula.
\begin{definition}\label{def:trotter_formula}
    $\{r_i \in \{1, 2 \dots K\}\}_{i \in [1:M]}$ and $\{x_i \in \mathbb{R}\}_{i \in[1:K]}$ specify a $p^\text{th}$ order trotter formula $S(\tau) = \prod_{j = 1}^K \exp(-ix_j  H_{r_j} \tau)$ for the exponential $\exp(-i (H_1 + H_2 \dots H_K) \tau)$ if for all hermitian operators $H_1, H_2 \dots H_K$, 
    \[
\frac{d^k}{d\tau^k} S(\tau)\bigg |_{\tau = 0} - \bigg(-i \sum_{j = 1}^K H_j\bigg)^k = 0 \ \forall \ k \in \{0, 1 \dots p\}.
    \]
\end{definition}
A very useful technical tool that was developed in Ref.~\cite{childs2019nearly} was the local error sum representation, which we recap in the following lemma.
\begin{lemma}[Local-sum representation of Trotter Error, Ref.~\cite{childs2019nearly}]\label{lemma:local_error_sum}
    Suppose $\{\alpha_i \in [1:K]\}_{i \in [1:M]}$ and $\{x_i \in \mathbb{R}\}_{i \in[1:K]}$ specify a $p^\text{th}$ order trotter formula $\mathcal{S}(\tau) = \prod_{j = 1}^K \exp(-ix_j  H_{r_j} \tau)$ for the exponential of $H = \sum_{j = 1}^K H_j$, then
    \[
    S(\tau) - \exp(-iH \tau) =  \frac{1}{(p - 1)!} \int_{0 \leq \tau_2 \leq \tau_1 \leq \tau}\exp(-iH(\tau - \tau_1)) J^{(p)}(\tau_2) (\tau_1 - \tau_2)^{p - 1} d\tau_1 d\tau_2,
    \]
    where
    \begin{align}\label{eq:expression_J}
    J(t) = -i\sum_{j = 2}^M x_j \bigg[\bigg(\prod_{j' = 1}^{j - 1} \exp(-i x_{j'} H_{\alpha_{j'}} t)\bigg) H_{\alpha_j}\bigg(\prod_{j' = 1}^{j - 1} \exp(-i x_{j'}H_{\alpha_{j'}} t)\bigg)^\dagger - H_{\alpha_j}\bigg].
    \end{align}
\end{lemma}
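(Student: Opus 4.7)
The plan is to derive the representation from a Duhamel-type integral identity combined with Taylor's theorem with integral remainder applied to $J$, exploiting the vanishing of its low-order derivatives at the origin that is forced by the $p$-th order Trotter condition. First I would differentiate the product $S(\tau) = \prod_{j=1}^M e^{-ix_j H_{\alpha_j}\tau}$ directly to obtain the ODE $S'(\tau) = A(\tau) S(\tau)$ with $S(0) = I$, where $A(\tau) = -i\sum_{j} x_j U_j(\tau) H_{\alpha_j} U_j(\tau)^\dagger$ and $U_j(\tau) = \prod_{j'=1}^{j-1} e^{-ix_{j'} H_{\alpha_{j'}} \tau}$. The first-order Trotter condition (Definition~\ref{def:trotter_formula} with $k=1$) forces $\sum_j x_j H_{\alpha_j} = H$, so $A(\tau) = -iH + J(\tau)$ with $J$ as defined in the lemma.

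Next, I would compute $\frac{d}{d\tau_1}[e^{-iH(\tau-\tau_1)} S(\tau_1)] = e^{-iH(\tau-\tau_1)} J(\tau_1) S(\tau_1)$ (the $-iH$ contributions from differentiating the two factors cancel), and integrate from $0$ to $\tau$ to obtain the first-order Duhamel representation $S(\tau) - e^{-iH\tau} = \int_0^\tau e^{-iH(\tau-\tau_1)} J(\tau_1) S(\tau_1)\, d\tau_1$. The key observation is that $J^{(k)}(0) = 0$ for $k = 0, 1, \ldots, p-1$: since the $p$-th order condition gives $S(\tau) - e^{-iH\tau} = O(\tau^{p+1})$, an inductive Taylor matching on both sides of the Duhamel identity forces the first $p-1$ derivatives of $J$ at the origin to vanish. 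Taylor's theorem with integral remainder then gives $J(\tau_1) = \frac{1}{(p-1)!}\int_0^{\tau_1} (\tau_1 - \tau_2)^{p-1} J^{(p)}(\tau_2)\, d\tau_2$, and substituting this into the Duhamel representation yields the claimed identity after the appropriate reorganization of the $S(\tau_1)$ factor following the manipulation of Ref.~\cite{childs2019nearly}.

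The main obstacle is the inductive verification that the $p$-th order Trotter condition propagates through the Duhamel identity into vanishing derivatives of $J$ at the origin. This requires careful bookkeeping of the recursive dependence between $S$ and $J$ and a clean handling of the non-commutativity between the conjugating unitaries $U_j(\tau)$ and the target Hamiltonian $H$; an equivalent route is to check the vanishing of $J^{(k)}(0)$ directly from the explicit commutator structure of $J$ via a Hadamard-type expansion of the inner conjugations. Once the derivative vanishing is in hand, the remaining steps are a routine application of Taylor's theorem and an interchange of the order of integration via Fubini to obtain the stated form.
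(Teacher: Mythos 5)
Your proposal is correct and follows essentially the same route as the paper: derive the ODE $S'(\tau) = (-iH + J(\tau))S(\tau)$, form the Duhamel identity $S(\tau)-e^{-iH\tau}=\int_0^\tau e^{-iH(\tau-\tau_1)}J(\tau_1)S(\tau_1)\,d\tau_1$, show the low-order derivatives of $J$ vanish at the origin, and then apply Taylor's theorem with integral remainder. The one step you flag as the ``main obstacle,'' namely establishing $J^{(k)}(0)=0$ for $k\in\{0,\dots,p-1\}$, the paper handles in one line rather than via the inductive Taylor matching or Hadamard expansion you sketch: setting $\mathcal{R}(\tau)=S'(\tau)+iHS(\tau)$, Definition~\ref{def:trotter_formula} gives $S^{(k)}(0)=(-iH)^k$ for $k\leq p$, so $\mathcal{R}^{(k)}(0)=(-iH)^{k+1}+iH(-iH)^{k}=0$ for $k\leq p-1$; since $J(\tau)=\mathcal{R}(\tau)S^\dagger(\tau)$, the Leibniz rule immediately forces $J^{(k)}(0)=0$ for $k\leq p-1$, with no recursive bookkeeping needed. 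Your argument is sound but would benefit from replacing that step with this direct computation.
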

\begin{proof}
We begin by defining
\begin{align}\label{eq:def_J_deriv}
\mathcal{J}(\tau)  = \bigg(\frac{d}{d\tau} \mathcal{S}(\tau)\bigg)\mathcal{S}^\dagger(\tau) + iH.
\end{align}
Note that, from the definition of the Trotter formula [Definition \ref{def:trotter_formula}], it follows that $H = \sum_{j = 1}^K x_j H_{\alpha_j}$. Using this together with Eq.~\ref{eq:def_J_deriv}, we then obtain the expression in Eq.~\ref{eq:expression_J}. Next, we note that since by definition $S^{(k)}(0) = (-iH)^k$ for $k \in \{0, 1, 2 \dots p\}$, $\mathcal{R}(\tau) = \mathcal{S}'(\tau) + iH \mathcal{S}(\tau)$ satisfies $\mathcal{R}^{(k)}(0) = 0$ for $k \in \{0, 1, 2 \dots p- 1\}$ and since $\mathcal{J}(\tau) = \mathcal{R}(\tau) \mathcal{S}^\dagger(\tau)$, it also follows that $\mathcal{J}^{(k)}(0) = 0$ for $k \in \{0, 1, 2 \dots p - 1\}$. Therefore, from the Taylor's remainder theorem, it follows that
\[
\mathcal{J}(\tau) = \frac{1}{(p - 1)!} \int_0^\tau \mathcal{J}^{(p)}(s) (\tau - s)^{p - 1} ds. 
\]
Finally, integrating Eq.~\ref{eq:def_J_deriv}, we then obtain that
\begin{align}
\mathcal{S}(\tau) - \exp(-iH \tau) &= \int_0^{\tau}  \exp(-iH(\tau - \tau_1)) \mathcal{J}(\tau_1) \mathcal{S}(\tau_1) d\tau_1, \nonumber\\
 &= \frac{1}{(p - 1)!} \int_{0 \leq \tau_2 \leq \tau_1 \leq \tau} \exp(-iH(\tau - \tau_1)) \mathcal{J}^{(p)}(\tau_2) (\tau_1 - \tau_2)^{p - 1} d\tau_1 d\tau_2,
\end{align}
which proves the lemma.
\end{proof}
Using lemma \ref{lemma:local_error_sum} together with the Lieb-Robinson bounds (lemma \ref{lemma:lieb_robinson}), we can establish the following lemma which upper bounds the trotterization error for local observables. Previously, Ref.~\cite{childs2019nearly} established that, for dynamics of geometrically local models on $N$ qubits and for evolution time $\tau$, a $p^\text{th}$ order trotter formula approximates the full many-body state obtained in the geometrically local model trace-norm error $O(N\tau \varepsilon^p)$. In the lemma below we show that the error in local observables, instead of the full many-body state, can be upper bounded by $O(\tau^{d + 1}\varepsilon^p)$ which is uniform in the system size $N$.

\begin{replemma}{lemma:trotter_local_obs_main}[Trotter error for local observables, Formal]\label{lemma:trotter_local_obs}
 If $S(\varepsilon)$ is a $p^\text{th}$ order trotter formula (Definition \ref{def:trotter_formula}), then for an observable $O_X$ supported on $X$ and any initial state $\rho(0)$, the $p^\textnormal{th}$ trotterization error $\Delta^{(p)} = \bigabs{\textnormal{Tr}[O_X S^T(\varepsilon)\rho(0)S^{\dagger T}(\varepsilon)] - \textnormal{Tr}[O_X \exp(-iH\tau) \rho(0) \exp(iH\tau)]}$ satisfies
  \[
   \Delta^{(p)} \leq \frac{a_0^{(d - 1)(p - 1)}}{\mathcal{Z}^2} C^{(p)}\abs{X}^2 \norm{O_X} c_\textnormal{LR}\tau \nu_d(c_\textnormal{LR}\tau) (c_\textnormal{LR}\varepsilon)^{p} \leq O(\tau^{d+1} \varepsilon^{p}),
  \]
  where $\varepsilon$ is the trotter time-step, $T = \tau / \varepsilon$ is the total number of time-steps in the trotterized evolution, $c_\textnormal{LR}$ is the Lieb-Robinson velocity appearing in lemma \ref{lemma:lieb_robinson}, $C^{(p)}$ is a constant that depends only on the trotter formula but is independent of the system-size $N$, evolution time $\tau$ and $\varepsilon$ and $\nu_d(l) \leq O(l^d)$ as $l \to \infty$.
\end{replemma}
\begin{proof}
    We will combine Lieb-Robinson bounds with the local error sum representation. Denoting by $\mathcal{S}(\varepsilon) = S(\varepsilon) (\cdot) S^\dagger(\varepsilon) \equiv S_l(\varepsilon)S_r^\dagger(\varepsilon)$, we note that $\mathcal{S}(\varepsilon)$ is simply a trotterization of unitary $\exp(-i\mathcal{C}_H \varepsilon)$. Consequently, we can apply lemma \ref{lemma:local_error_sum} in the vectorized notation. We begin with the telescoping sum 
    \begin{align}\label{lemma:telescope_sum}
    &\text{Tr}(O_X \exp(-iH\tau)\rho(0)\exp(-iH\tau)) - \text{Tr}(O_X S^T(\varepsilon)\rho(0)S^{\dagger T}(\varepsilon)) \nonumber\\
    &\qquad =\vecbra{O_X} \big(\exp(-i\mathcal{C}_H \tau) - \mathcal{S}^T(\varepsilon)\big) \vecket{\rho(0)} = \sum_{n = 1}^T \vecbra{O_X}\exp(-i(n - 1)\varepsilon \mathcal{C}_H ) \big(\exp(-i\varepsilon \mathcal{C}_H \big) - \mathcal{S}(\varepsilon)\big) \mathcal{S}^{T - n - 1}(\varepsilon) \vecket{\rho}
    \end{align}
    We now use lemma \ref{lemma:local_error_sum} to obtain
    \begin{align}
    &\vecbra{O}\big(\exp(-i\mathcal{C}_Ht) - \mathcal{S}^T(\varepsilon)\big)\vecket{\rho}  \nonumber\\
    &\qquad = \frac{1}{(p - 1)!}\sum_{n = 1}^T \int_0^{\varepsilon} \int_0^{\tau_1} (\tau_1 - \tau_2)^{p - 1}\vecbra{O_X}\exp(-i\mathcal{C}_{H} (n \varepsilon - \tau_1))  \mathcal{J}^{(p)}(\tau_2)\mathcal{S}(\tau_1) \mathcal{S}^{T - n - 1}(\varepsilon) \vecket{\rho}d\tau ds,
    \end{align}
    from which we obtain that
    \begin{align}\label{eq:observable_remainder_ub}
        &\abs{\vecbra{O_X}\exp(-i\mathcal{C}_Ht) - \mathcal{S}^T(\varepsilon)\vecket{\rho(0)}} \leq 
        \frac{1}{(p - 1)!}\sum_{n = 1}^T \int_0^\varepsilon \int_0^{\tau_1} (\tau_1 - \tau_2)^{p - 1} \norm{\vecbra{O_X} \exp(-i\mathcal{C}_H(n\varepsilon - \tau_1)) \mathcal{J}^{(p)}(\tau_2)} d\tau_1 d\tau_2.
    \end{align}
    Next, we observe that $\mathcal{J}^{(p)}$ can itself be expressed as a sum of local commutators. To see this, we first rewrite $\mathcal{J}(t)$ in the form
    \[
    \mathcal{J}(t) = -i\sum_{j = 2}^M x_j \textnormal{Ad}_{\exp(-ix_{1} \mathcal{C}_{H_{r_{1}}}t)} \textnormal{Ad}_{\exp(-ix_{2} \mathcal{C}_{H_{r_{2}}}t)} \dots \textnormal{Ad}_{\exp(-ix_{j - 1} \mathcal{C}_{H_{r_{j - 1}}}t)} (\mathcal{C}_{H_{r_j}}),
    \]
    where, for a superoperator $\mathcal{U}$, we define $\text{Ad}_\mathcal{U}(\mathcal{X}) = \mathcal{U}\mathcal{X} \mathcal{U}^\dagger$. We then obtain that
    \begin{align}\label{eq:deriv_J_t_adj}
    &\frac{d^p}{dt^p} \mathcal{J}(t) = (-i)^{p + 1}\sum_{j = 2}^M x_j \sum_{\substack{n_1, n_2 \dots n_{j - 1} \\ n_1 + n_2 \dots n_{j- 1} = p} }\frac{(j - 1)!}{n_1 ! n_2 ! \dots n_{j - 1}!} x_1^{n_1} x_2^{n_2} \dots x_{n_{j - 1}}^{n_{j - 1}}\times \nonumber\\
    &\qquad \qquad \bigg(\text{ad}_{\mathcal{C}_{H_{\alpha_1}}}^{n_1} \textnormal{Ad}_{\exp(-ix_{1} \mathcal{C}_{H_{r_{1}}}t)}\text{ad}_{\mathcal{C}_{H_{r_1}}}^{n_2}\textnormal{Ad}_{\exp(-ix_{2} \mathcal{C}_{H_{r_{2}}}t)} \dots \text{ad}^{n_{j - 1}}_{\mathcal{C}_{H_{r_{j-1}}}}\textnormal{Ad}_{\exp(-ix_{j - 1} \mathcal{C}_{H_{r_{j - 1}}}t)} (\mathcal{C}_{H_{r_j}})\bigg),
    \end{align}
    where, for a superoperators $\mathcal{X}$ and $\mathcal{Y}$, we define $\text{ad}_{\mathcal{Y}}(\mathcal{X}) = [\mathcal{Y}, \mathcal{X}]$. We next recall that each of the Hamiltonians $H_{\alpha}$ are themselves sums of geometrically local terms with disjoint support: $H_\alpha = \sum_{\beta \in \mathcal{A}_\alpha} h_{\beta}$. We can then express
    \begin{align*}
        \text{ad}_{\mathcal{C}_{H_{r_1}}}^{n_1} \textnormal{Ad}_{\exp(-ix_{1} \mathcal{C}_{H_{r_{1}}}t)}\text{ad}_{\mathcal{C}_{H_{r_1}}}^{n_2}\textnormal{Ad}_{\exp(-ix_{2} \mathcal{C}_{H_{r_{2}}}t)} \dots \text{ad}^{n_{j - 1}}_{\mathcal{C}_{H_{r_{j-1}}}}\textnormal{Ad}_{\exp(-ix_{j - 1} \mathcal{C}_{H_{r_{j - 1}}}t)} (\mathcal{C}_{H_{r_j}}) = \sum_{\beta \in \mathcal{A}_{\alpha_j}} \mathcal{N}_{\beta},
    \end{align*}
    where
    \begin{align}\label{eq:def_n_beta}
    \mathcal{N}_{\beta} =  \text{ad}_{\mathcal{C}_{H_{r_1}}}^{n_1} \textnormal{Ad}_{\exp(-ix_{1} \mathcal{C}_{H_{r_{1}}}t)}\text{ad}_{\mathcal{C}_{H_{r_1}}}^{n_2}\textnormal{Ad}_{\exp(-ix_{2} \mathcal{C}_{H_{r_{2}}}t)} \dots \text{ad}^{n_{j - 1}}_{\mathcal{C}_{H_{r_{j-1}}}}\textnormal{Ad}_{\exp(-ix_{j - 1} \mathcal{C}_{H_{r_{j - 1}}}t)} (\mathcal{C}_{h_{\beta}}) 
    \end{align}
    Now, we observe that $\mathcal{N}_\beta$ is a super-operator whose image contains only traceless operators i.e.~for any operator $X$, $\textnormal{Tr}(\mathcal{N}_\beta (X)) = 0$ or equivalently, in the vectorized notation, $\vecbra{I}\mathcal{N}_\beta = 0$. To see this, we note that for any superoperator $\mathcal{N}$ and operator $V$, if $\vecbra{I}\mathcal{N} = 0$ then $\vecbra{I}\text{Ad}_{\exp(-i\mathcal{C}_V)}(\mathcal{N}) = \vecbra{I} \exp(-i\mathcal{C}_V) \mathcal{N} \exp(i\mathcal{C}_V) = \vecbra{I}\mathcal{N}\exp(i\mathcal{C}_V) = 0$. Furthermore, for any superoperator $\mathcal{N}$ and operator $V$, if $\vecbra{I}\mathcal{N} = 0$ then $\vecbra{I}\text{ad}_{\mathcal{C}_V}(\mathcal{N}) = \vecbra{I}\mathcal{C}_V \mathcal{N} - \vecbra{I}\mathcal{N}\mathcal{C}_V = 0$. Since $\mathcal{N}_{\beta}$ is obtained from $\mathcal{C}_{h_\beta}$, which satisfies $\mathcal{C}_{h_\beta}\vecket{I} = 0$, on applying a sequence of operations of the form $\text{Ad}_{\exp(-i \mathcal{C}_V)}$ or $\text{ad}_{\mathcal{C}_V}$, we obtain that $\vecbra{I}\mathcal{N}_\beta = 0$.
    
    Next, we observe that $\mathcal{N}_{\beta}$ is a super-operator that is geometrically local. To see this, note that $\mathcal{C}_{h_\beta}$ is a geometrically local super-operator. Furthermore, since $H_{\alpha_k}$ are themselves sum of geometrically local and commuting terms, $\text{Ad}_{\exp(-ix_k \mathcal{C}_{H_{\alpha_k}})}(\mathcal{X})$ and $\text{ad}_{\mathcal{C}_{H_{\alpha_k}}}(\mathcal{X})$ are both geometrically local super-operator if $\mathcal{X}$ is a geometrically local super-operator. In particular, if the diameter of the support of local terms in $H_{\alpha_k}$ is at-most $a_0$, then the diameter of the support of $\text{Ad}_{\exp(-ix_k \mathcal{C}_{H_{\alpha_k}})}(\mathcal{X})$ and $\text{ad}_{\mathcal{C}_{H_{\alpha_k}}}(\mathcal{X})$ is at-most $\text{diam}(X) + 2a_0$. Consequently, we obtain that
    \begin{align}\label{eq:diameter_recursion}
    &\text{diam}\bigg(\bigg(\prod_{q= k}^{j - 1}\text{ad}_{\mathcal{C}_{H_{\alpha_{q}}}}^{n_q} \text{Ad}_{\exp(-ix_q \mathcal{C}_{H_{\alpha_q}}t)}\bigg)\mathcal{C}_{h_\beta}\bigg) \leq (2(j - k) + 1)a_0 + 2a_0 \sum_{l = k}^{j - 1} n_l \leq (2p + 2j - 1)a_0,
    \end{align}
    where we have used the fact that, as per Eq.~\ref{eq:deriv_J_t_adj}, $n_1 + n_2 + \dots + n_{j - 1} = p$.
    Therefore, $\mathcal{N}_\beta$ defined in Eq.~\ref{eq:def_n_beta} is a geometrically local operator whose support satisfies $\textnormal{diam}(\mathcal{N}_\beta) \leq (2p +2j - 1)a_0$. Finally, we can also estimate the norm $\smallnorm{\mathcal{N_\beta}}_\diamond$: we note that an application of $\text{Ad}_{\exp(-ix_{k}\mathcal{C}_{H_{\alpha_k}}t)}$ on a super-operator does not change its diamond norm. Furthermore, since by assumption every local term in the Hamiltonian, $h_\alpha$, commutes with all but $\mathcal{Z}$ local terms and $\norm{h_\alpha}\leq J$, we obtain that for any geometrically local super-operator $\mathcal{X}$, $\smallnorm{\text{ad}_{\mathcal{C}_{H_{\alpha_k}}}(\mathcal{X})}_\diamond \leq 4J\mathcal{Z}\abs{\text{supp}(\mathcal{X})} \leq 4J\mathcal{Z}(\text{diam}(\mathcal{X}))^d \norm{\mathcal{X}}_\diamond$. Using this together with Eq.~\eqref{eq:diameter_recursion}, we obtain that
    \begin{align}\label{eq:n_beta_norm}
        \norm{\mathcal{N}_\beta}_\diamond \leq 2J(4J\mathcal{Z}a_0^d (2p + 2j - 1)^{d})^{n_1 + n_2 + \dots n_{j - 1}} \leq  2J(4J\mathcal{Z} a_0^d)^p (2p+ 2j - 1)^{pd}.
    \end{align}
    Next, from Eq.~\eqref{eq:observable_remainder_ub}, we obtain that 
    \begin{align}\label{eq:trotter_formula_manipulation}
        &\abs{\vecbra{O}\big(\exp(-i\mathcal{C}_H t) - \mathcal{S}^T(\varepsilon)\big) \vecket{\rho}}\nonumber\\
        &\qquad \leq \sum_{n = 1}^T \sum_{j = 2}^M \sum_{\substack{n_1, n_2 \dots n_{j - 1} \geq 0 \\ n_1 + n_2 \dots n_{j - 1} = p}} \sum_{\beta}\frac{(j - 1)!}{n_1 ! n_2 ! \dots n_{j - 1}!}\abs{x_j} \abs{x_1}^{n_1} \abs{x_2}^{n_2} \dots \abs{x_{n_{j - 1}}}^{n_{j - 1}} \times \nonumber\\
        &\qquad \qquad \qquad \qquad \qquad  \frac{1}{(p - 1)!} \int_0^\varepsilon \int_0^{\tau_1} (\tau_1 - \tau_2)^{p - 1} \norm{\vecbra{O} \exp(-i\mathcal{C}_H(n\varepsilon - \tau_1)) \mathcal{N}_\beta}d\tau_1 d\tau_2,\nonumber \\
        &\qquad\leq \frac{a_0^{(d - 1)p - 1}\abs{X}\norm{O_X}}{2e^p \mathcal{Z}^2 (p + 1)!}(c_\text{LR}\varepsilon)^{p + 1}\sum_{n = 1}^T \sum_{j = 2}^M \sum_{\substack{n_1, n_2 \dots n_{j - 1} \geq 0 \\ n_1 + n_2 \dots n_{j - 1} = p}}\frac{(j - 1)!}{n_1 ! n_2 ! \dots n_{j - 1}!}\abs{x_j} \abs{x_1}^{n_1} \abs{x_2}^{n_2} \dots \abs{x_{n_{j - 1}}}^{n_{j - 1}}(2p+ 2j - 1)^{pd} \times \nonumber\\
        &\qquad \qquad \qquad \qquad \qquad \qquad  \sum_{\beta \in \mathcal{A}_{r_j}}  \max\bigg(1, \exp\bigg(\frac{1}{a}(c_\text{LR} T \varepsilon - d(X, \text{supp}(\mathcal{N}_\beta)) \bigg)\bigg),
    \end{align}
    where we have used lemma \ref{lemma:lieb_robinson} and the fact that $n\varepsilon - \tau_1 \leq T\varepsilon $. We note that from the reverse triangle inequality that
    \begin{align}\label{eq:distance_relation}
    d(X, \text{supp}\big(\mathcal{N}_{\beta}\big)) \geq d(X, \text{supp}(h_\beta)) - (n_1 + n_2 \dots n_{j - 1} + j - 1) a_0 = d(X, \text{supp}(h_\beta)) - (p + j - 1) a_0,
    \end{align}
    and consequently
    \begin{align}\label{eq:lr_sum_trotter_first}
        \sum_{\beta \in \mathcal{A}_{r_j}}  \max\bigg(1, \exp\bigg(\frac{1}{a_0}(c_\text{LR} t - d(X, \text{supp}(\mathcal{N}_\beta)) \bigg)\bigg) &\leq e^{(p - j + 1)} {\sum_\beta  \max\bigg(1, \exp\bigg(\frac{1}{a_0}(c_\text{LR} T\varepsilon - d(X, \text{supp}(h_\beta)) \bigg)\bigg)},\nonumber \\
        &\leq e^{p -j + 1} \sum_{n = 0}^{\infty} \max(1, \exp(c_\text{LR} T\varepsilon - n))\abs{\{\beta: d(X, \text{supp}(h_\beta)) = n\}} .
    \end{align}
    Note that
    \begin{align}
        \abs{\{\beta : d(X, \text{supp}(h_\beta)) = n\}} \leq \abs{X} \abs{\text{supp}(h_\beta)} \abs{\{x : d(x, 0) = n\}} \leq a_0^d \abs{X} \frac{2^d}{(d - 1)!} (n  + d - 1)^{d - 1},
    \end{align}
    where, in the last step, we have used Eq.~\eqref{eq:sim_sum_3} and the fact that $\abs{\text{supp}(h_\beta)} \leq (\text{diam}(h_\beta))^d \leq a_0^d$. From Eq.~\eqref{eq:lr_sum_trotter_first}, we then obtain
    \begin{align}\label{eq:lr_sum_trotter}
        \sum_{\beta \in \mathcal{A}_{r_j}}  \max\bigg(1, \exp\bigg(\frac{1}{a_0}(c_\text{LR} t - d(X, \text{supp}(\mathcal{N}_\beta)) \bigg)\bigg) \leq e^{p - j + 1} a_0^d \abs{X}^2 \underbrace{\frac{2^d}{(d - 1)!} \sum_{n = 0}^\infty (n + d - 1)^{d - 1}\text{max}(1, \exp(c_\text{LR}T\varepsilon - n))}_{\nu_d(c_\text{LR}T\varepsilon)}
    \end{align}
    where it can be noted that $\nu_d(l) \leq O(l^d)$. Returning to Eq.~\eqref{eq:observable_remainder_ub} and using Eq.~\eqref{eq:lr_sum_trotter}, we then obtain that
    \begin{align}\label{eq:trotter_final}
        \abs{\vecbra{O}\big(\exp(-i\mathcal{C}_H t) - \mathcal{S}^T(\varepsilon)\big) \vecket{\rho}} \leq \frac{a_0^{(d - 1)(p - 1)}}{\mathcal{Z}^2} C^{(p)}\abs{X}^2\norm{O_X} \nu_d(c_\text{LR} T \varepsilon) (c_\text{LR}\varepsilon)^{p + 1} T,
    \end{align}
    where
    \begin{align*}
        C^{(p)} = \frac{e^{-p}}{(p + 1)!} \sum_{j = 2}^M \sum_{\substack{n_1, n_2 \dots n_{j - 1} \geq 0 \\ n_1 + n_2 \dots n_{j - 1} = p}}\frac{(j - 1)!}{n_1 ! n_2 ! \dots n_{j - 1}!}\abs{x_j} \abs{x_1}^{n_1} \abs{x_2}^{n_2} \dots \abs{x_{n_{j - 1}}}^{n_{j - 1}}(2p+ 2j - 1)^{pd} e^{p - j + 1},
    \end{align*}
    is a constant independent of system size $N$, the trotter time-step $\varepsilon$ or the total time $t$ that depends only on the trotter formula. Setting $T\varepsilon = t$ in Eq.~\eqref{eq:trotter_final}, we obtain the lemma statement.
\end{proof}
\noindent \emph{Proof of proposition 1}. Finally, we consider the noisy setting --- given target Hamiltonian $H_0$ to be simulated for a time $\tau$, we trotterize it to effectively obtain a time-dependent Hamiltonian $H(t)$ which is then implemented on the simulator. As described in the main text, in the presence of noise, we model the dynamics of the simulator with a master equation
\[
\frac{d}{dt} \rho(t) = \mathcal{L}_\gamma(t) \rho(t) \text{ where } \mathcal{L}_\gamma(t) = -i[H(t), \cdot ] + \delta \sum_\alpha \mathcal{N}_\alpha,
\]
where $\mathcal{N}_\alpha$ are geometrically local Lindbladians which are assumed to satisfy $\norm{\mathcal{N_\alpha}}_\diamond \leq 1$. The simulator will run for a time $t_\text{sim}$ --- when using the $p^\text{th}$ order Trotter formula with trotter step $\varepsilon$, the simulator time $t_\text{sim} = \Theta(\tau / \varepsilon)$ for any fixed $p$. We will consider a local observable $O$, and estimate the error between the simulated observable $\mathcal{O}_\text{sim} = \text{Tr}(O \mathcal{E}_\gamma(t_\text{sim}, 0)\rho(0))$, where $\mathcal{E}_\gamma(t, s)= \mathcal{T}\exp(\int_s^{t} \mathcal{L}_\gamma(s') ds')$ and the target observable $\mathcal{O}_\text{target} = \text{Tr}(O\exp(-iH\tau)\rho(0)\exp(iH\tau))$. We begin by using the triangle inequality to obtain
    \[
    \abs{\mathcal{O}_\text{sim}- \mathcal{O}_\text{target}} \leq \abs{\mathcal{O}_\text{sim}- \mathcal{O}_\text{sim}|_{\gamma = 0}} + \abs{\mathcal{O}_{\text{sim}}|_{\gamma = 0} - \mathcal{O}_\text{target}}.
    \]
    Note that $\abs{\mathcal{O}_\text{sim}|_{\gamma = 0}- \mathcal{O}_\text{target}}$ is simply the trotterization error, which has been upper bounded by $O(\tau^{d+1}\varepsilon^p)$ in lemma \ref{lemma:trotter_local_obs}. Since the simulator Hamiltonian $H(t)$ has an $O(1)$ Lieb-Robinson velocity by assumption, we use lemma \ref{lemma:lieb_robinson_perturbation_theory} to obtain $\abs{\mathcal{O}_\text{sim}|_{\gamma = 0} - \mathcal{O}(0)} \leq O(\gamma \tau^{d + 1} / \varepsilon^{d + 1})$. The total error $\abs{\mathcal{O}_\text{sim}|_{\gamma = 0} - \mathcal{O}_\text{target}}$ is then upper bounded by
    \[
    \abs{\mathcal{O}_\text{sim}- \mathcal{O}_\text{target}} \leq O(\tau^{d + 1}\varepsilon^p) + O\bigg(\tau^{d + 1}\frac{\gamma}{\varepsilon^{d + 1}}\bigg).
    \]
    Clearly, a choice of $\varepsilon = \Theta(\gamma^{1 / (p + d+ 1)})$ optimizes this upper bound as $\gamma, \varepsilon \to 0$, which yields
    \[
    \abs{\mathcal{O}_\text{sim} - \mathcal{O}_\text{target}} \leq O\big(\tau^{d + 1}\gamma^{p/(p + d + 1)}\big),
    \]
    proving the proposition.\(\hfill\) $\square$

\subsection{Lower fault-tolerance overhead for local observables}
Here, we show that local observables can be simulated on a fault tolerant quantum computer with an overhead that scales slower compared to when we want to simulate the full state. We begin with recalling some basics of fault tolerance with concatenated codes that can be found in standard quantum error correction literature \cite{gottesman2024surviving}.

We will restrict ourselves here to a special model of noise called stochastic noise \cite{gottesman2024surviving}. Consider a quantum circuit formed with single and two qubit gates, single qubit state preparation and measurements where we are finally considering the probability distribution of the measured bit-strings $q(x)$. To describe the action of stochastic noise, we will introduce a fault path, where certain locations in the circuit (single qubit gate, two qubit gate, single qubit state preparation and single qubit measurements) are faulty i.e., in the fault path, instead of the correct operation being performed in the faulty locations, an incorrect operation is being performed (with possible interaction with some set of environment qubits). The noisy probability distribution $q_\text{noisy}(x)$ is expressible as
\[
q_\text{noisy}(x) = \sum_F p(F) q(x | F),
\]
where $q(x|F)$ is the probability distribution of the output bits measured when the fault path $F$ acts and $p(F)$ is the probability of the fault path. We will assume that there is a small constant $\xi$, which will act as the gate failure probability, such that
\begin{align}\label{eq:noise_model}
\forall F: \sum_{F': F \subseteq F' } p(F') \leq \xi^\abs{F},
\end{align}
which essentially captures the physical expectation that larger fault paths are exponentially less likely. Suppose now that the quantum circuit under consideration is geometrically local of depth $T$ and we only perform measurement on qubits in a geometrically local set $X$. We denote by $\text{LC}(X)$ the set of locations in the lightcone of $X$. Then, for any fault path $F$ which lies entirely outside the light cone of the measured qubits, we obtain that $q(x|F) = q(x)$. Hence,
\begin{align}
    q_\text{noisy}(x) = \sum_{F: F\cap \text{LC}(X) =  \emptyset}p(F) q(x) + \sum_{F: F\cap \text{LC}(X) \neq  \emptyset}p(F) q(x|F),
\end{align}
and thus
\begin{align}
    \norm{q_\text{noisy} - q} = \sum_x\abs{q_\text{noisy}(x) - q(x)} \leq 2\bigg(\sum_{F: F\cap \text{LC}(X) \neq \emptyset} p(F)\bigg) \leq 2 \sum_{v \in \text{LC}(X)} \sum_{F: F \ni v} p(F) \leq \abs{\text{LC}(X)} \xi \leq O(T^{d + 1} \xi),
\end{align}
where we have assumed $\abs{X} \leq O(1)$ and $T$ is the depth of the circuit.

Now, to estimate the fault tolerance overhead, consider the circuit being simulated with concatenated codes with $L$ levels. We will assume that the base code of the concatenated code has distance $2 t + 1$. As is formalized in the modern proofs of the proof of thereshold theorem \cite{gottesman2024surviving, aliferis2005quantum}, given the encoded circuit with noise, a ``noisy" circuit on the logical qubits can be formally obtained by conjugating the encoded circuit with a unitary decoder for the concatenated code. A particularly elegant mathematical property of the stochastic noise model is that it remains true at any level of concatenation i.e., suppose that the physical qubits experience noise as per the stochastic adversarial noise model (Eq.~\eqref{eq:noise_model}) with $\xi=\xi_0$, then after one layer of cocatentation, the effective error model obtained on the logical qubits is also stochastic but with a reduced noise rate. Using the recursive structure of concatenated codes, it can be shown that after $L$ levels of concatenation, the noise model at the logical qubits is given by Eq.~\eqref{eq:noise_model} with $\xi = \xi_L$ given by \cite{gottesman2024surviving, aliferis2005quantum}
\[
\xi_L = \xi_\text{th}\bigg(\frac{\xi_0}{\xi_\text{th}}\bigg)^{(t + 1)^L},
\]
where $\xi_\text{th}$ is the noise threshold. Thus, if we are performing only local measurements, we obtain that the error in the measurement is
\[
\norm{q_\text{noisy} - q} \leq O\bigg(\xi_\text{th} T^{d + 1}\bigg(\frac{\xi_0}{\xi_\text{th}}\bigg)^{(t + 1)^L}\bigg).
\]
Furthermore, if the circuit was obtained from a $p^\text{th}$ order trotterization of a geometrically local Hamiltonian, we have the total error is given by
\[
O\bigg(\frac{\tau^{d + p + 1}}{T^p}\bigg) + O\bigg( T^{d + 1}\bigg(\frac{\xi_0}{\xi_\text{th}}\bigg)^{(t + 1)^L}\bigg).
\]
The optimal choice of $T = \Theta(\tau (\xi_0/\xi_\text{th})^{-(t+1)^L/(p+d+1)})$ which yields a total error
\[
O\bigg(\tau^{d+ 1}\bigg(\frac{\xi_0}{\xi_\text{th}}\bigg)^{(t + 1)^L p/(p + d+1)}\bigg).
\]
To make this error less than $\varepsilon$, we need a number of layers of concatenation $L = \Theta(\log((1+(d+1)/p) \log(\tau^{d + 1}/\varepsilon)/\log(\xi_\text{th}/\xi))/\log(t+1)) = \Theta(\log \log (\tau^{d + 1}/\varepsilon))$.

\section{Stability of stroboscopic protocols}
We will now consider the setting where we use the Floquet Magnus expansion to implement a target geometrically local Hamiltonian $H_0$ via the Floquet Magnus expansion of a periodic Hamiltonian $H(t)$. We will assume that $H(t)$ is given by
\[
H(t) = \sum_{\alpha} h_\alpha(t),
\]
where $h_\alpha(t) = h_\alpha(t+\uptau)$ is a time-dependent geometrically local operator supported on $\mathcal{S}^{(H)}_\alpha$ and that $\exists J >0$ such that $\norm{h_\alpha(t)}\leq J$. 

Reference~\cite{abanin2017effective} proposed a strategy to get a systematic expansion, as a power series of $\uptau$, of the effective stroboscopic Hamiltonian $H_\text{eff}$ satisfying $\exp(-iH_\text{eff}\uptau) = \mathcal{T}\exp(-\int_0^\uptau H(t) dt)$. For a given fixed order $p \in  \{0, 1, 2 \dots \}$, this expansion is constructed by the following formal procedure: Suppose $\Omega^{(q)}(t)$ are periodic functions of $t$ with period $\uptau$ which also satisfy $\Omega^{(q)}(0) = \Omega^{(q)}(n\uptau) = 0$ and are given recursively by
\begin{subequations}\label{eq:fm_recursion}
\begin{align}\label{eq:fm_recursion_a}
\Omega^{(q + 1)}(t) = -i\int_0^t \big(G^{(q)}(t') - \bar{G}^{(q)}\big)dt' \text{ where }\bar{G}^{(q)} = \frac{1}{\uptau}\int_0^\uptau G^{(q)}(t')dt',
\end{align}
and
\begin{align}\label{eq:fm_recursion_b}
    G^{(q)}(t) = \sum_{k = 1}^q \frac{(-1)^k}{k!} \sum_{\substack{1\leq i_1, i_2 \dots i_k \leq q \\ i_1 +i_2 \dots i_k= q}} \bigg(\prod_{\alpha = 1}^k \mathcal{C}_{\Omega^{(i_\alpha)}(t)} \bigg)H(t) + i\sum_{m = 1}^q \sum_{k=1}^{q+1-m}\frac{(-1)^{k + 1}}{(k+1)!} \sum_{\substack{1\leq i_1, i_2 \dots i_k \leq q + 1 -m \\ i_1 +i_2 \dots i_k= q+1-m}}  \bigg(\prod_{\alpha = 1}^k \mathcal{C}_{\Omega^{(i_\alpha)}(t)} \bigg)\frac{\partial}{\partial t}\Omega^{(m)}(t).
\end{align}
Then, the effective Hamiltonian $H_F^{(p)}$ at order $p$ is given by
\[
H_F^{(p)} = \sum_{q = 0}^p \frac{1}{\uptau} \int_0^\uptau {G^{(q)}(t)},
\]
\end{subequations}
where $\expect{\cdot}_t$ denotes time-averaging i.e., $\expect{f(t)}_t = \int_0^\uptau f(s) ds / \uptau$ if $f$ has period $\uptau$.
The effective Hamiltonian constructed this way is identical to the effective Hamiltonian obtained by the standard Floquet-Magnus expansion \cite{blanes2009magnus}.

We will assume that $h_\alpha(t)$ are chosen such that the $p^\text{th}$ order Floquet-Magnus expansion of $H(t)$, $H_F^{(p)}$ implements $H_0$ to $O(\uptau^p)$ i.e.
\begin{align}\label{eq:target_H}
H_F^{(p)} = \uptau^{p} H_0 + \uptau^{p + 1} V,
\end{align}
where $H_\text{target}$ is the target geometrically local Hamiltonian to be simulated and $V$ is another geometrically local Hamiltonian. Then, to simulate the dynamics of the target Hamiltonian $H_0$ for time $\tau$, $\exp(-iH_0 \tau)$, we would run the simulator Hamiltonian for time $t_\text{sim} = \tau / \uptau^p$. This would approximately yield a unitary $U_\text{sim} \approx \exp(-iH_F^{(p)} t) = \exp(-iH_0 \tau - i \uptau V \tau) $, which in the limit of $\uptau \to 0$ would yield the target dynamics.


To perform the stabiliy analysis for stroboscopic problem-to-simulator mappings, we begin by recalling the key result shown in Ref.~\cite{abanin2017effective}, where they defined a Hamiltonian $\tilde{H}(t)$ via
\begin{align}\label{eq:transformed_floquet_H}
\tilde{H}(t)&= \exp(-\Omega(t)) H(t) \exp(\Omega(t)) -i \exp(-\Omega(t)) \frac{d}{dt} \exp(\Omega(t)), \nonumber \\
&=\exp(-\Omega(t)) H(t) \exp(\Omega(t)) - i\int_0^1 \exp(-s\Omega(t)) \frac{d \Omega(t)}{dt} \exp(s\Omega(t))ds, \nonumber \\
&=\sum_{k = 0}^\infty \frac{(-1)^k}{k!} \sum_{i_1, i_2\dots i_k \leq p} \bigg(\prod_{\alpha=1}^k \mathcal{C}_{\Omega^{(i_\alpha)}(t)}\bigg) H(t) - i\sum_{k =0}^\infty \frac{(-1)^k}{(k + 1)!} \sum_{i_1, i_2\dots i_k \leq p} \bigg(\prod_{\alpha=1}^k \mathcal{C}_{\Omega^{(i_\alpha)}(t)}\bigg) \frac{d\Omega(t)}{dt}.
\end{align}
where $\Omega(t)=\sum_{q=1}^p \Omega^{(q)}(t)$. Since $\Omega(t)$ is periodic with period $\uptau$ and $\Omega(n\uptau)=0$, it follows that the untiary generated by $\tilde{H}(t)$ and $H(t)$ conicide at the stroboscopic times $t=n\uptau$ i.e.,
\begin{align}\label{eq:floquet_equality}
\mathcal{T}\exp\bigg(-i\int_0^\uptau H(t) dt\bigg) = \mathcal{T}\exp\bigg(-i\int_0^\uptau  \tilde{H}(t) dt\bigg).
\end{align}
Furthermore, we can rewrite $\tilde{H}(t)$ as
\[
\tilde{H}(t)=H_F^{(p)} + R^{(p)}(t),
\]
where $R^{(p)}(t)$ can be considered to be the remainder for the Floquet-Magnus expansion. Using the explicit expression for $\tilde{H}(t)$ in Eq.~\ref{eq:transformed_floquet_H}, $R^{(p)}(t)$ can be expressed as 
\begin{align}\label{eq:fm_remainder}
R^{(p)}(t) = \sum_{k = 1}^\infty \frac{(-1)^k}{k!} \sum_{\substack{1 \leq i_1, i_2 \dots i_k \leq p \\ i_1 + i_2 + \dots i_k \geq p + 1 }} \bigg(\prod_{\alpha = 1}^k \mathcal{C}_{\Omega^{(i_\alpha)}(t)}\bigg) H(t) - i\sum_{m=1}^p \sum_{k = 1}^\infty \frac{(-1)^k}{(k + 1)!} \sum_{\substack{1 \leq i_1, i_2 \dots i_k \leq p \\ i_1 + i_2 + \dots i_k \geq p + 2 - m }} \bigg(\prod_{\alpha = 1}^k \mathcal{C}_{\Omega^{(i_\alpha)}(t)}\bigg) \frac{d\Omega^{(m)}(t)}{dt}.
\end{align}
Building on the analysis in Ref.~\cite{abanin2017effective}, we can establish the following useful lemma about the intensive norm of the remainder $R^{(p)}(t)$ in Eq.~\eqref{eq:fm_remainder}. 
\begin{lemma}[Local norm of $R^{(p)}(t)$, modification of Ref.~\cite{abanin2017effective}]\label{lemma:remainder_floquet}
The remainder $R^{(p)}(t)$ defined in Eq.~\ref{eq:fm_remainder} satisfies $\smallnorm{R^{(p)}(t)}_\star \leq C_{p} \uptau^{p + 1}$, for some constant $C_{p}$ that depends only on $p$ but is independent of the system size $N$ and the period $\uptau$.
\end{lemma}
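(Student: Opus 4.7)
The plan is to establish the bound by controlling the intensive norms of the generators $\Omega^{(q)}(t)$ inductively in $q$, then plugging those bounds into the explicit formula for $R^{(p)}(t)$ and using the sub-multiplicativity of $\smallnorm{\cdot}_\star$ under commutators.

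First, I would prove the auxiliary claim that there exist constants $B_q$, depending only on $q$ and the local strength $J$ of $H(t)$ (and independent of $N$ and $\uptau$), such that $\smallnorm{\Omega^{(q)}(t)}_\star \leq B_q \uptau^q$ uniformly in $t \in [0,\uptau]$. The induction proceeds on $q$. The base case $q=1$ follows from Eq.~\eqref{eq:fm_recursion_a}: $\Omega^{(1)}(t) = -i\int_0^t(H(s) - \bar H)ds$ so the triangle inequality (lemma \ref{lemma:local_norm_prop}(a)) and the bound $\smallnorm{H(t)}_\star \leq J$ give $\smallnorm{\Omega^{(1)}(t)}_\star \leq 2J \uptau$. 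For the inductive step, I would use Eq.~\eqref{eq:fm_recursion_b} together with lemma \ref{lemma:local_norm_prop}(b), which gives $\smallnorm{\mathcal{C}_A B}_\star \leq 2 \smallnorm{A}_\star \smallnorm{B}_\star$. Since the $k$-fold product of commutators in $G^{(q)}(t)$ ranges over tuples $(i_1,\dots,i_k)$ with $\sum i_\alpha = q$ or $\sum i_\alpha = q+1-m$ (together with one factor of $H(t)$ or of $\partial_t \Omega^{(m)}$), each term is bounded by $2^k J \prod_\alpha B_{i_\alpha} \uptau^{\sum i_\alpha}$, which is $O(\uptau^q)$. The $1/k!$ and $1/(k+1)!$ combinatorial factors ensure the $k$-sum is majorized by an exponential series, so $\smallnorm{G^{(q)}(t)}_\star \leq A_q \uptau^q$ for some $A_q$. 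Integrating once gains another factor of $\uptau$, yielding the desired bound on $\smallnorm{\Omega^{(q+1)}(t)}_\star$. A companion bound $\smallnorm{\partial_t \Omega^{(m)}(t)}_\star \leq B'_m \uptau^{m-1}$ follows directly from $\partial_t \Omega^{(q+1)}(t) = -i(G^{(q)}(t) - \bar G^{(q)})$.

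With these bounds in hand, the main estimate is a term-by-term analysis of the remainder formula, Eq.~\eqref{eq:fm_remainder}. The first sum is controlled using lemma \ref{lemma:local_norm_prop}(b) iterated $k$ times, giving
\begin{equation*}
\bigg\|\prod_{\alpha=1}^k \mathcal{C}_{\Omega^{(i_\alpha)}(t)} H(t)\bigg\|_\star \leq 2^k J \prod_{\alpha=1}^k B_{i_\alpha} \uptau^{i_\alpha} \leq (2J)\, \uptau^{p+1} \prod_{\alpha=1}^k (2 B_{i_\alpha}),
\end{equation*}
because the constraint $\sum_\alpha i_\alpha \geq p+1$ extracts the factor $\uptau^{p+1}$. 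The second sum is handled identically, using $\smallnorm{\partial_t \Omega^{(m)}}_\star \leq B'_m \uptau^{m-1}$ and the constraint $\sum_\alpha i_\alpha \geq p+2-m$, so that the product $\uptau^{m-1}\cdot \uptau^{\sum_\alpha i_\alpha}$ is again at least $\uptau^{p+1}$. Summing the resulting bounds over all admissible $k$ and tuples, one sees that the number of tuples with $1 \leq i_\alpha \leq p$ is at most $p^k$, and the combinatorial prefactor $1/k!$ (respectively $1/(k+1)!$) dominates, so the $k$-series converges to a finite constant $C_p$ depending only on $p$, $J$ and $\max_{q\leq p} B_q$. This yields $\smallnorm{R^{(p)}(t)}_\star \leq C_p \uptau^{p+1}$.

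The main obstacle I expect is bookkeeping in the inductive step: although lemma \ref{lemma:local_norm_prop}(b) gives the clean $\smallnorm{[A,B]}_\star \leq 2\smallnorm{A}_\star\smallnorm{B}_\star$ inequality, the recursion of Eq.~\eqref{eq:fm_recursion} mixes $\Omega^{(m)}$ at all lower orders with $\partial_t \Omega^{(m)}$ in a nested fashion, and the combinatorial prefactors must be tracked carefully to extract a constant $B_q$ that is independent of $N$ and $\uptau$. The standard trick (essentially the one used in Ref.~\cite{abanin2017effective}) is to introduce a generating-function bound $\sum_q B_q z^q$ and show it satisfies a functional inequality that closes at finite radius of convergence; this immediately gives uniform bounds on all $B_q$ with $q \leq p$. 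The locality-and-differentiation interplay is handled by never actually differentiating an operator: $\partial_t \Omega^{(q)}$ is always replaced by $-i(G^{(q-1)}(t) - \bar G^{(q-1)})$ before any norm is taken, so only the intensive norm of operator-valued expressions is ever estimated, and lemma \ref{lemma:local_norm_prop} applies directly throughout.
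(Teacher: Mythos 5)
Your proposal is correct and follows essentially the same route as the paper's proof: the paper introduces constants $\Gamma_q$ with $\smallnorm{G^{(q)}(t)}_\star \leq \Gamma_q \uptau^q$, derives from Eq.~\eqref{eq:fm_recursion_a} that $\smallnorm{\Omega^{(q+1)}(t)}_\star \leq 2\Gamma_q \uptau^{q+1}$ and $\smallnorm{\partial_t\Omega^{(q+1)}(t)}_\star \leq 2\Gamma_q \uptau^q$, closes the recursion via Eq.~\eqref{eq:fm_recursion_b} and lemma~\ref{lemma:local_norm_prop}, and then bounds $\smallnorm{R^{(p)}(t)}_\star$ term by term in Eq.~\eqref{eq:fm_remainder}, extracting $\uptau^{p+1}$ from the constraints $\sum i_\alpha \geq p+1$ (resp.\ $\geq p+2-m$) and using the $1/k!$ factors and the $\leq p^k$ count of admissible tuples to converge the $k$-sum. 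Your bookkeeping, your observation that $\partial_t\Omega^{(m)}$ should always be replaced by $-i(G^{(m-1)}-\bar G^{(m-1)})$ before taking norms, and your identification of the combinatorial obstacle all match the paper's treatment.
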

\begin{proof}
We begin by bounding the local norm of $\Omega^{(q)}$ --- for this, we will use Eq.~\eqref{eq:fm_recursion}. Suppose that $\Gamma_q$ are constants such that $\smallnorm{G^{(q)}(t)}_\star = \Gamma_q \uptau^q$. Note from Eq.~\eqref{eq:fm_recursion_a}, it then follows that 
\[
\smallnorm{\bar{G}^{(q)}}_\star \leq \Gamma_q \uptau^q, \norm{\frac{d}{dt}\Omega^{(q + 1)}(t)}_\star  \leq 2\Gamma_q \uptau^q \text{ and } \smallnorm{\Omega^{(q + 1)}(t)}_\star \leq 2\Gamma_q \uptau^{q + 1}.
\]
Additionally $\Gamma_0 = \smallnorm{H}_\star \leq \mathcal{Z}$. Furthermore, from Eq.~\eqref{eq:fm_recursion_b}, we then obtain that
\[
\Gamma_q \leq \sum_{k = 1}^q \frac{2^k}{k!} \sum_{\substack{i_q, i_2 \dots i_k \geq 1 \\ i_1 + i_2  \dots + i_k = q}} \Gamma_{i_1 - 1}\Gamma_{i_2 - 1} \dots \Gamma_{i_k - 1}\Gamma_0 + \sum_{m = 1}^q \sum_{k = 1}^{q + 1 - m}\frac{2^{k + 1}}{(k + 1)!} \sum_{\substack{i_1, i_2 \dots i_k \geq 1 \\ i_1 + i_2  \dots +i_k = q+1 - m}}\Gamma_{i_1 - 1}\Gamma_{i_2 - 1} \dots \Gamma_{i_k - 1}\Gamma_{m - 1}.
\]
Together with $\Gamma_0 = \smallnorm{H}_\star$, this recursion can be solved to obtain $\Gamma_q$ which is independent of $\uptau$ and the system-size $N$. Next, we bound the intensive norm of the remainder $R^{(p)}(t)$ --- using Eq.~\ref{eq:fm_remainder}, assuming $\uptau < 1$ and defining $\tilde{\Gamma}_p = \max_{q \in \{0, 1 \dots p - 1\}} \Gamma_q$, we obtain that
\begin{align*}
\smallnorm{R^{(p)}(t)}_\star &\leq \sum_{k = 1}^\infty \frac{2^k}{k!}  \sum_{\substack{1 \leq i_1, i_2 \dots i_k \leq p \\ i_1 + i_2 + \dots i_k \geq p + 1 }} \bigg(\prod_{\alpha = 1}^k\smallnorm{\Omega^{(i_\alpha)}(t)}_\star\bigg) \smallnorm{H(t)}_\star + \sum_{m = 1}^p \sum_{k = 1}^\infty \frac{2^{k}}{(k + 1)!} \sum_{\substack{1 \leq i_1, i_2 \dots i_k \leq p \\ i_1 + i_2 + \dots i_k \geq p + 2 - m }} \bigg(\prod_{\alpha = 1}^k \smallnorm{\Omega^{(i_\alpha)}(t)}_\star\bigg)\norm{\frac{d}{dt}\Omega^{(m)}(t)}_\star, \nonumber \\
&\leq \uptau^{p + 1} \bigg[\sum_{k = 1}^\infty \frac{4^k}{k!} \Gamma_0 \sum_{\substack{1 \leq i_1, i_2 \dots i_k \leq p \\ i_1 + i_2 + \dots i_k \geq p + 1 }} \bigg(\prod_{\alpha = 1}^k \Gamma_{i_\alpha - 1}\bigg)+ \sum_{m =1}^p \sum_{k = 1}^\infty \frac{4^{k + 1}}{(k + 1)!} \sum_{\substack{1 \leq i_1, i_2 \dots i_k \leq p \\ i_1 + i_2 + \dots i_k \geq p + 1 }}  \bigg(\prod_{\alpha = 1}^k \Gamma_{i_\alpha - 1}\bigg)\Gamma_{m - 1} \bigg], \nonumber\\
&\leq \uptau^{p + 1} \bigg(\sum_{k = 1}^\infty \frac{4^k}{k!} \tilde{\Gamma}_p^{k + 1} p^k + \sum_{k = 1}^\infty \frac{4^{k + 1}}{(k + 1)!} \tilde{\Gamma}_p^{k + 1}p^{k + 1}\bigg)\nonumber\\
&\leq (\tilde{\Gamma}_p + 1) e^{4p\tilde{\Gamma}_p} \uptau^{p + 1} ,
\end{align*}
Since $\Gamma_0,\Gamma_1 \dots \Gamma_p$ are independent of $\uptau$ and the system size $N$, so is $\tilde{\Gamma}_p$ and we obtain the lemma statement.
\end{proof}
\emph{Proof of proposition 3}.  We now consider analyzing the simulator in the presence of errors. As before, we consider a model of errors described by a geometrically local Lindbladian $\mathcal{N} = \sum_\alpha \mathcal{N}_\alpha$ with $\norm{\mathcal{N}_\alpha}_\diamond \leq 1$:
\begin{align}
\frac{d}{dt}\rho(t)= \mathcal{L}_\gamma(t)\rho(t) = -i[H(t), \rho(t)] + \gamma\sum_\alpha \mathcal{N}_\alpha,
\end{align}
For a local observable $O$, we will analyze the deviation between its target expected value $\mathcal{O}_\text{target} = \textnormal{Tr}(O e^{-iH_0 \tau}\rho(0)e^{iH_0 \tau} )$ and the simulated expected value $\mathcal{O}_\textnormal{sim} =  \textnormal{Tr}[O \mathcal{E}_\gamma(t_\text{sim}, 0)\rho(0)]$ where $\mathcal{E}_\gamma(t, s) = \mathcal{T}\exp(\int_s^t \mathcal{L}_\gamma(s') ds')$ and $t_\text{sim} = \tau / \uptau^p$. We will assume that $t_\text{sim}$ is a stroboscopic time i.e., $t_\text{sim} / \uptau \in \mathbb{N} $. It will be convenient to introduce $\tilde{\mathcal{O}}^{(p)}$ given by
\[
\tilde{\mathcal{O}}^{(p)} = \text{Tr}[O_X \exp(-iH_F^{(p)}t_\text{sim})\rho(0)\exp(iH_F^{(p)}t_\text{sim})]
\]
Using the triangle inequality, we obtain that
\begin{align}\label{eq:prop_2_triangle_ineq}
\abs{\mathcal{O}_\text{target} - \mathcal{O}_\text{sim}|_{\gamma = 0}} \leq \abs{\mathcal{O}_\text{target} - \tilde{\mathcal{O}}^{(p)}} + \abs{\tilde{\mathcal{O}}^{(p)} -\mathcal{O}_\text{sim}|_{\gamma = 0}} + \abs{ \mathcal{O}_\text{sim}- \mathcal{O}_\text{sim}|_{\gamma = 0}}.
\end{align}
We first upper bound $\abs{ \mathcal{O}_\text{sim}- \mathcal{O}_\text{sim}|_{\gamma = 0}}$. Since the Lieb-Robinson velocity of $H(t)$ is $O(1)$, it follows from lemma \ref{lemma:lieb_robinson_perturbation_theory} that
\begin{align}\label{eq:prop_2_bound_1}
\abs{ \mathcal{O}_\text{sim}- \mathcal{O}_\text{sim}|_{\gamma = 0}} \leq O\bigg(\frac{\tau^{d + 1}}{T^{p(d + 1)}} \gamma\bigg).
\end{align}
Next, we consider upper bounding $\abs{\mathcal{O}_\text{target} - \tilde{\mathcal{O}}^{(p)}}$---$\mathcal{O}_\text{target}$ can be considered to be the observable obtained from the Hamiltonian $\uptau^p H_F^{(p)}$, whose Lieb-Robinson velocity is $O(\uptau^p)$ after evolution time $t_\text{sim} = \tau / \uptau^p$. Since, by design, $H_F^{(p)} = \uptau^p H_\text{target} + \uptau^{p + 1} V$ where $V$ is a geometrically local Hamiltonian, we obtain from lemma \ref{lemma:lieb_robinson_perturbation_theory} that
\begin{align}\label{eq:prop_2_bound_2}
     \abs{\mathcal{O}_\text{target} - \tilde{\mathcal{O}}^{(p)}} \leq O(\tau^{d + 1} \uptau).
\end{align}
Finally, we consider upper bounding $\abs{ \mathcal{O}_\text{sim}|_{\gamma = 0} - \tilde{\mathcal{O}}^{(p)}}$---from Eq.~\eqref{eq:floquet_equality}, it follows that
\[
\mathcal{O}_\text{sim}|_{\gamma = 0} = \text{Tr}[O \tilde{U}(t_\text{sim}, 0) \rho(0)\tilde{U}(0, t_\text{sim})] \text{ where }\tilde{U}(t, s) = \mathcal{T}\exp\bigg(-\int_s^t \tilde{H}(s')ds'\bigg),
\]
where $\tilde{H}(t)$ is the Hamiltonian defined in Eq.~\eqref{eq:transformed_floquet_H}. Since $\tilde{H}(t) = H_F^{(p)} + R^{(p)}(t)$ where $\smallnorm{R^{(p)}(t)}_\star \leq O(\uptau^{p + 1})$ (lemma \ref{lemma:remainder_floquet}) and the Lieb-Robinson velocity of $H_F^{(p)}$ is $O(\uptau^{p})$ by design, applying lemma \ref{lemma:lieb_robinson_perturbation_theory} we obtain that
\begin{align}\label{eq:prop_2_bound_3}
\abs{\mathcal{O}_\text{sim}|_{\gamma = 0}  - \tilde{\mathcal{O}}^{(p)}} \leq O(\tau^{d + 1}\uptau).
\end{align}
Using Eq.~\eqref{eq:prop_2_triangle_ineq} together with Eqs.~\eqref{eq:prop_2_bound_1}, \eqref{eq:prop_2_bound_2} and \eqref{eq:prop_2_bound_3}, we obtain that
\begin{align}
    \abs{\mathcal{O}_\text{target} - \mathcal{O}_\text{sim}}  \leq O\bigg(\frac{\tau^{d + 1}}{\uptau^{p(d + 1)}} \gamma\bigg) +  O(\tau^{d + 1}\uptau)
\end{align}
A choice of $\uptau = \Theta(\gamma^{1/(p + pd + 1)})$ asymptotically minimizes this upper bound as $\gamma, \uptau \to 0$ --- with this choice of $\uptau$, we obtain that
\[
\abs{\mathcal{O}_\text{target} - \mathcal{O}_\text{sim}} \leq O(\tau^{d + 1}\gamma^{1/(p + pd + 1)}),
\]
which establishes the proposition.

\section{Effective dynamics in the low-energy subspace}
Following Ref.~\cite{abanin2017rigorous}, we formulate a similar scheme in the low-energy subspace. Given a Hamiltonian $H_0$ and a commuting Gauge Hamiltonian $G$, we are interested in considering the Hamiltonian
\begin{align}\label{eq:pert_exp_sim_Hamil}
H = \uptau M + P,
\end{align}
where $\uptau$ is a small parameter. We will assume that $P = \sum_{\alpha} P_\alpha$, where $P_\alpha$ are  orthogonal projectors which also commute i.e., $[P_\alpha, P_\beta]= 0$. Given an operator $A$, we define
\begin{align}\label{eq:def_time_av_pert}
\mathbb{E}_t({A}) =\int_0^{1} e^{i 2\pi s P } A e^{-i 2\pi s P }ds.
\end{align}
Note that $\mathbb{E}_t({A})$ is block-diagonal on the eigen-spaces of $G$. Furthermore, if $A$ is a geometrically local operator, since $P_\alpha$ commute with each other, so is $\mathbb{E}_t(A)$. Given an order $p$, we now consider $\Omega = \Omega^{(1)} + \Omega^{(2)} + \dots + \Omega^{(2)}$, where $\Omega^{(q)} \leq O(\uptau^{q})$ and define $\tilde{H}_0$ via
\[
\uptau \tilde{M} +  P = e^{-\Omega}\big(\uptau M + P \big) e^{\Omega},
\]
To specify the choice of $\tilde{M}$ and $\Omega^{(q)}$, we first define $G^{(q)}$ via
\begin{align}\label{eq:pe_pert_exp}
G^{(0)} = M \text{ and } G^{(q)} = \sum_{k = 1}^{q} \frac{(-1)^k}{k!} \sum_{\substack{i_1, i_2 \dots i_k \geq 1 \\ i_1 + i_2 +\dots i_k = q}} \bigg(\prod_{\alpha = 1}^k \mathcal{C}_{\Omega^{(i_\alpha)}}\bigg) M + \frac{1}{\uptau}\sum_{k = 2}^{q}\frac{(-1)^k}{k!}\sum_{\substack{i_1, i_2 \dots i_k \geq 1 \\ i_1 +i_2 + \dots i_k=q+1}}\bigg(\prod_{\alpha = 1}^k \mathcal{C}_{\Omega^{(i_\alpha)}}\bigg) P.
\end{align}
We point out that $\uptau G^{(q)} - [\Omega^{(q + 1)}, P]$ can be formally considered to be the $O(\uptau^{q +1})$ term in the expression $e^{-\Omega}(\uptau H_0 + P)e^{\Omega}$. Now, we define
\[
\tilde{M}^{(p)} = \sum_{q = 0}^{p} \mathbb{E}_t(G^{(q)}),
\]
and we choose $\Omega^{(q)}$, for $q \geq 1$, via
\[
[\Omega^{(q)}, P] =\uptau (G^{(q-1)} - \mathbb{E}_t(G^{(q - 1)})). 
\]
We note that this equation can be solved for $\Omega^{(q)}$ to obtain
\begin{align}\label{eq:pe_omega}
\Omega^{(q)} =i\uptau \int_0^{1}\int_0^{s_1} e^{i2\pi s_2 P} \big(G^{(q-1)} - \mathbb{E}_t(G^{(q - 1)}) \big)e^{-i2\pi s_2 P} ds_2 ds_1.
\end{align}
We can now express $\tilde{H}_0$ as a sum of $\tilde{H}^{(p)}_0$ and a remainder $R^{(p)}$:
\[
\tilde{M} = \tilde{M}^{(p)} + R^{(p)},
\]
where $R^{(p)}$ can be expressed as the following infinite sum
\begin{align}\label{eq:remainder_time_ind}
R^{(p)} = \sum_{k = 1}^\infty \frac{(-1)^k}{k!} \sum_{\substack{i_1, i_2 \dots i_k \geq 1 \\ i_1 + i_2 + \dots i_k \geq p + 1 }}\bigg(\prod_{\alpha = 1}^k \mathcal{C}_{\Omega^{(i_\alpha)}}\bigg) M  + \frac{1}{\uptau} \sum_{k = 2}^\infty \frac{(-1)^k}{k!} \sum_{\substack{i_1, i_2 \dots i_k \geq 1 \\ i_1 + i_2 + \dots i_k \geq p + 2 }} \bigg(\prod_{\alpha = 1}^k \mathcal{C}_{\Omega^{(i_\alpha)}}\bigg) P.
\end{align}
We will now establish two useful results --- \emph{first}, we will show that $\Omega^{(q)}$ is a quasi-local operator with bounded norm we will establish that $R^{(p)}$ is a quasi-local operator with a local norm 
\begin{lemma}\label{lemma:time_ind_proof_Omega}
    For $q \in \{1, 2 \dots p\}$, $\smallnorm{\Omega^{(q)}}_\star \leq W_{p} \uptau^q$, where $W_{p}$ is a constant that is independent of $\uptau$ as well as the system size $N$.
\end{lemma}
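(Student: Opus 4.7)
The plan is to induct strongly on $q$: introduce $\gamma_q := \uptau^{-q}\smallnorm{\Omega^{(q)}}_\star$ and extract from Eqs.~\eqref{eq:pe_pert_exp} and \eqref{eq:pe_omega} a recursion for $\gamma_q$ whose coefficients depend only on $q$, $\smallnorm{M}_\star$, $\smallnorm{P}_\star$, and a geometric Lipschitz constant (defined below), but not on $\uptau$ or on the system size $N$. Iterating the recursion from $q=1$ up to $q=p$ produces finite numbers $\gamma_1, \ldots, \gamma_p$, and one takes $W_p := \max_{q \leq p} \gamma_q$. The key structural observation that makes the induction close is that the second sum in Eq.~\eqref{eq:pe_pert_exp} runs over compositions $i_1 + \dots + i_k = q+1$ with $k \geq 2$ parts, so every $i_\alpha \leq q$ (only previously-bounded indices appear) and the overall $\uptau^{q+1}$ weight coming from the $\Omega^{(i_\alpha)}$'s exactly cancels the explicit $1/\uptau$ factor, matching the $\uptau^q$ scaling of the first sum.

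The first auxiliary step is a locality-preservation bound for the conjugation $A \mapsto e^{i2\pi s P} A e^{-i2\pi s P}$. Since the $P_\alpha$'s commute, $e^{i2\pi s P} = \prod_\alpha e^{i2\pi s P_\alpha}$, and for any operator $A$ only those factors whose support meets $\textnormal{supp}(A)$ act nontrivially. Hence $\textnormal{supp}(e^{i2\pi sP}Ae^{-i2\pi sP}) \subseteq \{x : d(x,\textnormal{supp}(A)) \leq r_0\}$ with $r_0 := \max_\alpha \textnormal{diam}(P_\alpha)$, and lemma \ref{lemma:local_norm_prop}(c) gives $\smallnorm{e^{i2\pi sP}Ae^{-i2\pi sP}}_\star \leq L\smallnorm{A}_\star$ for $L := (2r_0)^d$; by averaging over $s$, the same bound holds for $\mathbb{E}_t$.

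Second, assuming the inductive hypothesis $\gamma_i \leq \bar\gamma_{q-1}$ for all $i \leq q-1$, I would bound $\smallnorm{G^{(q-1)}}_\star$ using Eq.~\eqref{eq:pe_pert_exp} together with iterated application of lemma \ref{lemma:local_norm_prop}(b), which gives
\[
\smallnorm{\mathcal{C}_{\Omega^{(i_1)}}\cdots \mathcal{C}_{\Omega^{(i_k)}} X}_\star \leq 2^k \bar\gamma_{q-1}^k \uptau^{i_1+\dots+i_k}\smallnorm{X}_\star
\]
for $X \in \{M, P\}$. Counting compositions, summing over $k$, and combining the two sums in Eq.~\eqref{eq:pe_pert_exp} then yields $\smallnorm{G^{(q-1)}}_\star \leq \uptau^{q-1}\Phi_q(\bar\gamma_{q-1})$ for an explicit polynomial $\Phi_q$ of degree at most $q$ whose coefficients depend only on $q$, $\smallnorm{M}_\star$, and $\smallnorm{P}_\star$. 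Feeding this into Eq.~\eqref{eq:pe_omega} and applying the Lipschitz bound from the previous paragraph gives $\gamma_q \leq L(1+L)\Phi_q(\bar\gamma_{q-1})$, closing the induction. The base case $q=1$ is immediate because $G^{(0)} = M$, giving $\gamma_1 \leq L(1+L)\smallnorm{M}_\star$.

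The main obstacle is the combinatorial bookkeeping in the inductive step, specifically verifying cleanly that the constraint $i_\alpha \leq q-1$ is automatic for every term appearing in $G^{(q-1)}$ (so that only the strong inductive hypothesis is invoked) and that the explicit $1/\uptau$ in Eq.~\eqref{eq:pe_pert_exp} is always compensated by a $\uptau$ coming from the commutators. Beyond this, the proof is a mechanical adaptation of the analogous Magnus-style argument of Ref.~\cite{abanin2017rigorous}, with the intensive norm $\smallnorm{\cdot}_\star$ and lemma \ref{lemma:local_norm_prop} of the present paper playing the role of their nested local norms.
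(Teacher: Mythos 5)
Your proof is correct and takes essentially the same approach as the paper's: both set up a $\uptau$-normalized recursion (you track $\gamma_q = \uptau^{-q}\smallnorm{\Omega^{(q)}}_\star$, the paper tracks $\Gamma_q = \uptau^{-q}\smallnorm{G^{(q)}}_\star$, which are interchangeable via Eq.~\eqref{eq:pe_omega}) and close it by combining lemma \ref{lemma:local_norm_prop}(b) for the nested commutators with lemma \ref{lemma:local_norm_prop}(c) for the $e^{i2\pi sP}$-conjugations, using the observation that the $1/\uptau$ in the $P$-term of Eq.~\eqref{eq:pe_pert_exp} is exactly absorbed by the $\uptau^{\sum i_\alpha}$ coming from the $\Omega^{(i_\alpha)}$'s. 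The only differences are cosmetic bookkeeping: which intermediate quantity is named, and whether the conjugation constant is written as $L(1+L)$ or $w=(2r_0)^{2d}$.
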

\begin{proof}
 We introduce $\Gamma_q$ such that $\smallnorm{G^{(q)}}_\star = \Gamma_q \uptau^q$. It follows from lemma \ref{lemma:local_norm_prop}(c) together with Eqs.~\eqref{eq:def_time_av_pert} and \eqref{eq:pe_omega} that
 \begin{align}\label{eq:time_avg_upper_bounds}
 \smallnorm{\mathbb{E}_t(G^{(q - 1)})}_\star \leq \sqrt{w} \Gamma_{q - 1}\uptau^{q - 1} \text{ and }\smallnorm{\Omega^{(q)}}_\star \leq w\Gamma_{q - 1}\uptau^{q},
 \end{align}
 where $w = (2r_0)^{2d} $. Finally, we use Eq.~\eqref{eq:pe_pert_exp} to obtain that
 \begin{align}\label{eq:gamma_ineq}
 \Gamma_q \leq \sum_{k = 1}^q \frac{(2w)^k}{k!}\Gamma_0 \sum_{\substack{i_1, i_2 \dots i_k \geq 1\\ i_1+i_2 +\dots i_k= q}
 }\bigg(\prod_{\alpha=1}^k \Gamma_{i_\alpha - 1}\bigg) + \sum_{k=2}^q \frac{(2w)^k}{k!} \sum_{\substack{i_1, i_2 \dots i_k \geq 1\\ i_1+i_2 +\dots i_k= q+1}}\bigg(\prod_{\alpha = 1}^k \Gamma_{i_\alpha-1}\bigg) \smallnorm{P}_\star.
 \end{align}
 Together with $\Gamma_0  = \smallnorm{M}_\star$, these inequalities can be solved to obtain $\Gamma_1, \Gamma_2 \dots \Gamma_p$ --- note that since Eq.~\eqref{eq:gamma_ineq} is independent of the system size $N$ as well as the parameter $\uptau$, we conclude that $\Gamma_1, \Gamma_2 \dots \Gamma_p$ are independent of $N$ and $\uptau$. Finally, from Eq.~\eqref{eq:time_avg_upper_bounds}, we obtain that $\smallnorm{\Omega^{(q)}}_\star \leq W_{p}\uptau^q$, where $W_{p} = w\max_{q \in \{0, 1, 2 \dots p - 1\}} \Gamma_q$.
\end{proof}
\begin{lemma}\label{lemma:remainder_bound_local_pert}
    For $p \in \{1, 2 \dots \}$ and $\uptau < 1$, $\smallnorm{\Omega}_\star \leq \tilde{W}_{p} \uptau $ and $\smallnorm{R^{(p)}}_\star \leq C_{p} \uptau^{p + 1} $, where $\tilde{W}_{p},C_{p} > 0$ are constants independent of $\uptau$ and the system size $N$.
\end{lemma}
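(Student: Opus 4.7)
The bound on $\smallnorm{\Omega}_\star$ is an immediate consequence of Lemma \ref{lemma:time_ind_proof_Omega} combined with part (a) of Lemma \ref{lemma:local_norm_prop}: since $\Omega = \sum_{q=1}^{p} \Omega^{(q)}$ and $\smallnorm{\Omega^{(q)}}_\star \leq W_p \uptau^q$, the triangle inequality gives $\smallnorm{\Omega}_\star \leq W_p \sum_{q=1}^{p} \uptau^q \leq p W_p \uptau$ for $\uptau < 1$, so we can take $\tilde{W}_p = p W_p$.

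For the remainder bound, the plan is to start from the explicit series representation \eqref{eq:remainder_time_ind} and apply the two available tools term by term. Specifically, I would use part (b) of Lemma \ref{lemma:local_norm_prop} iteratively to bound each nested commutator: since $\smallnorm{\mathcal{C}_{\Omega^{(i_\alpha)}} X}_\star \leq 2 \smallnorm{\Omega^{(i_\alpha)}}_\star \smallnorm{X}_\star$, iterating $k$ times yields
\begin{equation*}
\norm{\bigg(\prod_{\alpha = 1}^k \mathcal{C}_{\Omega^{(i_\alpha)}}\bigg) X}_\star \leq 2^k \smallnorm{X}_\star \prod_{\alpha=1}^{k} \smallnorm{\Omega^{(i_\alpha)}}_\star \leq (2W_p)^k \smallnorm{X}_\star \uptau^{i_1 + i_2 + \dots + i_k},
\end{equation*}
using Lemma \ref{lemma:time_ind_proof_Omega} at the last step. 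Each $i_\alpha$ is implicitly bounded above by $p$ because only $\Omega^{(1)},\dots,\Omega^{(p)}$ are defined.

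Now I would substitute this into each of the two sums in \eqref{eq:remainder_time_ind} and use the constraint on $i_1+\dots+i_k$ to extract the target power of $\uptau$. In the first sum, every multi-index satisfies $i_1+\dots+i_k \geq p+1$ and $\uptau < 1$, so $\uptau^{i_1+\dots+i_k} \leq \uptau^{p+1}$; the number of valid multi-indices with fixed $k$ is trivially at most $p^k$. This gives
\begin{equation*}
\smallnorm{R_1^{(p)}}_\star \leq \smallnorm{M}_\star \uptau^{p+1} \sum_{k=1}^{\infty} \frac{(2 p W_p)^k}{k!} \leq \smallnorm{M}_\star e^{2 p W_p} \uptau^{p+1}.
\end{equation*}
The same argument applied to the second sum, with the sharper constraint $i_1+\dots+i_k \geq p+2$, produces an extra factor of $\uptau$ that cancels the overall $1/\uptau$ prefactor, leaving a bound of the same form. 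Summing the two contributions gives $\smallnorm{R^{(p)}}_\star \leq C_p \uptau^{p+1}$ with $C_p = (\smallnorm{M}_\star + \smallnorm{P}_\star) e^{2 p W_p}$, which is independent of $\uptau$ and $N$ because $W_p$, $\smallnorm{M}_\star$ and $\smallnorm{P}_\star$ all are.

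The only mild subtlety is verifying that the counting bound $p^k$ on the number of valid $k$-tuples, together with the $1/k!$ factor, is enough to make the infinite series over $k$ absolutely convergent. This is exactly what the exponential series provides, so no further combinatorial work is required; had the commutator bound of Lemma \ref{lemma:local_norm_prop}(b) carried a geometric-factor overhead instead of a constant factor of $2$, the analysis would be considerably more delicate. As written, the combinatorial control is routine and the main content of the lemma is really contained in Lemma \ref{lemma:time_ind_proof_Omega}.
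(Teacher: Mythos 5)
Your proof is correct and follows essentially the same route as the paper's: the $\Omega$ bound by the triangle inequality applied to Lemma~\ref{lemma:time_ind_proof_Omega}, and the $R^{(p)}$ bound by iterating Lemma~\ref{lemma:local_norm_prop}(b) across the nested commutators in Eq.~\eqref{eq:remainder_time_ind}, using $\uptau<1$ to collapse $\uptau^{i_1+\cdots+i_k}$ to $\uptau^{p+1}$ (or $\uptau^{p+2}$ in the $P$-sum, cancelling the $1/\uptau$ prefactor), counting multi-indices by $p^k$, and summing the exponential series in $k$. The only differences from the paper are cosmetic choices of the constants $\tilde{W}_p$ and $C_p$.
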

\begin{proof}
    Using lemma \ref{lemma:local_norm_prop}(a) together with lemma \ref{lemma:time_ind_proof_Omega}, we immediately obtain that
    \[
    \smallnorm{\Omega}_\star \leq \sum_{q = 1}^p \smallnorm{\Omega^{(q)}}_\star \leq \sum_{q = 1}^p W_{p}\uptau^q \leq \tilde{W}_{p} \uptau,
     \]
     where $\tilde{W}_{p} = \sum_{q = 1}^p W_{q}$ is a constant independent of both $N$ and $\uptau$. 

     Next, we consider bounding the local norm of $R^{(p)}$ defined in Eq.~\eqref{eq:remainder_time_ind}. Using lemma \ref{lemma:local_norm_prop} and defining $\tilde{\Gamma}_p = \max \{\smallnorm{P}_\star, \smallnorm{M_0}_{\star}, W_{1}, W_{2} \dots W_{p}\}$, we obtain
     \begin{align*}
         \smallnorm{R^{(p)}}_\star &\leq \sum_{k = 1}^\infty \frac{2^k}{k!}  \sum_{\substack{1\leq i_1, i_2 \dots i_k \leq p \\ i_1 + i_2 + \dots i_k \geq p + 1 }}\bigg(\prod_{\alpha = 1}^k  \smallnorm{\Omega^{(i_\alpha)}}_\star\bigg)  \smallnorm{M}_\star + \frac{1}{\uptau} \sum_{k = 2}^\infty \frac{2^k}{k!} \sum_{\substack{1\leq i_1, i_2 \dots i_k \leq p \\ i_1 + i_2 + \dots i_k \geq p + 2 }} \bigg(\prod_{\alpha = 1}^k \norm{\Omega^{(i_\alpha)}}_\star\bigg) \norm{P}_\star, \nonumber \\
         &\leq \uptau^{p + 1}\bigg(\sum_{k = 1}^\infty \frac{2^k}{k!} \tilde{\Gamma}_p^{k + 1} p^k + \sum_{k = 2}^\infty \frac{2^k}{k!} \tilde{\Gamma}_p^{k + 1} p^k\bigg), \nonumber \\
         &\leq 2 \tilde{\Gamma}_p e^{2p \tilde{\Gamma}_p} \uptau^{p + 1}.
     \end{align*}
     Since $\norm{P}_\star, \norm{M}_\star, W_{1}, W_{2} \dots W_{p}$ is independent of $N$ and $\uptau$, so is $\tilde{\Gamma}_p$. This establishes the bound on $\smallnorm{R^{(p)}}_\star$ in the lemma statement.
\end{proof}

\emph{Proof of proposition 4}. We consider the problem of quantum simulation of a geometrically local target Hamiltonian $H_0$ via a $p^\text{th}$ order perturbative expansion. As described in the main text, we will assume that the Hamiltonian $M$ and $P$, which determine the Hamiltonian implemented on the simulator (Eq.~\eqref{eq:pert_exp_sim_Hamil}), are designed in such a way that $\tilde{M}^{(p)}$ satisfy
\begin{align}\label{eq:pert_exp_design}
\tilde{M}^{(p)} = \uptau^p H_0 + \uptau^{p + 1}V,
\end{align}
where $V$ is another geometrically local Hamiltonian. To simulate the dynamics of ${H}_0$ for time $\tau$, we will need to run the simulator for time $t_\text{sim} = \tau / \uptau^{p + 1}$. As introduced in the main text, the dynamics of the simulator, in the presence of noise, will be described by the master equation
\[
\frac{d}{dt}\rho(t) = \mathcal{L}_\gamma \rho(t) \text{ where }\mathcal{L}_\gamma = -i[H, \cdot] + \gamma\sum_\alpha \mathcal{N}_\alpha,
\]
where $\mathcal{N}_\alpha$ is a geometrically local Lindbladian with $\norm{\mathcal{N}_\alpha}_\diamond \leq 1$. We will now consider a local observable $O$ and an initial state $\rho(0)$---we will bound the error between the simulated observable $\mathcal{O}_\text{sim} = \text{Tr}(O \mathcal{E}_\gamma(t_\text{sim}, 0) \rho(0))$, where $\mathcal{E}_\gamma(t, s) = \exp(\mathcal{L}_\gamma (t - s))$, and the target observable $O_\text{target} = \text{Tr}(O_X U_\text{target}(\tau, 0) \rho(0) U_0(0, \tau))$, where $U_0(\tau, \sigma) = \exp(-iH_0(\tau - \sigma))$. We will introduce $\tilde{O}^{(p)}$ and $\tilde{Q}^{(p)}$ via
\begin{align*}
    &\tilde{O}^{(p)} = \text{Tr}[O \exp(-i\uptau \tilde{M}^{(p)} t_\text{sim}) \rho(0)\exp(i\uptau \tilde{M}^{(p)} t_\text{sim})], \nonumber\\
    &\tilde{Q}^{(p)} = \text{Tr}[O \exp(-i(\uptau \tilde{M}^{(p)} + \uptau R^{(p)} + P)t_\text{sim})\rho(0) \exp(i(\uptau \tilde{M}^{(p)} + \uptau R^{(p)} + P)t_\text{sim})]
\end{align*}
We first use the triangle inequality to obtain that
    \begin{align}\label{eq:prop_3_triangle_ineq}
        \abs{\mathcal{O}_\text{sim} - \mathcal{O}_\text{target}} \leq \abs{\mathcal{O}_\text{sim} - \mathcal{O}_\text{sim} |_{\gamma = 0}} + \abs{\mathcal{O}_\text{sim}|_{\gamma = 0} - \tilde{O}^{(p)}}+\abs{\tilde{O}^{(p)} - \mathcal{O}_\text{target}}.
    \end{align}
Consider first upper bounding $\abs{\mathcal{O}_\text{sim} - \mathcal{O}_\text{sim}|_{\gamma = 0}}$ using lemma \ref{lemma:lieb_robinson_perturbation_theory}. We note that since $P$, by assumption, is a commuting geometrically local Hamiltonian, the Lieb-Robinson velocity of $H$ in Eq.~\eqref{eq:pert_exp_sim_Hamil} is $O(\uptau)$. Therefore, from lemma \ref{lemma:lieb_robinson_perturbation_theory}, we obtain that
\begin{align}\label{eq:prop_3_bound_1}
\abs{\mathcal{O}_\text{sim} - \mathcal{O}_\text{sim}|_{\gamma = 0}} \leq O(\gamma t_\text{sim} (\uptau t_\text{sim})^{d}) \leq O\bigg(\tau^{d + 1}\frac{\gamma}{\uptau^{p + 1 + pd}}\bigg). 
\end{align}
Another application of lemma \ref{lemma:lieb_robinson_perturbation_theory} allows us to upper bound $\abs{\tilde{\mathcal{O}}^{(p)} - \mathcal{O}_\text{target}}$. Note that $\mathcal{O}_\text{target}$ can be considered to be the observable obtained under evolution with the Hamiltonian $\uptau^{p + 1}H_0$, which has a Lieb-Robinson velocity of $O(\uptau^{p + 1})$, for time $t_\text{sim} = \tau / \uptau^{p + 1}$. Furthermore, from Eq.~\eqref{eq:pert_exp_design}, $\uptau\tilde{M}^{(p)} - \uptau^{p + 1}H_0 = \uptau^{p+ 2}V$ for a geometrically local Hamiltonian $V$ and thus we obtain that $\smallnorm{\uptau \tilde{H}_0^{(p)} - \uptau^{p + 1}H_\text{target}}_\star \leq O(\uptau^{p + 2})$. Therefore, from lemma \ref{lemma:lieb_robinson_perturbation_theory}, we obtain that
\begin{align}\label{eq:prop_3_bound_2}
\abs{\tilde{\mathcal{O}}^{(p)} - \mathcal{O}_\text{target}} \leq O(\tau^{d + 1} \uptau).
\end{align}
Finally, we consider upper bounding $\abs{\mathcal{O}_\text{sim}|_{\gamma = 0} - \tilde{\mathcal{O}}^{(p)}}$. We first note that, by construction, $[\tilde{M}^{(p)}, P] = 0$ and by assumption, $[O, P] = 0$. Thus, we obtain that
\[
\tilde{\mathcal{O}}^{(p)} = \text{Tr}\big[O \exp(-i(\uptau \tilde{M}^{(p)} + P) t_\text{sim}) \rho(0) \exp(i(\uptau \tilde{M}^{(p)} + P)t_\text{sim})\big].
\]
Recalling that $H = e^\Omega(\uptau \tilde{M}+P)e^{-\Omega} = e^\Omega (\uptau \tilde{M}^{(p)} + P + \uptau R^{(p)})e^{-\Omega}$, we use the triangle inequality to obtain:
\begin{align}\label{eq:prop_3_bound_3_triangle}
    \abs{\mathcal{O}_\text{sim}(0) - \tilde{O}^{(p)}} \leq \smallnorm{e^{-\Omega} O_X e^{\Omega} - O_X} + \abs{\tilde{O}^{(p)} - \tilde{Q}^{(p)}}.
\end{align}
To bound $\abs{\tilde{O}^{(p)} - \tilde{Q}^{(p)}}$, we note that the Lieb-Robinson velocity of $\uptau \tilde{M}^{(p)} + P$ is $O(\uptau^{p + 1})$ and, from lemma \ref{lemma:remainder_bound_local_pert}, $\smallnorm{\uptau R^{(p)}}_\star \leq O(\uptau^{p + 2})$. Using lemma \ref{lemma:lieb_robinson_perturbation_theory}, we then obtain that
\begin{align}\label{eq:eigenvalue_pert_error}
\abs{\tilde{\mathcal{O}}^{(p)} - \tilde{\mathcal{Q}}^{(p)}} \leq O(\tau^{d + 1} \uptau).
\end{align}
Furthermore, we note that 
\[
\smallnorm{e^{-\Omega} O e^{\Omega} - O} = \norm{\int_0^1 e^{-s\Omega}[\Omega, O]e^{s\Omega}ds} \leq \norm{[\Omega, O]},
\]
where we have used the fact that $\Omega$ is an anti-Hermitian operator. Expressing $\Omega$ as a sum of local terms, $\Omega = \sum_\alpha \Omega_\alpha$, we then obtain
\begin{align}\label{eq:eigenvector_rotation_error}
    \smallnorm{e^{-\Omega} O e^{\Omega} - O}  \leq 2\norm{O} \sum_{\alpha:\text{supp}(\Omega_\alpha)\cap \text{supp}(O) \neq \emptyset} \norm{\Omega_\alpha} \leq 2\norm{O} \sum_{x \in \text{supp}(O)} \sum_{\alpha: \text{supp}(\Omega_\alpha)\ni x}\norm{\Omega_\alpha} \leq 2 \norm{O}\abs{X} \norm{\Omega}_\star \leq O(\uptau),
\end{align}
where we have used lemma \ref{lemma:remainder_bound_local_pert} to set $\norm{\Omega}_\star\leq O(\uptau)$. Using Eqs.~\eqref{eq:prop_3_bound_3_triangle}, \eqref{eq:eigenvalue_pert_error} and \eqref{eq:eigenvector_rotation_error}, we obtain that 
\begin{align}\label{eq:prop_3_bound_3}
    \abs{\mathcal{O}_\text{sim} - \tilde{\mathcal{O}}^{(p)}} \leq O(\uptau) + O( \tau^{d + 1}\uptau) \leq O(\tau^{d + 1}\uptau)
\end{align}
From Eqs.~\eqref{eq:prop_3_bound_1}, \eqref{eq:prop_3_bound_2} and \eqref{eq:prop_3_bound_3} together with Eq.~\eqref{eq:prop_3_triangle_ineq}, we then obtain that
\[
\abs{\mathcal{O}_\text{sim} - \mathcal{O}_\text{target}} \leq O\bigg(\tau^{d + 1}\frac{\gamma}{\uptau^{p + 1 + pd}}\bigg) + O(\tau^{d + 1}\uptau).
\]
A choice of $\uptau = \Theta(\gamma^{1/(p + pd + 2)})$ asymptotically minimizes this upper bound as $\gamma, \uptau \to 0$ --- with this choice of $\uptau$, we obtain that
\[
\abs{\mathcal{O}_\text{sim}- \mathcal{O}_\text{target}} \leq O(\tau^{d + 1}\gamma^{1/(p + pd + 2)}),
\]
which establishes the proposition.
\end{document}